\documentclass[oneside,11pt,a4paper]{article}%
\usepackage{amsfonts,amsmath,amssymb,amsthm}
\usepackage[authoryear, round]{natbib}
\usepackage{graphicx} 
\usepackage{xcolor} 
\usepackage{tikz}
\usepackage{caption}
\usepackage{subcaption}
\usepackage{makecell}

\setlength{\topmargin}{-5mm}
\addtolength{\oddsidemargin}{-10mm}
\setlength{\textwidth}{15cm}
\setlength{\textheight}{23cm}
\addtolength{\footskip}{5mm}
\setlength{\parskip}{4pt plus 2pt minus 1pt}

\newtheorem{Assumption}{Assumption}
\newtheorem{Proposition}{Proposition}
\newtheorem{Lemma}{Lemma}

\newtheorem{Definition}{Definition}

\newtheorem{Remark}{Remark}
\newtheorem{Example}{Example}

\newcommand{\CE}{\textnormal{CE}}
\renewcommand{\S}{\mathcal{S}}
\hyphenation{Apes-te-guia}
\hyphenation{Bal-les-ter}
\hyphenation{Wil-cox}
\hyphenation{sto-chas-tic}
\hyphenation{He-te-ro-ge-neity}

\title{Stochastic Monotonicity and Random Utility Models: The Good and The Ugly\footnote{Tilburg University, Department of Econometrics and OR, PO box 90153, NL-5000 LE Tilburg, The Netherlands, email: \{h.r.f.keffert, n.f.f.schweizer\}@uvt.nl. We thank Jaap Abbring, Sebastian Olschewski, Bettina Siflinger and Arthur van Soest for very helpful comments and discussions.}\\$ $\\}
\author{Henk Keffert and Nikolaus Schweizer}

\date{September 2024}

\begin{document}

\maketitle

\begin{abstract}
When it comes to structural estimation of risk preferences from data on choices, random utility models have long been one of the standard research tools in economics.   A recent literature has challenged these models, pointing out some concerning monotonicity and, thus, identification problems. In this paper, we take a second look and point out that some of the  criticism -- while extremely valid -- may have gone too far, demanding monotonicity of choice probabilities in decisions where it is not so clear whether it should be imposed.  We introduce a new class of random utility models based on carefully constructed generalized risk premia which always satisfy our relaxed monotonicity criteria. Moreover, we show that \textit{some} of the models used in applied research like the certainty-equivalent-based random utility model for CARA utility actually lie in this class of monotonic stochastic choice models. We conclude that not all random utility models are bad.\par \bigskip 
\noindent \textbf{JEL codes:} C25, D81 \par \bigskip
\noindent \textbf{Keywords:} risk preferences, preference elicitation, random utility model, stochastic choice

\end{abstract}

\newpage
\section{Introduction}
Random utility models (RUMs) are a key component of the applied econometrician's toolkit when the task is to elicit preferences from observed behavior.\footnote{See the Nobel lecture \citet{mcfadden2001economic} for the big picture and \citet{train2009} for a textbook treatment.} The basic idea is that an agent's choices depend on both the agent's preferences for the differences options and on stochastic taste shocks that distort those preferences. The probability that the agent chooses a given option is then just the probability that the distorted utility is larger for this option than for the others. In this paper, we are specifically interested in RUMs as a tool for eliciting risk preferences. By assessing an agent's probability of choosing a safer over a riskier option, one tries to get a sense of how risk-averse the agent is. In this context, RUMs have recently been criticized for exhibiting violations of stochastic monotonicity, e.g. by \citet{Wilcox} and \citet{AB}. The concern here is that in some natural examples the probability of choosing the safer option is not monotonically increasing in the degree of risk aversion so that trying to identify and estimate risk preferences based on these models could be problematic.\footnote{See \citet{wilcox2021utility} for a recent survey of this literature, \citet{bellemare2023estimation} for a textbook treatment of some of its main lessons up to now and \citet{barseghyan2018estimating} for a recent survey of the broader picture. } 

In this paper, we argue that even though this criticism is extremely valid, it may have gone too far in the sense that the monotonicity demands placed on stochastic choice models by this literature may have been too ambitious. For illustration, consider how \citet{Wilcox} motivates his notion of stochastic monotonicity: From \citet{pratt1964}, we know  that an agent with a larger coefficient of absolute risk aversion is more risk averse. From \citet{RS}, we know  that all risk-averse agents prefer the less risky option in any mean-preserving spread (MPS). Thus, in a well-behaved stochastic choice model  the probability of choosing the less risky option in an MPS should increase in the degree of absolute risk aversion. This argument relies on two implicit assumptions: First, it assumes that \citet{pratt1964}'s results about choice between a safe and a risky option extend in the right way to choice between two risky options. Second, it assumes that the strength of the preference for the safer of the  two options increases with the degree of risk aversion. In a similar way, \citet{AB} postulate that choice probabilities should be increasing in the degree of risk aversion for any pair of lotteries for which all agents above a given threshold of risk aversion pick the safer option while all agents below the threshold pick the riskier one. Effectively, this assumes that the problem is so good-natured that a single-crossing property at the level of preferences suffices to guarantee a monotonicity property at the level of choice probabilities. Again, this implicit assumption relies on the intuition that the strength of the preference for the safer option should be increasing in the degree of risk aversion. 

These implicit assumptions are less innocent than they may seem. As was first shown in the literature on choice under background risk \citep{kihlstrom1981,ross1981}, extending the insights of \citet{pratt1964} to choice between two risky lotteries is less straightforward than one may hope. In particular, this literature showed that there exist MPS for which the strength of the preference for the safer option is not monotonic in the coefficient of risk aversion of the underlying utility function. To quantify the strength of the preference, these papers consider risk premia $\pi$ which are the monetary amounts that need to be subtracted from the less risky option to make the agent indifferent between the two options. They showed that these risk premia exhibit non-monotonic behavior in the degree of risk aversion for some MPS. 

For the intuition, start with some risky lottery $Y$. Add a little bit of mean-zero risk to the most favorable outcomes in $Y$ while keeping the least favorable outcome as it is. Call the resulting lottery $X$.\footnote{For an illustration, see Figure \ref{fig3}. The formal definition of ``a little bit of mean-zero risk'' is that after adding it the least-favorable outcomes in $X$ and $Y$ are still identical.} The pair $(X,Y)$ is an MPS. Risk-loving agents will prefer $X$ while risk-neutral agents are indifferent and risk-averse agents prefer $Y$. However, consider an agent who is extremely risk-averse and thus evaluates gambles almost exclusively by their least-favorable outcome. This agent will be close to indifferent between $X$ and $Y$ because the  least-favorable outcomes are the same. For MPS like this one, the strongest preference for the safer gamble $Y$ occurs for risk-averse agents with moderate levels of risk aversion who are still risk-tolerant enough to care about the details of the upside of a gamble. 

By the monotonicity criteria of \citet{Wilcox} or \citet{AB}, a stochastic choice model which does not exhibit monotonic choice probabilities for \textit{all} MPS should be avoided.  This includes many RUMs used in applied work such as those that add stochastic shocks to expected utilities or certainty equivalents of lotteries. In our view, this is too harsh, given that the strength of the risk preference as measured by the premia $\pi$ is not monotonic in risk aversion in general. In examples like the one we just discussed, the problem lies in the pair $(X,Y)$ which is simply not a great pair of lotteries for assessing the degree of risk aversion -- rather than in the stochastic choice model. In particular, any decision criterion that postulates monotonicity of choice probabilities in examples where the strength of the preference vanishes as risk aversion grows only leaves room for ordinal models that look at the sign of the preference but not its strength. In this paper, we show that by suitably relaxing the monotonicity conditions, giving up -- among others -- the monotonicity requirement for \textit{all} MPS, we create room for cardinal stochastic choice models that satisfy stochastic monotonicity while still taking into account the strength of the preference and not just its sign. 

We thus propose a new, weaker notion of stochastic monotonicity. We say that a stochastic choice model is $\Pi$-monotonic if choice probabilities are monotonic in the degree of risk aversion for all $\Pi$-ordered pairs of lotteries. Here, a pair of lotteries is $\Pi$-ordered if a suitably defined, generalized risk premium $\pi$ is monotonic in the degree of risk aversion.\footnote{This definition is parallel to \citet{AB}'s notion of stochastic monotonicity which postulates monotonicity for all $\Omega$-ordered pairs of lotteries. These are lotteries for which $\pi$ switches signs at most once, replacing our monotonicity condition by a weaker single-crossing condition. In particular, every $\Pi$-ordered pair is $\Omega$-ordered and $\Omega$-monotonicity implies $\Pi$-monotonicity.} We then show that many of the gambles used in empirical work are actually $\Pi$-ordered, including choice between a safe and a risky option, as well as  -- at least under CRRA or CARA utility -- choice between binary lotteries with identical probabilities. The set of $\Pi$-ordered gambles is thus rich enough to include, e.g., the gambles used in the single choice list method of \citet{eckel2002sex} and in the multiple price list method of \citet{holt2002risk}.\footnote{We refer to \citet{charness2013experimental} and \citet{holzmeister2021risk} for an overview, discussion and comparison of those and other methods for eliciting risk aversion.} We then introduce a generic $\Pi$-monotonic random utility model which we call the $\Pi$-based RUM. The idea is simply to assign to each choice option as a ``disutility index'' the premium it takes to make this option as good as the highest-utility option. Choice probabilities are then determined by adding shocks to these premia, just like in a standard random utility model. In particular, the $\Pi$-based RUM can be estimated using, e.g., maximum likelihood methods in exactly the same way as a conventional random utility model.

Under CARA utility, the $\Pi$-based RUM coincides with a random utility model that transforms expected utilities into their monetary certainty equivalents before adding stochastic shocks.\footnote{This model has been applied e.g. in \citet{friedman1974risk} and in \citet{von2011heterogeneity}.} Even though it is not $\Omega$-monotonic, we feel that its $\Pi$-monotonicity completely rehabilitates it. However, there is a converse to this observation: We show that, basically, if a RUM that is based on certainty equivalents is $\Pi$-monotonic, the underlying utilities have to be CARA. For example, a RUM that adds stochastic shocks to CRRA certainty equivalents cannot be $\Pi$-monotonic.\footnote{This model has e.g. been studied in the model comparison of \citet{olschewski2022empirical} who attribute it to \citet{stewart2018psychological}.} Indeed, for this model,  we demonstrate fairly drastic violations of monotonicity for some standard binary lotteries used in the experimental literature. Our rehabilitation of the certainty-equivalent-based RUM is thus limited to the CARA case or to choice between a risky and a degenerate, non-risky lottery. In both cases, the difference of certainty equivalents can be interpreted as a risk premium.
 
Let us emphasize our conviction that -- in the broader scheme of things -- our views are very much in line with those of \citet{Wilcox} and \citet{AB}. In particular, we share with them the concerns about monotonicity properties of the standard RUM which simply adds stochastic shocks to expected utilities as exemplified e.g. in Figure 1 of \citet{AB}.\footnote{In this sense, our paper is fairly orthogonal to the previous rehabilitation attempt \citet{conterehabilitating} which focuses mostly on that model.} None of the usual parametrizations of CARA and CRRA utility were designed to give insightful comparative statics of expected utility in the risk aversion parameter. Moreover, unlike certainty equivalents or risk premia, expected utility differences are only identified up to a scaling factor -- so some arbitrariness must remain. Finally, \citet{blavatskyy} has shown that any RUM based on expected utilities must exhibit monotonicity problems even when attention is restricted to choices between risky and degenerate lotteries.\footnote{Throughout the paper, we restrict attention to expected utility as the model that generates the underlying non-stochastic preferences. As discussed, e.g., in \citet{wilcox2021utility}, many of the same issues arise for RUMs that are based on preference models from behavioral economics like prospect theory or rank dependent utility. We refer to \citet{vieider2024decisions} and the references therein for some recent work on behavioral stochastic choice models that model preferences and errors within a single framework, thus ruling out the possibility that the error model introduces inconsistencies which are at odds with the underlying non-stochastic preferences -- because there are no underlying non-stochastic preferences.  While this approach lets many of the problems we discuss in this paper disappear, it does create some new ones. First, the models will tend to be a bit more complex because some flexibility is given up \citep[see][Section 4]{vieider2024decisions}. Second, from the perspective of a policy-maker who wishes to maximize an agent's welfare, it is attractive to have a canonical way of decomposing the agent's behavior into preferences (which are being maximized) and errors (which are being ignored). In RUMs and related models, such a decomposition comes for free.}
 
 \citet{barseghyan2018estimating} have pointed out that the concerns about monotonicity in the standard RUM may be less worrisome if the variance of the shocks is estimated together with the risk aversion parameter because then the variance can potentially absorb some of the monotonicity problems. Conversely, the concerns are especially worrying in empirical applications where the variance is assumed to be the same for a cross-section of individuals while the risk aversion parameters can differ at the individual level.\footnote{The same argument is already found in \citet{von2011heterogeneity} who use it to explain their choice of a RUM based on certainty equivalents rather than one based on expected utilities.} In an empirical illustration based on data from \citet{andersen2008eliciting}, we confirm this intuition: When estimating individual CRRA risk aversion parameters using a standard expected-utility-based RUM for a cross-section of individuals, there is a large discrepancy between estimates obtained from homoskedastic and heteroskedastic model specifications.\footnote{To clarify terminology, we note that papers such as \citet{hey1995experimental} and \citet{buschena2000generalized} have studied heteroskedastic, expected-utility-based RUMs in the sense of allowing the variance to vary between lottery pairs for a given individual. From this perspective, all the models we consider are homoskedastic. However, we do share with \citet{hey1995experimental}'s heteroskedastic approach the basic idea of keeping the underlying non-stochastic expected-utility preferences fixed while varying the way in which errors are introduced.} The heteroskedastic estimates are mostly in line with those obtained from other models like the $\Pi$-based RUM or the random parameter model (RPM) suggested by \citet{AB}. In contrast, the homoskedastic estimates are squashed together on a narrow range, reflecting the fact that any given value of the shock parameter is only adequate for a limited range of risk aversion levels. For the expected-utility-based RUM, we really see two problems coming together here: Utility differences are not just non-monotonic, they also completely change their magnitude as the risk aversion level changes. Consequently, when we study the impact of non-monotonicity on a RUM based on certainty equivalents, we observe a more subtle effect. While the non-monotonicity introduces multimodality in the likelihood, the agreement between estimates from homoskedastic and heteroskedastic model specifications remains good. The reason is that even though certainty equivalent differences are non-monotonic, they still vary over the same monetary scale.

The paper is organized as follows. Section \ref{sec2} introduces the setting, the families of utility functions we study and our generalized notion of risk premium. Section \ref{sec3} introduces $\Pi$-monotonicity, our stochastic monotonicity concept. Section  \ref{sec4} then shows that $\Pi$-monotonicity applies to many of the lotteries studied in experimental work, introduces the $\Pi$-based random utility model and shows that it is $\Pi$-monotonic. Section \ref{Dis} adds further discussion of RUMs based on certainty equivalents, the $\Pi$-based RUM and differences between ordinal and cardinal stochastic choice models. In Section \ref{empirics}, we present some empirical illustrations. Section \ref{sec7} concludes.  All proofs are in Appendix \ref{AppA} while Appendix \ref{AEmpirics} fills in additional details about the empirical illustration.

\section{The setting}\label{sec2}

Consider a family $(u_\theta)$ of strictly increasing\footnote{Throughout, we call a function $f:\mathbb{R}\rightarrow \mathbb{R}$ strictly increasing if $x<y$ implies $f(x)<f(y)$ and increasing if  $x<y$ implies $f(x)\leq f(y)$.  In the same way, we define (strictly) decreasing, convex and concave.} and twice continuously differentiable utility functions $u_\theta:\S\rightarrow \mathbb{R}$ parametrized by $\theta\in \Theta\subset \mathbb{R}$. We assume that the support $\S$ is an interval that is unbounded above, $\S=(s_0,\infty)$ for some $s_0 \in [-\infty,\infty)$, and that $\Theta$ is an interval that is unbounded above and large enough to contain $0$. For simplicity, we also assume $\mathbb{N}_{>0}\subseteq \S$, all non-zero integers lie in $\S$.\footnote{This could easily be relaxed if needed. The advantage is that all lotteries with integer outcomes take values in $\S$.} Unless otherwise noted, we restrict attention to finite lotteries $X$ which take values $(x_1,\ldots,x_{n_X})$, $x_i\in \S$, with probabilities $(p_1,\ldots,p_{n_X})$ where $p_i>0$ for all $i$ and $\sum_{i=1}^{n_X} p_i=1$. We write 
\[
E[u_\theta(X)]= \sum_{i=1}^{n_X} p_i u_\theta(x_i) \;\; \text{and} \;\; \CE_\theta(X) = u_\theta^{-1}(E[u_\theta(X)])
\]
for the expected utility and the certainty equivalent of $X$ under $u_\theta$. We are mainly interested in two families of this type, the constant absolute risk aversion (CARA) family with parameter $\theta=\alpha$, support $\S=\mathbb{R}$ and parameter range $\Theta=\mathbb{R}$ which is given by 
\[
u_\alpha(x) =\begin{cases}
-\frac{1}{\alpha} \exp(-\alpha x) & \text{ for } \alpha \neq 0\\
x & \text{ for } \alpha = 0,
\end{cases} 
\]
and the constant relative risk aversion (CRRA) family  with parameter $\theta=\gamma$, support $\S=(0,\infty)$ and parameter range $\Theta=\mathbb{R}$ which is given by 
\[
u_\gamma(x) =\begin{cases}
\frac{x^{1-\gamma}}{1-\gamma} & \text{ for } \gamma \neq 1\\
\log(x) & \text{ for } \gamma = 1.
\end{cases} 
\]
The properties of the family $u_\theta$ that we will need are summarized in the following assumption which is well-known to be satisfied for the CARA and CRRA families:
\begin{Assumption}\label{A1}
For any lottery $X$, we assume that $\CE_\theta(X)$ is continuous in $\theta$ with $\CE_0(X)=E[X]$ and 
\begin{equation}\label{CElim}
\lim_{\theta \rightarrow \infty} \CE_\theta(X) = \min_{i=1,\ldots,n_X} x_i.
\end{equation}
Moreover, for all $\theta_1 < \theta_2$ and all $x\in \S$
\begin{equation}\label{PR}
\mathcal{A}_{\theta_1}(x) < \mathcal{A}_{\theta_2}(x)
\end{equation}
where $\mathcal{A}_\theta(x)=- \frac{u_{\theta}''(x)}{u_{\theta}'(x)}$ is the Arrow-Pratt coefficient of absolute risk aversion.
\end{Assumption}
This assumption will be in force unless otherwise noted. By \eqref{PR}, we thus assume that the utility functions $u_\theta$ are ordered by their coefficients of absolute risk aversion in the sense of Arrow-Pratt. Moreover, $\theta=0$ corresponds to risk-neutral preferences, $\theta < 0$  to risk-loving preferences and $\theta>0$ to risk-averse preferences. 

Equation \eqref{CElim} postulates that in the limit $\theta \rightarrow \infty$ the preferences $\succeq_\theta$ represented by $u_\theta$ converge to the preferences $\succeq_\infty$ of an infinitely risk-averse agent, who evaluates lotteries only by their worst possible outcome, for any pair of lotteries $X$ and $Y$ with respective outcomes $(x_i)$ and $(y_i)$, 
\[
X \succeq_\infty Y \Leftrightarrow \min_i x_i \geq   \min_i y_i.
\]
Since these infinitely risk-averse preferences will play an important role later on, we discuss their properties a bit further.  
Via $\CE_\infty(X) = \min_i x_i$, we can also define a certainty equivalent for the infinitely risk-averse agent.\footnote{However, since the infinitely risk-averse preferences do not satisfy von Neumann and Morgenstern's continuity axiom, they cannot be written as an expected utility.} For this agent, receiving $c$ for certain is exactly as good as receiving any lottery $X$ with worst-possible outcome $c$. The following elementary lemma summarizes in which sense the limiting preferences are in line with those induced by the family $u_\theta$.
\begin{Lemma}\label{CElimit}
Consider a pair of lotteries with $\CE_\theta(Y)>\CE_\theta(X)$ for all finite $\theta$ above some threshold $\theta_0>0$, i.e., for all sufficiently risk-averse agents. Then we either have $\CE_\infty(Y)>\CE_\infty(X)$ or $\CE_\infty(Y)=\min_i y_i = \min_i x_i=\CE_\infty(X)$.
\end{Lemma}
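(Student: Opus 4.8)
The plan is to pass the given strict inequality to the limit $\theta \to \infty$ and then read off the two stated alternatives. First I would invoke equation \eqref{CElim} of Assumption \ref{A1}: it guarantees that both $\CE_\theta(X)$ and $\CE_\theta(Y)$ converge as $\theta \to \infty$, with $\lim_{\theta\to\infty}\CE_\theta(X) = \min_i x_i = \CE_\infty(X)$ and likewise $\lim_{\theta\to\infty}\CE_\theta(Y) = \min_i y_i = \CE_\infty(Y)$. Because both limits exist, the difference $\CE_\theta(Y) - \CE_\theta(X)$ has a well-defined limit, namely $\CE_\infty(Y) - \CE_\infty(X)$.

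The core step is then to observe that since $\CE_\theta(Y) - \CE_\theta(X) > 0$ for every finite $\theta > \theta_0$, and the limit of a quantity that stays strictly positive for all large $\theta$ is nonnegative, I obtain $\CE_\infty(Y) - \CE_\infty(X) \geq 0$, i.e., $\min_i y_i \geq \min_i x_i$. The only genuine subtlety here — indeed essentially the whole content of the lemma — is that strict inequalities need not survive under limits, which is precisely why the conclusion must leave room for equality rather than asserting $\CE_\infty(Y) > \CE_\infty(X)$ outright.

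Finally I would split on this weak limiting inequality. If $\min_i y_i > \min_i x_i$, then $\CE_\infty(Y) > \CE_\infty(X)$, giving the first alternative. Otherwise $\min_i y_i = \min_i x_i$, and since $\CE_\infty(Z) = \min_i z_i$ holds by definition for every lottery $Z$, this yields $\CE_\infty(Y) = \min_i y_i = \min_i x_i = \CE_\infty(X)$, the second alternative; the two cases are exhaustive. I do not expect any real obstacle: once \eqref{CElim} is available the argument is a one-line limit passage, and the main thing to guard against is the tempting but false claim of a strict inequality in the limit.
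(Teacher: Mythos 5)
Your proof is correct and follows essentially the same route as the paper's: both pass the strict inequality $\CE_\theta(Y)>\CE_\theta(X)$ through the limit guaranteed by \eqref{CElim}, obtaining the weak inequality $\CE_\infty(Y)\geq\CE_\infty(X)$, and then read off the two alternatives from the fact that $\CE_\infty(Z)=\min_i z_i$. You merely spell out explicitly (and correctly) the one subtlety the paper's one-line proof leaves implicit, namely that strictness can be lost in the limit, which is exactly why the equality case must appear in the conclusion.
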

The lemma simply states that the preferences of an infinitely risk-averse  are aligned  with those of an agent with a sufficiently high but finite risk aversion type with a single exception: The infinitely risk-averse agent is indifferent between lotteries that have the same worst-possible outcome.
\begin{Remark}\label{remEU}
The result of Lemma \ref{CElimit} should be seen in contrast to the observation that limits of expected utilities will typically not lead to meaningful limiting preferences. For instance, for any lotteries $X$ and $Y$ with $\min_i x_i>1$ and $\min_i y_i>1$, we have that CARA and CRRA utilities converge to zero, 
\begin{equation} \label{limit0}
0= \lim_{\alpha \rightarrow \infty} E[u_\alpha(X)]=\lim_{\alpha \rightarrow \infty} E[u_\alpha(Y)]= \lim_{\gamma \rightarrow \infty} E[u_\gamma(X)]=\lim_{\gamma \rightarrow \infty} E[u_\gamma(Y)],
\end{equation}
suggesting complete indifference across a vast class of lotteries -- which has no meaningful relation to the sequence of increasingly risk-averse preferences that approaches this limit. Another  way of making the same point is to note that the alternative family of utility functions
\[
\tilde{u}_\theta(x)= A(\theta) u_\theta(x)
\] 
for an arbitrary positive function $A(\theta)$ can lead to different limiting behavior in \eqref{limit0} even though the underlying preferences are the same and Assumption \ref{A1} remains satisfied.\footnote{For illustration, choose $A(\alpha)=\exp(\alpha x_0)$ for some $x_0>1$ in the CARA case. This will change the limit to $-\infty$ for any lottery that has outcomes below $x_0$.} In other words, the limit in \eqref{limit0} is not identified by the underlying preferences. It depends on the specific parametrization of  the family of utility functions. 
\end{Remark}

Finally, we introduce a generalized notion of risk premium that quantifies how much better a decision maker likes one lottery compared to another,  measuring the strength of the decision maker's preference in monetary terms. Traditionally, e.g. in \citet{pratt1964} and \citet{kihlstrom1981},  risk premia have often been defined as  the non-risky monetary amounts that need to be taken away from the less risky lottery to make it as good (or bad) as the more risky one. However, \citet{kimball1990} has shown that for many purposes one can just as well consider the ``compensating risk premium'' which adds an amount to the more risky option to make it equally attractive to the less risky one.\footnote{In particular, \citet{kimball1990} showed that for choice between a risky and a riskless lottery as in \citet{pratt1964}, a ranking in one type of premium implies a ranking in the other so that one can formally add a ranking in compensating premia as a condition to \citet{pratt1964}'s Theorem 1.} Our definition is based on a simple requirement: We want the premium to exist and be well-defined for any pair of lotteries $X$ and $Y$. We  thus define the compensating premium $\pi_\theta(X,Y)$ via the amount that needs to be added to the lower-utility lottery to make it as attractive as the higher-utility lottery. 
\begin{Definition}[Compensating premium]\label{Defpi}
For any $\theta \in \Theta$ and any pair of lotteries $X$ and $Y$ with $E[u_{\theta}(X)] \leq E[u_{\theta}(Y)]$, we define the non-negative real number $\pi_\theta(X,Y)$ as the unique solution to the equation 
\begin{equation}\label{CP}
E[u_{\theta}(X+ \pi_\theta(X,Y))] = E[u_{\theta}(Y)]
\end{equation}
where $X+\pi_\theta(X,Y)$ denotes the lottery with outcomes $x_i+\pi_\theta(X,Y)$. Moreover, for lotteries $X$ and $Y$ with $E[u_{\theta}(X)] > E[u_{\theta}(Y)]$, we define $\pi_\theta(X,Y)=-\pi_\theta(Y,X)$. 
\end{Definition}

In the appendix, we give a short proof that $\pi_\theta(X,Y)$ as defined exists and is unique. When $X$ is a mean-preserving spread of $Y$ and $u_\theta$ is concave ($\theta>0$) so that the agent is risk-averse, one can easily verify that $\pi_\theta(X,Y)$ corresponds to the compensating \textit{risk} premium as defined by \citet{kimball1990}. However, by defining the premium for any pair of lotteries, we necessarily also define it for pairs of lotteries that cannot be ranked in terms of their riskiness. We thus simply call $\pi_\theta(X,Y)$ the compensating premium.

\begin{Remark}
The guiding observation behind our definition of the compensating premium is that $E[u_\theta(Y+c)]$ is well-defined for any $c\geq 0$ while 
$E[u_\theta(Y-c)]$ may not be well-defined when $Y-c$ takes values below the support $\S$ of $u_\theta$. In particular, it is not always possible to shift down a favorable lottery $Y$ to the point where it matches the utility level of a less favorable one.\footnote{\label{ftex}For an example, consider CRRA utility with $\gamma=0.5$ and thus $u_{0.5}(x)=2 \sqrt{x}$. Under lottery $Y$, the agent gets 2 for certain, while under $X$ the agent receives with equal probability 10 or 1. Here, the agent strictly prefers to receive $10-\pi$ and $1-\pi$ with equal probability over receiving 2 for any $\pi$ with $1-\pi \geq 0$, so it is not possible to equalize utilities by shifting $X$ downwards. Another class of examples that is typical of financial applications -- but violates our focus on finite lotteries -- is that of a CRRA utility agent choosing between two lognormal payoffs with full support on $(0,\infty)$. In this case, none of the two payoffs can be shifted downwards without leaving the support.} In contrast, it is always possible to shift the less favorable lottery upwards to the utility level of the more favorable one. The compensating premium is the unique additive premium that only involves upwards shifts of lotteries.
\end{Remark}

Intuitively, $\pi_{\theta}(X,Y)$ quantifies the utility difference between $Y$ and $X$ in monetary terms. The larger the value of $\pi_{\theta}(X,Y)$, the better is $Y$ compared to $X$. This is easiest to see in the special case of CARA utility in which we have a closed-form expression for the compensating premium:
\begin{Lemma}\label{CARApi}
Under CARA utility, the compensating premium is given by the difference of certainty equivalents
\[
\pi_\alpha(X,Y) = \CE_\alpha(Y)-\CE_\alpha(X).
\]
\end{Lemma}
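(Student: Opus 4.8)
The plan is to exploit the defining feature of CARA utility, namely that shifting every outcome of a lottery by a constant shifts its certainty equivalent by exactly that constant. Once this translation-invariance property is in hand, the lemma follows almost immediately from Definition \ref{Defpi}.

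First I would establish the translation property: for any lottery $X$ and any constant $c \geq 0$, $\CE_\alpha(X + c) = \CE_\alpha(X) + c$. For $\alpha \neq 0$ this rests on the multiplicative identity $u_\alpha(x + c) = e^{-\alpha c}\, u_\alpha(x)$, which follows directly from the exponential form of $u_\alpha$. Taking expectations gives $E[u_\alpha(X+c)] = e^{-\alpha c}\, E[u_\alpha(X)]$, and applying the same identity at the point $\CE_\alpha(X)$ yields $u_\alpha(\CE_\alpha(X) + c) = e^{-\alpha c}\, u_\alpha(\CE_\alpha(X)) = e^{-\alpha c}\, E[u_\alpha(X)] = E[u_\alpha(X+c)]$. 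Since $u_\alpha$ is strictly increasing and hence injective, this delivers $\CE_\alpha(X+c) = \CE_\alpha(X) + c$. The case $\alpha = 0$ is immediate because $u_0$ is the identity, so $\CE_0(X) = E[X]$ and the translation property reduces to linearity of the expectation.

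Second, assume without loss of generality that $E[u_\alpha(X)] \leq E[u_\alpha(Y)]$, so that $\pi_\alpha(X,Y) \geq 0$ solves \eqref{CP}. Applying $u_\alpha^{-1}$ to both sides of $E[u_\alpha(X + \pi_\alpha(X,Y))] = E[u_\alpha(Y)]$ gives $\CE_\alpha(X + \pi_\alpha(X,Y)) = \CE_\alpha(Y)$. The translation property then rewrites the left-hand side as $\CE_\alpha(X) + \pi_\alpha(X,Y)$, and rearranging produces $\pi_\alpha(X,Y) = \CE_\alpha(Y) - \CE_\alpha(X)$, as claimed. For the remaining case $E[u_\alpha(X)] > E[u_\alpha(Y)]$, the convention $\pi_\alpha(X,Y) = -\pi_\alpha(Y,X)$ combined with the formula just derived for $\pi_\alpha(Y,X) = \CE_\alpha(X) - \CE_\alpha(Y)$ gives the same expression, so the identity holds for every pair of lotteries.

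The only step requiring genuine care is the translation property, and even there the main obstacle is purely bookkeeping: tracking the sign of $\alpha$ in the exponential form and confirming that the argument passes uniformly through the degenerate $\alpha = 0$ case. Everything else is a direct substitution into Definition \ref{Defpi}, so I expect the proof to be short.
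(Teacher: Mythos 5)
Your proposal is correct and follows exactly the route the paper takes: the paper's proof is the one-line remark that the lemma ``follows immediately from the translation invariance condition $\CE_\alpha(X+c)=\CE_\alpha(X)+c$,'' and you simply fill in the details, verifying that invariance via the identity $u_\alpha(x+c)=e^{-\alpha c}u_\alpha(x)$ and then applying $u_\alpha^{-1}$ to the defining equation \eqref{CP}, including the sign-convention case $\pi_\alpha(X,Y)=-\pi_\alpha(Y,X)$. Nothing is missing; your write-up is just a more explicit version of the same argument.
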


The lemma follows immediately from the translation invariance condition $\CE_\alpha(X+c)=\CE_\alpha(X)+c$ which holds for CARA utility. The following lemma summarizes some further basic properties of $\pi_{\theta}(X,Y)$.

\begin{Lemma}\label{lempi}
For any $\theta \in \Theta$ and any pair of lotteries $X$ and $Y$, $\pi_{\theta}(X,Y)\geq 0$ is equivalent to $E[u_{\theta}(Y)] \geq E[u_{\theta}(X)]$ and, in particular, $\pi_{\theta}(X,Y)= 0$ corresponds to indifference, $E[u_{\theta}(Y)] = E[u_{\theta}(X)]$.  Moreover, $\pi_{\theta}(X,Y)$ is continuous in $\theta$ and satisfies $\pi_{0}(X,Y)=E[Y]-E[X]$ and 
\[
\lim_{\theta \rightarrow \infty} \pi_{\theta}(X,Y)= \pi_{\infty}(X,Y) = \min_j y_j - \min_i x_i.
\]
where $\pi_\infty(X,Y)$, the compensating premium for the limiting infinitely risk-averse agent, is defined via $
\CE_\infty(X+\pi_{\infty}(X,Y))=\CE_\infty(Y).$
\end{Lemma}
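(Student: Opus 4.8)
The plan is to treat the four assertions in increasing order of difficulty, reducing everything to two facts: that for $\pi\ge 0$ the map $\pi\mapsto \CE_\theta(X+\pi)$ is strictly increasing (because $u_\theta$ is strictly increasing, $u_\theta^{-1}$ preserves order, and the upward-shifted outcomes stay in $\S$), and that $\CE_\theta$ is continuous in $\theta$ for each fixed lottery, with the limits recorded in Assumption~\ref{A1}. Throughout I would work with certainty equivalents rather than expected utilities, using that $E[u_\theta(X)]\le E[u_\theta(Y)]$ is equivalent to $\CE_\theta(X)\le \CE_\theta(Y)$ and that the defining equation \eqref{CP} can be rewritten, by applying $u_\theta^{-1}$ on both sides, as $\CE_\theta(X+\pi_\theta(X,Y))=\CE_\theta(Y)$. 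Existence and uniqueness of $\pi_\theta(X,Y)$ I take from the appendix, so these manipulations are legitimate.

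The sign statement is then immediate from Definition~\ref{Defpi}: when $\CE_\theta(X)\le \CE_\theta(Y)$, strict monotonicity of $\pi\mapsto \CE_\theta(X+\pi)$ together with its value $\CE_\theta(X)$ at $\pi=0$ forces $\pi_\theta(X,Y)>0$ if the inequality is strict and $\pi_\theta(X,Y)=0$ if it is an equality; the antisymmetric extension $\pi_\theta(X,Y)=-\pi_\theta(Y,X)$ then covers the case $\CE_\theta(X)>\CE_\theta(Y)$ and yields the claimed equivalence and indifference characterization. The value at $\theta=0$ is equally direct: since $\CE_0=E$ by Assumption~\ref{A1}, the equation $\CE_0(X+\pi)=\CE_0(Y)$ reads $E[X]+\pi=E[Y]$, so $\pi_0(X,Y)=E[Y]-E[X]$ irrespective of sign.

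For continuity the obstacle is that Assumption~\ref{A1} only supplies continuity of $\CE_\theta$ in $\theta$ for a \emph{fixed} lottery, not joint continuity in $(\theta,\pi)$, and that the definition of $\pi_\theta$ switches case at the indifference point $\pi=0$. The plan is to sidestep joint continuity entirely by a sandwiching argument. Fixing $\theta^*$ with $\CE_{\theta^*}(X)<\CE_{\theta^*}(Y)$, hence $\pi^*:=\pi_{\theta^*}(X,Y)>0$, I would pick $\epsilon\in(0,\pi^*)$ and apply Assumption~\ref{A1} to the three fixed lotteries $X+(\pi^*-\epsilon)$, $Y$, $X+(\pi^*+\epsilon)$ to obtain, for $\theta$ near $\theta^*$, the inequalities $\CE_\theta(X+(\pi^*-\epsilon))<\CE_\theta(Y)<\CE_\theta(X+(\pi^*+\epsilon))$; strict monotonicity in $\pi$ then squeezes $\pi^*-\epsilon<\pi_\theta(X,Y)<\pi^*+\epsilon$, and the left inequality already places $\theta$ in the positive regime. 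At the boundary case $\CE_{\theta^*}(X)=\CE_{\theta^*}(Y)$, so $\pi^*=0$, the same idea applied to $X+\epsilon$ (on the side where $\CE_\theta(X)<\CE_\theta(Y)$) and to $Y+\epsilon$ (on the side where $\CE_\theta(X)>\CE_\theta(Y)$, through the antisymmetric extension) yields $|\pi_\theta(X,Y)|<\epsilon$, handling both signs at once.

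The limit $\theta\to\infty$ is the same sandwiching with the pointwise limit \eqref{CElim} playing the role of continuity. Writing $m_X=\min_i x_i$ and $m_Y=\min_j y_j$, the target $\pi_\infty(X,Y)=m_Y-m_X$ is exactly the solution of $\CE_\infty(X+\pi)=\CE_\infty(Y)$ since $\CE_\infty(X+c)=m_X+c$. Assuming first $m_Y>m_X$, for small $\epsilon>0$ the shifted lotteries $X+(m_Y-m_X\pm\epsilon)$ have worst outcomes $m_Y\pm\epsilon$, so by \eqref{CElim} their certainty equivalents converge to $m_Y\pm\epsilon$ while $\CE_\theta(Y)\to m_Y$; hence for large $\theta$ they bracket $\CE_\theta(Y)$ and, by monotonicity in $\pi$, force $\pi_\theta(X,Y)$ into $(m_Y-m_X-\epsilon,\,m_Y-m_X+\epsilon)$. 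The case $m_Y<m_X$ follows from $\pi_\theta(X,Y)=-\pi_\theta(Y,X)$, and the case $m_Y=m_X$ is treated exactly as the boundary case above with $\epsilon$-shifts of $X$ and $Y$, giving $\pi_\theta(X,Y)\to 0$. I expect the only real difficulty to be this bookkeeping around $\pi=0$ and $\theta=\infty$; once the sandwiching template is fixed it dispatches every case uniformly, so the write-up should reduce to choosing the comparison lotteries correctly and invoking Assumption~\ref{A1} on each.
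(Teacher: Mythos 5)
Your proof is correct, and it follows the same structural skeleton as the paper's --- both arguments rest on the function $F(\theta,\pi)=\CE_\theta(X+\pi)-\CE_\theta(Y)$ being continuous in $\theta$ (via Assumption \ref{A1} applied to the fixed shifted lotteries) and strictly increasing in $\pi$, with the sign claims and the $\theta=0$ value read off directly from Definition \ref{Defpi}. The difference is in execution: where the paper invokes an implicit function theorem for continuous, strictly monotone functions off the shelf (\citet{kumagai1980implicit}, or Theorem 1.H.3 of \citet{dontchev2009implicit}) to get continuity of the implicitly defined root, you prove exactly the needed instance by hand with the $\epsilon$-sandwiching of $\CE_\theta(Y)$ between $\CE_\theta(X+(\pi^*-\epsilon))$ and $\CE_\theta(X+(\pi^*+\epsilon))$; your treatment of the boundary case $\pi^*=0$, using shifts of both $X$ and $Y$ to handle the case-switch in the definition, is more explicit than the paper's brief appeal to the antisymmetry $\pi_\theta(X,Y)=-\pi_\theta(Y,X)$. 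For the limit $\theta\to\infty$, the paper argues in one line that $F(\theta,\cdot)$ converges to the strictly increasing limit $F_\infty(\pi)=\min_i x_i+\pi-\min_j y_j$ and that non-convergence of $\pi_\theta(X,Y)$ would yield a contradiction; your bracketing via \eqref{CElim} applied to $X+(m_Y-m_X\pm\epsilon)$ spells out precisely the content of that compressed contradiction argument, again split cleanly across the cases $m_Y>m_X$, $m_Y=m_X$ (where the $\epsilon$-shift trick is genuinely needed, since the bracketing lottery $X+(m_Y-m_X-\epsilon)$ would otherwise require a downward shift), and $m_Y<m_X$ by antisymmetry. What your route buys is self-containedness --- no external citation and full visibility of why the case-switch at indifference causes no discontinuity; what the paper's route buys is brevity and a ready-made tool that applies verbatim to both the continuity and limit steps. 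Your checks that the bracketing shift stays positive (so that $\theta$ remains in the positive regime and the defining equation, rather than the antisymmetric extension, applies) are the right places to be careful, and you got them right.
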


\section{Monotonicity of stochastic choice models}\label{sec3}

 A stochastic choice model $\rho$ is a model that assigns to each alternative in a collection of gambles $X_1,\ldots,X_n$ and each value of the parameter $\theta \in \Theta$ a choice probability $\rho_\theta(X_i)$ which is interpreted as the probability with which an agent with preferences given by $u_\theta$ chooses alternative $X_i$. 
To fix ideas, consider the logit random utility model which has choice probabilities
\begin{equation}\label{logitRUM}
\rho_\theta^{\text{EU-RUM}}(X_i) = \frac{\exp(\lambda E[u_\theta(X_i)])}{\sum_{j=1}^n \exp(\lambda E[u_\theta(X_j)])}
\end{equation}
where $\lambda$ is a precision parameter that captures the noisiness of choices with $\lambda=0$ corresponding to random choice and $\lambda \rightarrow \infty$ to a uniform distribution on the choices that maximize expected utility. Various other examples of stochastic choice models will follow  below. 

The goal now is to formalize the idea that with a good stochastic choice model, more risk-averse agents -- so those with higher values of $\theta$ -- should have a higher probability of choosing less risky options. For an example of the opposite type of behavior,  inspecting \eqref{logitRUM} and recalling Remark \ref{remEU}, one can easily see that 
$$
\lim_{\theta\rightarrow \infty} \rho_\theta^{\text{EU-RUM}}(X_i) = \frac{1}n
$$
for any fixed $\lambda$ if the family of utility functions is the CARA or CRRA family and all lotteries $X_i$ only take values greater than 1.
Intuitively, since all expected utilities converge to zero, choices become increasingly random -- even though a possible preference for less risky choices should actually be becoming stronger.  This inconsistency is one of the main arguments against standard random utility models like \eqref{logitRUM} brought forward by, among others, \citet{Wilcox} and \citet{AB}. 

We now introduce $\Omega$-monotonicity, the monotonicity criterion for stochastic choice models introduced by \citet{AB}, as well as our own weaker  criterion of $\Pi$-monotonicity. Both criteria focus on choice within pairs of lotteries. The criteria are based on notions of $\Omega$-ordered and $\Pi$-ordered pairs of lotteries $(X,Y)$ which correspond to different sets of pairs of lotteries for which one demands monotonic choice probabilities. Throughout, $X$ should be thought of as the more risky option which is preferred by the less risk-averse types of agents while more risk-averse agents prefer $Y$. 

\begin{Definition}\label{def2}
\item[(i)] A pair of lotteries $(X,Y)$ is $\Omega$-ordered if the premium $\pi_\theta(X,Y)$ satisfies the following single-crossing condition: As a function of $\theta$, $\pi_\theta(X,Y)$ switches signs at most once, and if it does then from negative to positive.
\item[(ii)] A pair of lotteries $(X,Y)$ is $\Pi$-ordered if the premium $\pi_\theta(X,Y)$ is increasing in $\theta$.
\item[(iii)] A stochastic choice model $\rho$ is monotonic at the pair of lotteries $(X,Y)$ if $\rho_\theta(X)$ is decreasing in $\theta$.
\item[(iv)] A stochastic choice model $\rho$ is $\Omega$-monotonic if it is monotonic at all $\Omega$-ordered pairs of lotteries.
\item[(v)] A stochastic choice model $\rho$ is $\Pi$-monotonic if it is monotonic at all $\Pi$-ordered pairs of lotteries.
\end{Definition}

$\Pi$-orderedness imposes a monotonicity condition where $\Omega$-orderedness merely imposes a  single-crossing condition. Accordingly, all $\Pi$-ordered pairs are also $\Omega$-ordered and, in particular, every $\Omega$-monotonic stochastic choice model is also $\Pi$-monotonic. We will now argue that in demanding $\Omega$-monotonicity rather than $\Pi$-monotonicity, \citet{AB} are possibly asking too much. As a first step, note that in requiring monotonicity of choice probabilities for every $\Omega$-ordered pair of lotteries,  one basically requires that a single-crossing property at the level of the premia $\pi_\theta$ implies a monotonicity property at the level of choice probabilities.\footnote{\citet{AB} actually formulate their single-crossing property at the level of the utility differences $E[u_\theta(Y)]-E[u_\theta(X)]$ but this is equivalent as seen in Lemma \ref{lempi} above.} Intuitively, the latter is a much stronger type of condition than the former. Just because a function only switches signs once it need not be monotonic. The following lemma illustrates this intuition, showing that sometimes under an $\Omega$-monotonic stochastic choice model the choice probabilities become more extreme while at the same time the strength of preferences as measured by the premium $\pi_\theta$ goes down. 

\begin{Lemma}\label{OmegaPi}
Suppose $\rho$ is $\Omega$-monotonic. Consider an $\Omega$-ordered pair $(X,Y)$ which is not $\Pi$-ordered. Then there exist $\theta_1<\theta_2$ such that $\rho_{\theta_1}(Y) \leq  \rho_{\theta_2}(Y)$ but also $\pi_{\theta_1}(X,Y)>\pi_{\theta_2}(X,Y)$. 
\end{Lemma}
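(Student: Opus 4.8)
The plan is to unwind the definitions, since the statement is essentially a contrapositive reading of the two hypotheses. First I would record the binary-choice complementarity: because the monotonicity notions in Definition \ref{def2} concern choice \emph{within} the pair $(X,Y)$, we have $\rho_\theta(X)+\rho_\theta(Y)=1$ for every $\theta$, and hence $\rho_\theta(Y)=1-\rho_\theta(X)$.

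Next I would exploit the two hypotheses separately. Because $\rho$ is $\Omega$-monotonic and $(X,Y)$ is $\Omega$-ordered, Definition \ref{def2}(iv) tells us that $\rho$ is monotonic at $(X,Y)$, i.e.\ $\rho_\theta(X)$ is decreasing in $\theta$. By the complementarity just noted, $\rho_\theta(Y)$ is therefore increasing in $\theta$, so that $\theta_1<\theta_2$ implies $\rho_{\theta_1}(Y)\le \rho_{\theta_2}(Y)$ for \emph{every} such pair. On the other hand, since $(X,Y)$ is not $\Pi$-ordered, the map $\theta\mapsto \pi_\theta(X,Y)$ fails to be increasing. Negating the definition of an increasing function directly yields the existence of some $\theta_1<\theta_2$ with $\pi_{\theta_1}(X,Y)>\pi_{\theta_2}(X,Y)$.

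The two pieces then combine immediately: for the specific pair $\theta_1<\theta_2$ extracted from the failure of $\Pi$-orderedness, the monotonicity of $\rho_\theta(Y)$ established above gives $\rho_{\theta_1}(Y)\le \rho_{\theta_2}(Y)$, while by construction $\pi_{\theta_1}(X,Y)>\pi_{\theta_2}(X,Y)$. This is exactly the asserted conclusion.

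There is essentially no technical obstacle here; the entire content is bookkeeping with Definition \ref{def2}. The only point that deserves care is the complementarity $\rho_\theta(Y)=1-\rho_\theta(X)$, which is what converts the stated decrease of $\rho_\theta(X)$ into an increase of $\rho_\theta(Y)$ and thereby sends the probability inequality in the direction claimed. The lemma is thus best understood as a conceptual observation rather than a computation: any $\Omega$-monotonic model, when forced to be monotonic on an $\Omega$-ordered-but-not-$\Pi$-ordered pair, must on some subinterval push the choice probability further toward $Y$ precisely while the monetary strength of the preference for $Y$, as measured by $\pi_\theta(X,Y)$, is weakening.
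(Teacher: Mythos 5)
Your proof is correct and is essentially the paper's argument made explicit: the paper proves the lemma by a one-line contradiction (if no such pair $\theta_1<\theta_2$ existed, $(X,Y)$ would be $\Pi$-ordered), which tacitly relies on exactly your two observations, namely that $\Omega$-monotonicity plus binary-choice complementarity forces $\rho_{\theta_1}(Y)\le\rho_{\theta_2}(Y)$ for \emph{all} $\theta_1<\theta_2$, and that negating $\Pi$-orderedness supplies a pair with $\pi_{\theta_1}(X,Y)>\pi_{\theta_2}(X,Y)$. Your direct phrasing, including making the step $\rho_\theta(Y)=1-\rho_\theta(X)$ explicit, is just an unpacked version of the same bookkeeping.
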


Thus, under $\Omega$-monotonicity, one may find situations in which the choice probability for an alternative $Y$ will go up with $\theta$ while at the same time the strength of the preference for choosing $Y$ rather than $X$ -- as measured by $\pi_\theta(X,Y)$ -- goes down. Let us emphasize that we do not think that $\Omega$-monotonicity of a stochastic choice model is itself a problem. We merely claim that, in light of the lemma,  $\Pi$-monotonicity may be enough to ask: Lack of monotonicity for $\Omega$-ordered pairs that are not $\Pi$-ordered should not disqualify a stochastic choice model because it is not clear that one should expect or impose monotonicity. 

\begin{figure}
\hspace{1.5cm}
\begin{tikzpicture}[->, >=stealth, line width=1.5pt, node distance=2cm, auto]
    \begin{scope}
    \node (A) {$X$};

    \node[right of=A, xshift=2.5cm, yshift=1.5cm] (B) {};
    \node[right of=A, xshift=2.5cm] (C) {};
    \node[right of=A, xshift=2.5cm, yshift=-1.5cm] (D) {};

    \draw (A) -- (B) node[midway, above] {1/3} node[right] {12};
    \draw (A) -- (C) node[midway, above] {1/3} node[right] {9};
    \draw (A) -- (D) node[midway, below] {1/3} node[right] {4};
		\end{scope}
    \begin{scope}[xshift=6.5cm] 
    \node (A) {$Y$};

    \node[right of=A, xshift=2.5cm, yshift=1.5cm] (B) {};
    \node[right of=A, xshift=2.5cm, yshift=-1.5cm] (D) {};

    \draw (A) -- (B) node[midway, above] {2/3} node[right] {10};
    \draw (A) -- (D) node[midway, below] {1/3} node[right] {4};
		\end{scope}
\end{tikzpicture}
\caption{An example of two gambles that are $\Omega$-ordered but not $\Pi$-ordered.}
\label{fig1}
\end{figure}
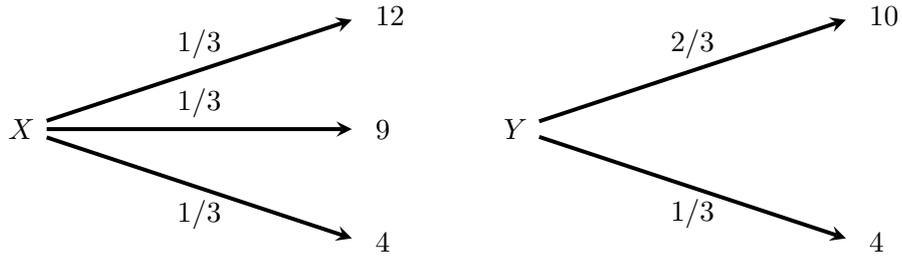

To make this discussion more concrete, consider the pair of gambles in Figure \ref{fig1} which is $\Omega$-ordered but not $\Pi$-ordered. 

\begin{Lemma}\label{lem4912}
The pair of lotteries $(X,Y)$ in Figure \ref{fig1} is $\Omega$-ordered but not $\Pi$-ordered.
\end{Lemma}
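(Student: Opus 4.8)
The plan is to carry out the argument for the CARA family, using Lemma \ref{lempi} as the main engine: it tells us that $\pi_\alpha(X,Y)$ has the same sign as the expected-utility difference $g(\alpha):=E[u_\alpha(Y)]-E[u_\alpha(X)]$, that $\pi_0(X,Y)=E[Y]-E[X]$, and that $\lim_{\alpha\to\infty}\pi_\alpha(X,Y)=\pi_\infty(X,Y)=\min_j y_j-\min_i x_i$. First I would record the two endpoint values. Since $E[X]=(12+9+4)/3=25/3$ and $E[Y]=\tfrac23\cdot 10+\tfrac13\cdot 4=8$, we get $\pi_0(X,Y)=8-\tfrac{25}{3}=-\tfrac13<0$, so the risk-neutral type strictly prefers $X$; and since both lotteries have worst outcome $4$, we get $\pi_\infty(X,Y)=4-4=0$. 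These two facts already frame the whole argument.

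For the $\Omega$-ordered claim I would determine the sign of $g(\alpha)$ for $\alpha\neq 0$. Writing $u_\alpha(x)=-\tfrac1\alpha e^{-\alpha x}$ and simplifying gives
\[
g(\alpha)=-\frac{1}{3\alpha}\bigl(2e^{-10\alpha}-e^{-9\alpha}-e^{-12\alpha}\bigr).
\]
Substituting $t=e^\alpha>0$ and factoring the bracket yields $2e^{-10\alpha}-e^{-9\alpha}-e^{-12\alpha}=-e^{-12\alpha}(t-1)(t^2-t-1)$, so the sign of $g(\alpha)$ equals the sign of $\tfrac{1}{\alpha}(t-1)(t^2-t-1)$. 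The key simplification is that $\tfrac{t-1}{\alpha}=\tfrac{e^\alpha-1}{\alpha}>0$ for every $\alpha\neq 0$, because $e^\alpha-1$ and $\alpha$ always share a sign; hence the sign of $g(\alpha)$ equals the sign of $t^2-t-1=e^{2\alpha}-e^\alpha-1$. This is negative precisely when $t<\varphi:=\tfrac{1+\sqrt5}{2}$, i.e. when $\alpha<\log\varphi$, and positive when $\alpha>\log\varphi$. Thus $g$, and with it $\pi_\alpha(X,Y)$, is negative for $\alpha<\log\varphi$, vanishes at $\alpha^\ast=\log\varphi$, and is positive for $\alpha>\log\varphi$: a single sign change from negative to positive, which is exactly $\Omega$-orderedness.

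Finally, for the failure of $\Pi$-orderedness I would use that $\pi_\alpha(X,Y)>0$ for every finite $\alpha>\alpha^\ast$, while $\lim_{\alpha\to\infty}\pi_\alpha(X,Y)=0$ by Lemma \ref{lempi}. Since $\pi_\alpha$ is continuous in $\alpha$ and takes a strictly positive value at some $\alpha_1>\alpha^\ast$ but tends to the strictly smaller limit $0$, there must be some $\alpha_2>\alpha_1$ with $\pi_{\alpha_2}(X,Y)<\pi_{\alpha_1}(X,Y)$; hence $\pi_\alpha(X,Y)$ is not increasing in $\alpha$ and the pair is not $\Pi$-ordered. The main obstacle, really the only non-routine step, is the sign analysis of the middle paragraph: one has to factor the exponential expression correctly and, crucially, notice that the spurious root $t=1$ is cancelled by the always-positive factor $(e^\alpha-1)/\alpha$, so that the golden-ratio root $t=\varphi$ is the unique crossing. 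I would also double-check that the crossing genuinely runs from negative to positive (rather than the reverse), so that the pair qualifies as $\Omega$-ordered and not merely single-crossing in the wrong direction.
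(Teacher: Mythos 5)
Your CARA computation is correct: the factorization $2e^{-10\alpha}-e^{-9\alpha}-e^{-12\alpha}=-e^{-12\alpha}(t-1)(t^2-t-1)$ with $t=e^\alpha$ checks out, the observation that $(e^\alpha-1)/\alpha>0$ kills the spurious root, and the resulting threshold $\alpha^\ast=\log\varphi\approx 0.481$ matches the value $\alpha=0.48$ quoted in the paper. Your argument for the failure of $\Pi$-orderedness -- strict positivity of $\pi_\alpha(X,Y)$ at some finite $\alpha_1>\alpha^\ast$ combined with the limit $\pi_\infty(X,Y)=4-4=0$ from Lemma \ref{lempi} and continuity -- is exactly the mechanism the paper uses.

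However, there is a scope gap: the lemma is stated under the standing Assumption \ref{A1}, i.e.\ for an \emph{arbitrary} admissible family $(u_\theta)$, and the paper's proof is family-free. It rewrites $X$ and $Y$ as compound lotteries paying $4$ with probability $1/3$ and otherwise playing $\tilde X$ ($9$ or $12$ with equal odds) or $\tilde Y$ (sure $10$), so that $E[u_\theta(Y)]-E[u_\theta(X)]$ has the sign of $u_\theta(10)-\tfrac12\bigl(u_\theta(9)+u_\theta(12)\bigr)$; then \eqref{PR} together with Theorem 1 of \citet{pratt1964} makes $\CE_\theta(\tilde X)$ strictly decreasing in $\theta$, and since $\CE_0(\tilde X)=10.5>10$ while $\CE_\infty(\tilde X)=9<10$, the single crossing follows for \emph{every} family satisfying Assumption \ref{A1}. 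Your explicit exponential factorization cannot be recycled for CRRA or other families, yet the generality matters downstream: Remark \ref{RemPiOmega} invokes this lemma to conclude that $\Omega$-ordered pairs that are not $\Pi$-ordered \emph{always} exist, and the text also quotes the CRRA threshold $\gamma=4.91$ for the same pair. As written, your proof establishes the lemma only for CARA; to close the gap you would either need to redo a separate (and messier) computation for each family of interest, or adopt the compound-lottery reduction to Pratt's theorem, which is the one idea your write-up is missing.
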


The intuition is straightforward: Two distinct types of agents are indifferent between the two lotteries. First, there are agents with intermediate levels of risk aversion $\theta$ who are also indifferent between receiving 10 for certain or 9 and 12 with equal probability. Under CARA utility, these agents correspond to the threshold $\alpha=0.48$ while under CRRA utility, the threshold is $\gamma = 4.91$. Second, there are the infinitely risk-averse agents with $\theta=\alpha=\gamma=\infty$ who are so focused on the downside that they are indifferent between any two lotteries with a worst-case outcome of 4.\footnote{As argued in Lemma \ref{CElimit} above, the preferences of infinitely risk-averse agents can be regarded as the legitimate limiting cases of the preferences that arise under large finite $\theta$. In particular, as argued in Remark \ref{remEU}, the indifference of infinitely risk-averse agents between these particular lotteries $X$ and $Y$ is qualitatively distinct from the ``indifference'' between all lotteries with payoffs greater than 1 suggested by taking the limit of expected CRRA or CARA utility.} In between those two extremes are agents who have a strict preference for the less risky option $Y$. Thus, the two lotteries are $\Omega$-ordered. Yet, not surprisingly, this preference is strongest at some intermediate level of risk aversion. In Figure \ref{fig2} which plots the premia $\pi_\theta(X,Y)$ for the cases of CARA and CRRA utility, we see that the compensating premium is maximal for, respectively, $\theta^*=\alpha^*=0.65$ and   $\theta^*=\gamma^*=6.00$. Accordingly, the two lotteries are not $\Pi$-ordered. 

\begin{figure}[h!]
     \centering
     \begin{subfigure}[b]{0.45\textwidth}
         \centering
         \includegraphics[width=\textwidth]{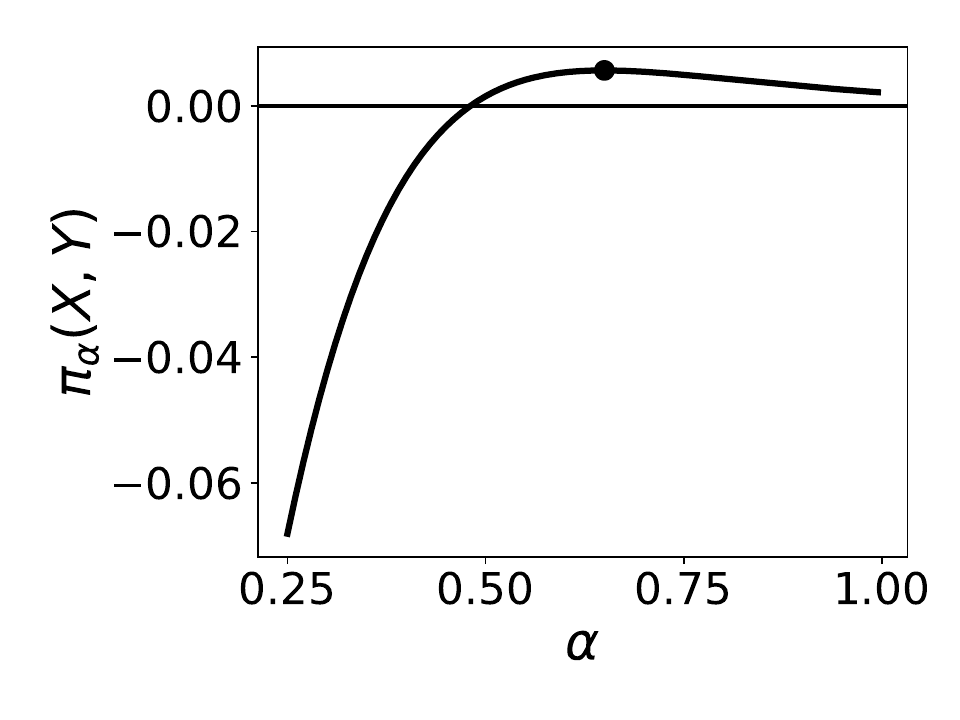}
         \caption{CARA}
         \label{fig:compensating_example_nonmonotonic_CARA}
     \end{subfigure}
     \hfill
     \begin{subfigure}[b]{0.45\textwidth}
         \centering
         \includegraphics[width=\textwidth]{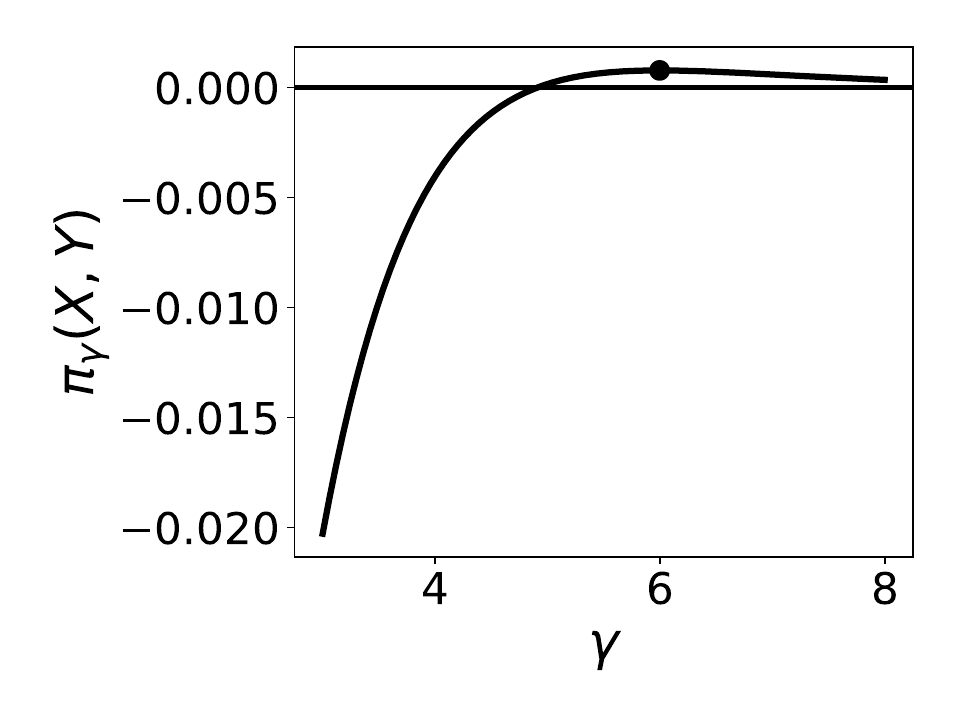}
         \caption{CRRA}
         \label{fig:compensating_example_nonmonotonic_CRRA}
     \end{subfigure}
        \caption{(Non-monotonic) compensating  premium for gambles $X$ and $Y$ from Figure \ref{fig1}. The dots indicate the maximum level of the premium.}
        \label{fig2}
\end{figure}

\begin{Remark}\label{RemPiOmega}
In particular, by Lemma \ref{lem4912}, there always exist $\Omega$-ordered pairs which are not $\Pi$-ordered. It thus follows from Lemma \ref{OmegaPi} that no stochastic choice model can be $\Omega$-monotonic if it has the property that a greater strength of preference in the sense of $\pi_{\theta_1}(X,Y)>\pi_{\theta_2}(X,Y)$ implies a larger choice probability, $\rho_{\theta_1}(Y) >  \rho_{\theta_2}(Y)$. In this sense, $\Omega$-monotonicity implies that the stochastic choice model $\rho$ is an ordinal model.
\end{Remark}

$\Omega$-monotonicity requires that choice probabilities remain monotonic despite the non-monotonicity of the strength of the underlying preferences evident from Figure \ref{fig2}. By suggesting to replace $\Omega$-monotonicity with $\Pi$-monotonicity, we do not suggest that choice probabilities \textit{have to} resemble the pattern of non-monotonicity seen in Figure \ref{fig2}. We merely suggest that violations of monotonicity in an example like this should not be used as evidence against a stochastic choice model. For instance, \citet{AB}'s use an example very much like the one in Figure \ref{fig1} to argue that a random utility model based on certainty equivalents, 
\begin{equation}\label{logitCERUM}
\rho_\theta^{\text{CE-RUM}}(X) = \frac{\exp(\lambda \CE_\theta(X))}{\exp(\lambda \CE_\theta(X))+\exp(\lambda \CE_\theta(Y))},
\end{equation}
should be avoided because it has  choice probabilities converging to 1/2  in the limit $\theta\rightarrow \infty$.\footnote{See the proof of their Corollary 2.} In light of the genuine indifference of the infinitely risk-averse agent, this asymptotic indifference may just as well be a feature and not a bug of the stochastic choice model.\footnote{ Of course, the argument brought forward here is just a refutation of a particular type of counterexample -- not a complete rehabilitation of certainty-equivalent-based random utility models. Indeed, as we discuss below, most certainty-equivalent-based random utility models fail to be $\Pi$-monotonic.}   
\begin{Remark}
\citet{Wilcox} works with a weaker notion of monotonicity than $\Omega$-mo\-no\-to\-ni\-ci\-ty, requiring merely that choice probabilities should be monotonic for all mean-preserving spreads, i.e., for all pairs $X$ and $Y$ where $Y$ can be written as $Y=X+S$ for some $S$ with $E[S|X]=0$. However, in this class, there are still pairs which are $\Omega$-ordered but not $\Pi$-ordered such as the pair of lotteries shown in Figure \ref{fig3}. For this example, both risk-neutral and infinitely risk-averse  agents are indifferent while all agents with $0<\theta <\infty$ strictly prefer $Y$ over $X$, implying once again that the preference is strongest somewhere in between. 
\end{Remark}

\begin{figure}
\hspace{1.5cm}
\begin{tikzpicture}[->, >=stealth, line width=1.5pt, node distance=2cm, auto]
    \begin{scope}
    \node (A) {$X$};

    \node[right of=A, xshift=2.5cm, yshift=1.5cm] (B) {};
    \node[right of=A, xshift=2.5cm] (C) {};
    \node[right of=A, xshift=2.5cm, yshift=-1.5cm] (D) {};

    \draw (A) -- (B) node[midway, above] {1/3} node[right] {11};
    \draw (A) -- (C) node[midway, above] {1/3} node[right] {9};
    \draw (A) -- (D) node[midway, below] {1/3} node[right] {4};
		\end{scope}
    \begin{scope}[xshift=6.5cm] 
    \node (A) {$Y$};

    \node[right of=A, xshift=2.5cm, yshift=1.5cm] (B) {};
    \node[right of=A, xshift=2.5cm, yshift=-1.5cm] (D) {};

    \draw (A) -- (B) node[midway, above] {2/3} node[right] {10};
    \draw (A) -- (D) node[midway, below] {1/3} node[right] {4};
		\end{scope}
\end{tikzpicture}
\caption{A mean preserving spread which is not $\Pi$-ordered.}
\label{fig3}
\end{figure}
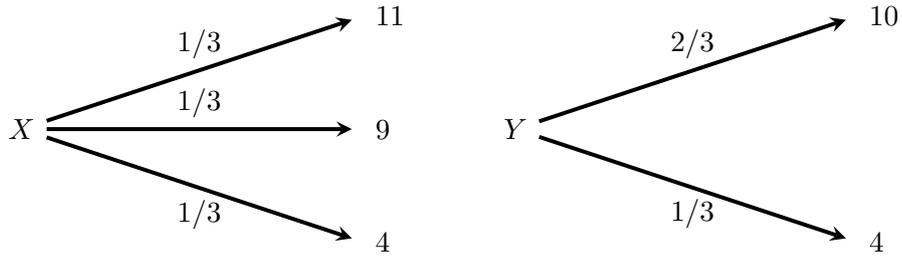

To put the examples from Figures \ref{fig1} and \ref{fig3} into context, it is useful to recall some results from the literature on choice under background risk from \citet{kihlstrom1981} and \citet{ross1981}. These papers study choice between mean-preserving spreads $Y$ and $X=Y+S$, $E[S|Y]=0$ and ask under which conditions a utility function $u$ is more risk-averse than a  utility function $v$ in the sense that the premium for taking up $X$ in place of $Y$ is always greater under $u$ than under $v$. It turns out that a ranking of $u$ and $v$ in terms of their Arrow-Pratt coefficients of risk aversion is not sufficient. \citet{ross1981} derives a necessary and sufficient condition on pairs of utility functions which is  not satisfied for pairs of CARA and CRRA utilities, implying that the parameters $\alpha$ and $\gamma$ do not represent risk aversion for general mean-preserving spreads. There will be mean-preserving spreads for which the more risk-averse agent is closer to being indifferent. The gambles from Figure \ref{fig3} are an example.   \citet{kihlstrom1981} add a positive result: As long as the utility functions $u_\theta$ exhibit decreasing absolute risk aversion and $S$ is independent of $Y$ -- so that the additional risk is the same across all outcomes of $Y$ -- risk premia are monotonic in $\theta$, cf. Lemma \ref{kihlstrom} below. 

\begin{Remark}
Recently, \citet{balter2024} have shown a result that illustrates the intuition that the non-monotonicity of premia for certain MPS arises due to excessive additional risk in favorable outcomes that matter less for more risk-averse types. For gambles $Y$ and $X=Y+hS$ with $h$ sufficiently small, they show that risk premia are increasing with $\theta$ if
\begin{equation}\label{BCS}
\mathcal{A}_\theta(y)\times \textnormal{Var}(S| Y=y)
\end{equation}
is decreasing in $y$. For CARA utility (with $\theta=\alpha>0$) the first factor is constant in $y$, suggesting that the amount of additional risk $\textnormal{Var}(S| Y=y)$ can only be smaller at more favorable outcomes $y$ but not larger. For CRRA utility (with $\theta=\gamma >0$), the first factor scales like $1/y$, suggesting that the variance of the additional risk $\textnormal{Var}(S| Y=y)$ may grow linearly with $y$ but not more strongly than that.
\end{Remark}

\section{From theory to implementation}\label{sec4}
In the previous section, we have proposed monotonicity of choice probabilities for all $\Pi$-ordered gambles as a consistency criterion for stochastic choice models. We thus propose $\Pi$-ordered gambles as a sort of test-set of gambles that can potentially be used for the elicitation of risk preferences when combined with a $\Pi$-monotonic stochastic choice model. In this section, we take two steps towards making this proposal operational. First, we show that most pairs of gambles that researchers have used to elicit risk preferences are actually $\Pi$-ordered. Second, we show that it is straightforward to implement random utility models that are $\Pi$-monotonic.

\subsection{Which pairs of gambles are $\Pi$-ordered?}\label{secPairs}
When discussing their notion of $\Omega$-ordered gambles, \citet{AB} introduce ``three examples of classes of $\Omega$-ordered pairs of gambles often used in applications, which serve to illustrate the large size of the class of $\Omega$-ordered pairs of gambles.'' In the following, we will have a look at each of these three classes and investigate what remains after moving from $\Omega$-ordered pairs to the more restrictive notion of $\Pi$-ordered pairs. The first class of pairs of gambles are pairs where one alternative is not stochastic. These gambles are always $\Pi$-ordered. This is good news because, indeed, these types of pairs frequently used in empirical research, see, e.g. \citet{bruhin2010risk} for one well-known example.

\begin{Lemma}\label{Xy}
Let $(X,Y)$ be a pair of gambles where $Y$ takes the value $y$ with probability 1. Then $(X,Y)$ is $\Pi$-ordered. 
\end{Lemma}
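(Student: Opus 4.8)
The plan is to exploit that $Y$ is riskless, so $E[u_\theta(Y)]=u_\theta(y)$ and $\CE_\theta(Y)=y$ for every $\theta$, and to reduce the claim to the monotone behavior of $\CE_\theta(X)$ in $\theta$. The one classical ingredient I would use is the comparative statics behind \citet{pratt1964}: by \eqref{PR} in Assumption \ref{A1}, for $\theta_1<\theta_2$ the function $\phi=u_{\theta_2}\circ u_{\theta_1}^{-1}$ is increasing and concave, so Jensen's inequality gives $E[u_{\theta_2}(Z)]=E[\phi(u_{\theta_1}(Z))]\le \phi(E[u_{\theta_1}(Z)])=u_{\theta_2}(\CE_{\theta_1}(Z))$ and hence $\CE_{\theta_2}(Z)\le \CE_{\theta_1}(Z)$ for every lottery $Z$. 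In words, certainty equivalents are non-increasing in $\theta$; I will use this monotonicity repeatedly. Since $\Pi$-orderedness is exactly the statement that $\theta\mapsto \pi_\theta(X,Y)$ is increasing (in the weak sense of the paper's convention), this is what I must establish.

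I would then split into the two regimes dictated by Definition \ref{Defpi}, phrasing the split in terms of $\CE_\theta(X)$ via Lemma \ref{lempi}: since $\pi_\theta(X,Y)\ge 0$ iff $E[u_\theta(Y)]\ge E[u_\theta(X)]$ iff $y\ge \CE_\theta(X)$, the premium is non-negative exactly when $\CE_\theta(X)\le y$. In the regime $\CE_\theta(X)>y$ the definition gives $\pi_\theta(X,Y)=-\pi_\theta(Y,X)$, and because $Y\equiv y$ the defining equation for $\pi_\theta(Y,X)$ collapses to $u_\theta(y+\pi_\theta(Y,X))=E[u_\theta(X)]$, i.e. $\pi_\theta(X,Y)=y-\CE_\theta(X)$. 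Monotonicity here is immediate from the certainty-equivalent monotonicity above: $\CE_\theta(X)$ is non-increasing, so $y-\CE_\theta(X)$ is non-decreasing.

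For the regime $\CE_\theta(X)\le y$, applying $u_\theta^{-1}$ to \eqref{CP} shows the premium solves $\CE_\theta(X+\pi_\theta(X,Y))=y$. Here I would run a shift-comparison argument: fix $\theta_1<\theta_2$ in this regime and write $\pi_i=\pi_{\theta_i}(X,Y)$. Applying the monotonicity fact to the lottery $X+\pi_1$ gives $\CE_{\theta_2}(X+\pi_1)\le \CE_{\theta_1}(X+\pi_1)=y=\CE_{\theta_2}(X+\pi_2)$, and since $c\mapsto \CE_{\theta_2}(X+c)$ is strictly increasing this forces $\pi_1\le \pi_2$. Thus the premium is non-decreasing within each regime.

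It remains to glue the regimes, and this is where a little care is the main (if modest) obstacle, precisely because the compensating premium is built by two different constructions on the two sides — a feature forced by the support issue flagged in the Remark after Definition \ref{Defpi}, since $X$ cannot in general be shifted downward. The clean way around it is to note that $\CE_\theta(X)$ is non-increasing, so once $\CE_\theta(X)\le y$ holds it continues to hold: as $\theta$ grows we pass from the first regime to the second at most once and never back. Cross-regime comparisons $\theta_1<\theta_2$ then need no computation, since $\pi\le 0$ in the first regime and $\pi\ge 0$ in the second, giving $\pi_{\theta_1}\le 0\le \pi_{\theta_2}$. Combined with the within-regime monotonicity, this yields that $\theta\mapsto \pi_\theta(X,Y)$ is increasing, i.e. $(X,Y)$ is $\Pi$-ordered; the continuity of $\pi_\theta(X,Y)$ from Lemma \ref{lempi} serves merely as a consistency check that the two formulas agree (both equal $0$) at the switch point.
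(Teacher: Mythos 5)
Your proof is correct, and while it shares the paper's two-regime skeleton, it substitutes elementary arguments for both of the paper's citations, so the key step is genuinely different. The paper's proof works with the same two premia you identify --- the equivalent premium $\pi^e(\theta)=y-\CE_\theta(X)$ where $X$ is preferred and the compensating premium solving $u_\theta(y)=E[u_\theta(X+\pi^c(\theta))]$ where $Y$ is preferred, glued by noting both are increasing and vanish simultaneously at indifference --- but it obtains monotonicity of $\pi^e$ by citing Theorem 1 of \citet{pratt1964} and then transfers it to $\pi^c$ by invoking the Lemma in Section 2 of \citet{kimball1990}. You instead prove the Pratt input directly via Jensen's inequality applied to $\phi=u_{\theta_2}\circ u_{\theta_1}^{-1}$, whose concavity indeed follows from the pointwise ordering \eqref{PR} since $\phi''(t)$ has the sign of $\mathcal{A}_{\theta_1}(x)-\mathcal{A}_{\theta_2}(x)$ at $x=u_{\theta_1}^{-1}(t)$; and you replace Kimball's transfer lemma with the two-line shift comparison
\[
\CE_{\theta_2}(X+\pi_1)\;\le\;\CE_{\theta_1}(X+\pi_1)\;=\;y\;=\;\CE_{\theta_2}(X+\pi_2),
\]
combined with strict monotonicity of $c\mapsto\CE_{\theta_2}(X+c)$ --- an argument that is available precisely because $Y$ is riskless, so the target certainty equivalent $y$ is $\theta$-free on both sides of the comparison. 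Your gluing step is also handled slightly more explicitly than in the paper: you note that the regime $\CE_\theta(X)\le y$ is absorbing as $\theta$ grows (by CE-monotonicity), so cross-regime pairs are ordered by sign alone, whereas the paper simply observes that both premium expressions increase and vanish together at indifference; and you are right that the continuity of $\pi_\theta(X,Y)$ from Lemma \ref{lempi} is not actually needed. What the paper's route buys is brevity and uniformity: the identical Pratt--Kimball citation pattern is reused verbatim in the proof of Lemma \ref{kihlstrom}, where the premium comparison is run through the indirect utility functions $v_\theta(\cdot)=E[u_\theta(Y+\cdot)]$. What your route buys is self-containedness: the lemma is proved from Assumption \ref{A1} and Definition \ref{Defpi} alone, with no external premium machinery, and you correctly respect the support issue (only upward shifts of $X$, only the riskless $y$ shifted in the negative regime) that motivated the compensating-premium definition in the first place.
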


The second class of pairs of gambles are mean-preserving spreads. As discussed in the previous section and in line with classical results from the literature on background risk, we need to exclude those mean-preserving spreads for which risk premia are non-monotonic in the risk aversion parameter $\theta$.  Since the interpretation of an increase in $\theta$ as an increase in the level of risk aversion is problematic anyway when it comes to these pairs of gambles, we feel that not much is lost by excluding them from the set of gambles that can safely be used in the elicitation of risk aversion. Building on the main result of \citet{kihlstrom1981}, we do however have a positive result for  mean-preserving spreads that add the same independent risk $S$ to all outcomes of $Y$ as long as all the utility functions $u_\theta$ satisfy decreasing absolute risk aversion. The latter condition is satisfied, e.g., for CARA utility and for CRRA utility with $\gamma \geq 0$.

\begin{Lemma}\label{kihlstrom}
Let $(X,Y)$ be a pair of gambles with $X=Y+S$ such that $Y$ and $S$ are independent. Moreover, assume that $\mathcal{A}_\theta(x)$ is  decreasing in $x$ for all $\theta$.  Then $(X,Y)$ is $\Pi$-ordered. 
\end{Lemma}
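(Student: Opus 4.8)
The plan is to reduce the claim to a comparison of ordinary (single-gamble) risk premia by integrating out the common background lottery $Y$, and then to apply the Arrow--Pratt comparative-statics machinery together with the decreasing-absolute-risk-aversion (DARA) hypothesis exactly in the spirit of \citet{kihlstrom1981}. Fix $\theta_1<\theta_2$; by \eqref{PR} the utility $u_{\theta_2}$ is more risk averse than $u_{\theta_1}$ in the Arrow--Pratt sense, and I must show $\pi_{\theta_1}(X,Y)\le\pi_{\theta_2}(X,Y)$. The key device is the \emph{derived utility} $U_\theta(t):=E[u_\theta(Y+t)]$, which is $C^2$ (a finite sum) and inherits monotonicity and, for $\theta>0$, concavity. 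Using independence of $Y$ and $S$, conditioning on $S=s$ gives $E[u_\theta(Y+S+\pi)]=E_S[U_\theta(S+\pi)]$, so the defining equation \eqref{CP} for $X=Y+S$ collapses to $E_S[U_\theta(S+\pi_\theta)]=U_\theta(0)$. In other words, $\pi_\theta(X,Y)$ is simply the compensating risk premium of the single mean-zero gamble $S$ under the utility $U_\theta$ --- the background randomness has been absorbed into $U_\theta$.

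Granting for the moment that the derived utilities are ordered, i.e. $U_{\theta_2}$ is more risk averse than $U_{\theta_1}$, the comparison is immediate. Write $U_{\theta_2}=G\circ U_{\theta_1}$ with $G$ increasing and concave, following \citet{pratt1964}. Evaluating at $\pi=\pi_{\theta_1}$ and using Jensen's inequality, $E_S[U_{\theta_2}(S+\pi_{\theta_1})]=E_S[G(U_{\theta_1}(S+\pi_{\theta_1}))]\le G\big(E_S[U_{\theta_1}(S+\pi_{\theta_1})]\big)=G(U_{\theta_1}(0))=U_{\theta_2}(0)$. Since $\pi\mapsto E_S[U_{\theta_2}(S+\pi)]$ is increasing, this forces $\pi_{\theta_2}\ge\pi_{\theta_1}$, as desired. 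By Lemma \ref{lempi} the sign of $\pi_\theta(X,Y)$ equals the sign of $E[u_\theta(Y)]-E[u_\theta(X)]$, which for a mean-preserving spread ($E[S]=0$) is positive exactly for the risk-averse types and vanishes at $\theta=0$ where $\pi_0(X,Y)=0$; hence it suffices to argue monotonicity separately on the risk-averse range (where the premium is the genuine upward-shift premium treated above) and on the risk-loving range (which is entirely symmetric, with $G$ convex and the reflected premium equal to a negative certainty equivalent of $S$), the two pieces joining continuously at that single sign change.

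The real content --- and the step I expect to be the main obstacle --- is establishing that the derived utility preserves the Arrow--Pratt ordering, namely $-U_{\theta_2}''/U_{\theta_2}'\ge -U_{\theta_1}''/U_{\theta_1}'$ pointwise; this is precisely where the DARA assumption is indispensable and is the heart of \citet{kihlstrom1981}. Writing $u:=u_{\theta_1}$ and $v:=u_{\theta_2}$, one expresses the coefficient of $U_\theta$ as a reweighted expectation $-U_\theta''(t)/U_\theta'(t)=E^{\theta}[\mathcal{A}_\theta(Y+t)]$, where $E^{\theta}$ denotes expectation under the tilted weights proportional to $u_\theta'(Y+t)$. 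The pointwise inequality $\mathcal{A}_{\theta_2}\ge\mathcal{A}_{\theta_1}$ from \eqref{PR} handles the integrand; it then remains to compare the two tilts. Because $v=G\circ u$ with $G$ concave, the likelihood ratio of the $\theta_2$-weights to the $\theta_1$-weights is proportional to $G'(u(Y+t))$, which is decreasing in the outcome, so the $\theta_2$-tilt shifts mass toward low outcomes; since $\mathcal{A}_{\theta_1}$ is \emph{decreasing} by DARA, this shift can only raise its expectation. Chaining $E^{\theta_2}[\mathcal{A}_{\theta_2}]\ge E^{\theta_2}[\mathcal{A}_{\theta_1}]\ge E^{\theta_1}[\mathcal{A}_{\theta_1}]$ then yields the ordering. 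I would either carry out this tilting and monotone-likelihood-ratio argument in full or, more economically, invoke the corresponding theorem of \citet{kihlstrom1981} after checking its hypotheses; the essential point is that DARA is exactly the ingredient that lets one sign the comparison of the two tilted expectations, and without it the monotonicity genuinely fails --- which is the very phenomenon the preceding section emphasizes.
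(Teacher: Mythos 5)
Your proposal is correct and follows essentially the same route as the paper's proof: reduce to the derived utility $v_\theta(t)=E[u_\theta(Y+t)]$, use DARA together with the ordering \eqref{PR} to conclude (as in \citet{kihlstrom1981}) that the derived utilities are again Arrow--Pratt ordered, and then deduce monotonicity of both the compensating and the equivalent premium, gluing the two pieces at the sign change exactly as in the proof of Lemma \ref{Xy}. The only difference is presentational: where the paper imports its ingredients by citation (Theorem 1 of \citet{pratt1964}, the lemma of \citet{kimball1990}, and the theorem of \citet{kihlstrom1981}), you re-derive them inline via Jensen's inequality and a monotone-likelihood-ratio tilting argument, which is a legitimate self-contained substitute.
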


Third, there is the class of pairs of binary lotteries with identical outcomes as used in \citet{andersen2008eliciting}, \citet{von2011heterogeneity} and many other papers. In the following proposition, we verify directly that such pairs of lotteries are $\Pi$-ordered for the classes of CARA and CRRA utility functions. 

\begin{Proposition}\label{binaryProp}
Consider a pair of binary lotteries $(X,Y)$ such that $X$ takes values $a$ with probability $p\in(0,1)$ and $b$ with probability $1-p$ while $Y$ takes values $c$ with probability $p\in(0,1)$ and $d$ with probability $1-p$. Moreover assume that $0<a<b$, $0<c<d$ as well as $d-c<b-a$ so that the outcomes of $Y$ are less spread-out than those of $X$. If the family of utility functions $(u_\theta)$ is the family of CARA or CRRA functions, then the pair $(X,Y)$ is $\Pi$-ordered. 
\end{Proposition}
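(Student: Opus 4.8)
The plan is to show directly that the compensating premium $\pi_\theta(X,Y)$ is increasing in $\theta$, treating the CARA and CRRA families through a single unified argument based on implicit differentiation. Write $\dot u_\theta=\partial_\theta u_\theta$. On the region of $\theta$ where $E[u_\theta(X)]\le E[u_\theta(Y)]$, the premium solves $E[u_\theta(X+\pi_\theta)]=E[u_\theta(Y)]$; differentiating this identity in $\theta$ and using $E[u_\theta'(X+\pi_\theta)]>0$ gives
\[
\pi_\theta' = \frac{E[\dot u_\theta(Y)]-E[\dot u_\theta(X+\pi_\theta)]}{E[u_\theta'(X+\pi_\theta)]},
\]
so that $\pi_\theta'\ge 0$ is equivalent to $E[\dot u_\theta(X+\pi_\theta)]\le E[\dot u_\theta(Y)]$. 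On the complementary region I would instead shift the lower-utility lottery $Y$ upward (thereby avoiding any support issue from shifting $X$ downward in the CRRA case) and use the antisymmetry $\pi_\theta(X,Y)=-\pi_\theta(Y,X)$ from Definition~\ref{Defpi}; the analogous computation there reduces the claim to $E[\dot u_\theta(X)]\le E[\dot u_\theta(Y+\pi_\theta(Y,X))]$. Continuity of $\pi_\theta$ in $\theta$ (Lemma~\ref{lempi}) then stitches the two regions together, and the boundary cases $\alpha=0$ and $\gamma=1$ follow by continuity as well.

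In both regions the inequality to be proven has the same structure: the more spread-out of the two binary lotteries (namely $X$, or its upward shift) must have the smaller expectation of $\dot u_\theta$, given that the two lotteries have equal expected utility. First I would argue that equal probabilities plus equal expected utility force the utility values to be nested. Writing $\tilde X=X+\pi_\theta$ with outcomes $\tilde a<\tilde b$ and $Y$ with outcomes $c<d$, the inequality $\tilde b-\tilde a=b-a>d-c$ together with $E[u_\theta(\tilde X)]=E[u_\theta(Y)]$ rules out $\tilde a>c$ (that would force both $\tilde a>c$ and $\tilde b>d$, hence $E[u_\theta(\tilde X)]>E[u_\theta(Y)]$), and then equal expected utility gives $\tilde b\ge d$. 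Hence $u_\theta(\tilde a)\le u_\theta(c)\le u_\theta(d)\le u_\theta(\tilde b)$, so in utility units $u_\theta(\tilde X)$ is a mean-preserving spread of $u_\theta(Y)$. Introducing $\Psi_\theta(v)=\dot u_\theta(u_\theta^{-1}(v))$, which re-expresses $\dot u_\theta$ as a function of the utility level $v$, the desired inequality becomes $E[\Psi_\theta(u_\theta(\tilde X))]\le E[\Psi_\theta(u_\theta(Y))]$, which follows from the Rothschild--Stiglitz characterization precisely when $\Psi_\theta$ is concave.

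It therefore remains to verify concavity of $\Psi_\theta$ for the two families, which is the computational heart of the argument. A direct calculation gives $\Psi_\theta''(v)=1/(\alpha v)$ in the CARA case and $\Psi_\theta''(v)=-1/((1-\gamma)v)$ in the CRRA case. In the CARA case $\alpha v<0$ for all admissible $x$ (since $v=u_\alpha(x)$ has the opposite sign of $\alpha$), and in the CRRA case $(1-\gamma)v=x^{1-\gamma}>0$; in both cases $\Psi_\theta''<0$, so $\Psi_\theta$ is concave. For CARA one could alternatively invoke Lemma~\ref{CARApi} and reduce everything to a single cross-partial computation on $\CE_\alpha(Y)-\CE_\alpha(X)$, but the unified route above avoids splitting into cases.

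I expect the main obstacle to be the concavity step: it is what genuinely uses the CARA/CRRA structure, and the fact that $\Psi_\theta''$ keeps a constant sign across the whole parameter range (including the risk-loving region $\theta<0$) is a fortunate cancellation rather than something forced by general principles. The secondary technical points --- justifying differentiability of $\pi_\theta$ via the implicit function theorem, handling the two sign regimes without hitting the support boundary for CRRA, and the nested-support argument --- are comparatively routine, but must be handled carefully to make the reduction to concavity of $\Psi_\theta$ airtight.
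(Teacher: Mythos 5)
Your proof is correct, and its computational core is in fact the same as the paper's, though you package it in a more unified and abstract way. The paper likewise proceeds by implicit differentiation (of $F^c(\pi,\theta)$ and $F^e(\pi,\theta)$), likewise derives the nesting $a+\pi^c<c<d<b+\pi^c$ (resp.\ $a<c-\pi^e<d-\pi^e<b$), and then signs $\partial_\theta F$ by an explicit chord comparison for the convex function $h(z)=z\log z$ evaluated at the transformed outcomes $\tilde a=a^{1-\gamma}$, etc.\ (CRRA) or $\tilde a=e^{-\alpha(a+\pi^c)}$, etc.\ (CARA), which have equal $p$-weighted means and nested order. Your concavity criterion is literally this same fact in disguise: with $v=u_\theta(x)$ one has $\Psi_\gamma(v)=\frac{v}{1-\gamma}-\frac{h((1-\gamma)v)}{(1-\gamma)^2}$ in the CRRA case and $\Psi_\alpha(v)=-\frac{v}{\alpha}-\frac{h(-\alpha v)}{\alpha^2}$ in the CARA case, so concavity of $\Psi_\theta$ is equivalent to convexity of $h$ up to affine terms in $v$ --- and the affine terms are irrelevant precisely because the two (shifted) lotteries have equal means in utility units. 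What your packaging buys: a single computation covering both families and both premium regimes, where the paper spells out two of the four cases ($\pi^e$ for CRRA, $\pi^c$ for CARA) and declares the other two analogous; a transparent interpretation (after the shift, the spread lottery is a two-point mean-preserving spread of the other \emph{in utility units}, and $E[\dot u_\theta]$ falls under such spreads iff $\dot u_\theta\circ u_\theta^{-1}$ is concave); a reusable sufficient condition for $\Pi$-orderedness of equal-probability binary pairs under any smooth family; and a built-in explanation of why the sign of $\pi_\theta'$ is invariant to reparametrizations $\tilde u_\theta=A(\theta)u_\theta+B(\theta)$, since these only change $\Psi_\theta$ by scaling and affine terms. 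What the paper's route buys is that each case is elementary and self-contained, with the chord comparison made by hand rather than via Rothschild--Stiglitz (for two-point distributions the two are the same argument). Your handling of the remaining technicalities matches the paper's: the boundary cases $\alpha=0$ and $\gamma=1$ by continuity of the premium, the CRRA support issue avoided because in each regime the lower-utility lottery is shifted upward, and the two sign regimes stitched together by continuity at indifference points, where the premium vanishes; I verified your two second-derivative formulas $\Psi''_\alpha(v)=1/(\alpha v)$ and $\Psi''_\gamma(v)=-1/((1-\gamma)v)$ and the sign claims, and they are correct.
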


\begin{Remark}
The fact that the lotteries in Figure \ref{fig1} are not $\Pi$-ordered shows that there is no hope for generalizing this proposition to lotteries with three or more outcomes.
\end{Remark}

Compared to the list of $\Omega$-ordered pairs of lotteries given by \citet{AB}, we thus mainly lose certain mean-preserving spreads. In particular, we see that most of the gambles used in applied work are actually $\Pi$-ordered, at least for the important classes of CARA and CRRA utility functions. Finally, let us emphasize that it is always possible to plot the function $\pi_\theta({X,Y})$ to check whether it is monotonic so that the pair $(X,Y)$ is $\Pi$-ordered. While the results of this section formally prove that the outcome of such a plot will always be positive in important classes of lottery pairs, it is easy to verify $\Pi$-orderedness computationally in any concrete examples of interest.  

\subsection{Which random utility models are $\Pi$-monotonic?}

In this section so far, we have argued that many pairs of lotteries used in applied research satisfy our notion of $\Pi$-orderedness for common choices of families of utility functions. Now, we move on to a next topic that goes hand in hand with the former one: We study whether there are interesting random utility models that are $\Pi$-monotonic so that their choice probabilities have the right monotonicity behavior in the risk aversion parameter $\theta$ when tested on $\Pi$-ordered pairs. 

For a collection of gambles $X_1,\ldots,X_n$, consider a set of preference indices $V_i(\theta)$ that encode how much an agent with utility function $u_\theta$ likes gamble $X_i$. Classical examples include expected utility $V_i(\theta)=E[u_\theta(X_i)]$ and certainty equivalent $V_i(\theta)=\CE_\theta(X_i)$. In a random utility model, one assumes that there is a collection $\xi_1,\ldots,\xi_n$ of independent, identically distributed taste shocks which are drawn from a distribution with a continuous and strictly increasing cumulative distribution function, and that the probability that an agent with preference type $\theta$ chooses gamble $X_i$ is given by 
\begin{equation}\label{genRUM}
\rho^{\textnormal{RUM}}_\theta(X_i)=\mathbb{P}\left(
V_i(\theta)+ \frac{\xi_i}{\lambda} = \max_{j} V_j(\theta)+ \frac{\xi_j}{\lambda}
\right).
\end{equation}
Here, $\lambda>0$ is a precision parameter that adjusts the size of the shocks. This is the general form of a RUM. We write $\mathbb{P}$ to emphasize that  the probability is taken with respect to the shocks $\xi_i$, not the outcomes of the underlying lottery.  Common choices for the distribution of the taste shocks $\xi$ are the normal distribution and the extreme value type I distribution. Combining the latter distribution with $V_i(\theta)=E[u_\theta(X_i)]$ gives the classical $EU$-based logit RUM shown in equation \eqref{logitRUM} while $V_i(\theta)=\CE_\theta(X_i)$ leads to \eqref{logitCERUM}. When choosing between two alternatives $X$ and $Y$, the probability of choosing $Y$ can be written as an increasing function of the difference in the preference indices $V_Y-V_X$ via
\begin{equation}\label{genRUMXY}
\rho^{\textnormal{RUM}}_\theta(Y)=\mathbb{P}\left(
V_Y(\theta)+ \frac{\xi_Y}{\lambda} > V_X(\theta)+ \frac{\xi_X}{\lambda} 
\right) = \Phi(\lambda (V_Y(\theta)-V_X(\theta)))
\end{equation}
where $\Phi$ denotes the strictly increasing cumulative distribution function of $\xi_X-\xi_Y$.

In the remainder of this section, we first introduce the $\Pi$-based RUM, a random utility model that is $\Pi$-monotonic by construction. We then present some positive results about $\Pi$-monotonicity of CE-based RUMs  such as the logit CE-based RUM from equation \eqref{logitCERUM}. For a parameter value $\theta$ and a collection of gambles $X_1,\ldots,X_n$ denote by $X_{\max}(\theta)$ a solution to 
\[
\max_{X\in\{X_1,\ldots,X_n\}} \,E[u_\theta(X)],
\]
i.e., an optimal gamble among the $X_i$ from the perspective of an agent with preference type $\theta$. Then we set 
\[
V_i(\theta)=-\pi_\theta({X_i,X_{\max}(\theta)}).
\]
We call the resulting random utility model $\rho^{\Pi\textnormal{-RUM}}$ the $\Pi$-based RUM. In principle, one should be worried that the definition of $V_i$ is ambiguous if there are multiple optimal gambles among the $X_i$ but this is not the case because the selection among optimal gambles does not matter for the premium:
\begin{Lemma}\label{lemtop}
Consider lotteries $X_1$, $X_2$ and $X_3$ such that $$E[u_\theta(X_1)]\leq E[u_\theta(X_2)]=E[u_\theta(X_3)].$$ Then $\pi_\theta(X_1,X_2)=\pi_\theta(X_1,X_3)$
\end{Lemma}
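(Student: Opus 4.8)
The plan is to prove the statement directly from the definition of the compensating premium in Definition \ref{Defpi}. The key observation is that $\pi_\theta(X_1, X_2)$ and $\pi_\theta(X_1, X_3)$ are each defined as the unique non-negative solution to an equation of the form $E[u_\theta(X_1 + \pi)] = E[u_\theta(\cdot)]$, where the right-hand side is the expected utility of the higher-utility lottery. Since we are told $E[u_\theta(X_1)] \leq E[u_\theta(X_2)] = E[u_\theta(X_3)]$, both $X_2$ and $X_3$ have the same expected utility level, so the two defining equations have identical right-hand sides.

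Concretely, let $\pi = \pi_\theta(X_1, X_2)$. By definition it is the unique non-negative real number satisfying
\[
E[u_\theta(X_1 + \pi)] = E[u_\theta(X_2)].
\]
Now I would simply substitute $E[u_\theta(X_2)] = E[u_\theta(X_3)]$ into the right-hand side, yielding
\[
E[u_\theta(X_1 + \pi)] = E[u_\theta(X_3)].
\]
This shows that $\pi$ also satisfies the defining equation for $\pi_\theta(X_1, X_3)$. Since $E[u_\theta(X_1)] \leq E[u_\theta(X_3)]$, we are in the case of Definition \ref{Defpi} where $\pi_\theta(X_1, X_3)$ is given as the \emph{unique} non-negative solution to exactly this equation, and $\pi \geq 0$ by construction. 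By uniqueness, $\pi = \pi_\theta(X_1, X_3)$, which is the claim.

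There is essentially no hard part here; the result is an immediate consequence of the fact that the premium depends on the target lottery only through its expected utility level, together with the uniqueness asserted (and proved in the appendix) in Definition \ref{Defpi}. The only point requiring a moment's care is confirming that both premia genuinely fall under the first case of the definition, i.e. that $E[u_\theta(X_1)] \leq E[u_\theta(X_2)]$ and $E[u_\theta(X_1)] \leq E[u_\theta(X_3)]$ both hold, so that neither premium is defined via the sign-flipping convention for the reversed pair; this is guaranteed by the hypothesis $E[u_\theta(X_1)] \leq E[u_\theta(X_2)] = E[u_\theta(X_3)]$. Hence the selection among equally optimal gambles is irrelevant for the value $V_i(\theta)$, and the $\Pi$-based RUM is well-defined.
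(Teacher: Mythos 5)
Your proof is correct and is essentially identical to the paper's own argument: the paper likewise observes that $E[u_\theta(X_1+\pi_\theta(X_1,X_2))] = E[u_\theta(X_2)] = E[u_\theta(X_3)]$, so that $\pi_\theta(X_1,X_2)$ satisfies the defining equation for $\pi_\theta(X_1,X_3)$ and uniqueness forces equality. Your added check that both premia fall under the non-negative case of Definition \ref{Defpi} is a sensible (if brief) elaboration of the same one-line idea.
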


The intuition behind the $\Pi$-based RUM is simply that the agent compares the differences between the various lotteries in monetary terms. In particular, the agent assigns to each lottery the amount  $\pi_\theta({X_i,X_{\max}(\theta)})$ it would take to  make this lottery optimal. Minus this amount determines the value $V_i$ of the lottery.\footnote{In general, the $\Pi$-based RUM is thus a ``contextual'' stochastic choice model in the sense of \citet{Wilcox}, meaning that $V_i$ depends not only on $X_i$ but on the whole consideration set $X_1,\ldots,X_n$.} An additive error in the calculation of the $V_i$ then determines the choice probabilities via \eqref{genRUM}.

We next verify that the $\Pi$-based RUM is indeed $\Pi$-monotonic:
\begin{Proposition}\label{PiRUMPi}
The $\Pi$-based RUM is  $\Pi$-monotonic.
\end{Proposition}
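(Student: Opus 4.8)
The plan is to prove that the $\Pi$-based RUM is monotonic at every $\Pi$-ordered pair $(X,Y)$, i.e., that $\rho_\theta^{\Pi\text{-RUM}}(X)$ is decreasing in $\theta$ whenever $\pi_\theta(X,Y)$ is increasing in $\theta$. Since we are in the two-lottery case, I would start from the reduced form \eqref{genRUMXY}, writing the choice probability for $Y$ as $\rho_\theta^{\Pi\text{-RUM}}(Y)=\Phi(\lambda(V_Y(\theta)-V_X(\theta)))$ with $\Phi$ strictly increasing. Because $\Phi$ is increasing and $\lambda>0$, it suffices to show that the index difference $V_Y(\theta)-V_X(\theta)$ is increasing in $\theta$; equivalently, that $V_X(\theta)-V_Y(\theta)$ is decreasing, which then gives that $\rho_\theta^{\Pi\text{-RUM}}(X)=\Phi(\lambda(V_X(\theta)-V_Y(\theta)))$ is decreasing in $\theta$, as required.

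The key step is to compute $V_Y(\theta)-V_X(\theta)$ explicitly for the $\Pi$-based RUM. The definition sets $V_i(\theta)=-\pi_\theta(X_i,X_{\max}(\theta))$, where $X_{\max}(\theta)$ is an expected-utility maximizer among $\{X,Y\}$. By Lemma \ref{lemtop}, this value is well-defined regardless of how ties are broken. I would split into the two cases determined by which lottery is optimal. In the case $E[u_\theta(Y)]\geq E[u_\theta(X)]$ we may take $X_{\max}(\theta)=Y$, so $V_Y(\theta)=-\pi_\theta(Y,Y)=0$ and $V_X(\theta)=-\pi_\theta(X,Y)$, giving $V_Y(\theta)-V_X(\theta)=\pi_\theta(X,Y)$. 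In the opposite case $E[u_\theta(X)]\geq E[u_\theta(Y)]$ we take $X_{\max}(\theta)=X$, so $V_X(\theta)=0$ and $V_Y(\theta)=-\pi_\theta(Y,X)=\pi_\theta(X,Y)$ by the sign convention in Definition \ref{Defpi}, which again yields $V_Y(\theta)-V_X(\theta)=\pi_\theta(X,Y)$. Hence in both cases the index difference equals exactly $\pi_\theta(X,Y)$, and the two cases agree on the overlap (where the common value is $0$) by Lemma \ref{lempi}, so there is no inconsistency at the switching point.

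With this identity in hand, the conclusion is immediate: for a $\Pi$-ordered pair, $\pi_\theta(X,Y)$ is increasing in $\theta$ by Definition \ref{def2}(ii), so $V_Y(\theta)-V_X(\theta)=\pi_\theta(X,Y)$ is increasing, and therefore $\rho_\theta^{\Pi\text{-RUM}}(X)=\Phi(-\lambda\pi_\theta(X,Y))$ is decreasing in $\theta$. This establishes monotonicity at the pair $(X,Y)$, and since the pair was an arbitrary $\Pi$-ordered pair, the $\Pi$-based RUM is $\Pi$-monotonic.

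The main obstacle I anticipate is not analytic but bookkeeping: making sure the two-case argument is airtight across the threshold where the optimal gamble switches from $X$ to $Y$. The construction of $V_i$ via $X_{\max}(\theta)$ embeds a discontinuous selection, and one must confirm that the resulting index difference is nevertheless continuous and well-defined. Lemma \ref{lemtop} handles ties, and the sign convention $\pi_\theta(X,Y)=-\pi_\theta(Y,X)$ together with the indifference characterization in Lemma \ref{lempi} ensures the two branches glue together into the single clean expression $\pi_\theta(X,Y)$; beyond this the argument is essentially a one-line appeal to monotonicity of $\Phi$.
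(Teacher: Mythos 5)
Your proposal is correct and follows essentially the same route as the paper's own proof: both establish the identity $V_Y(\theta)-V_X(\theta)=\pi_\theta(X,Y)$ by the same two-case analysis on $X_{\max}(\theta)$ and then conclude via the strictly increasing $\Phi$ in \eqref{genRUMXY}. Your extra remarks on tie-breaking via Lemma \ref{lemtop} and on the gluing at the switching point are sound but add only bookkeeping beyond what the paper records.
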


Since this result may read mildly tautological, let us emphasize two things: First, the $\Pi$-based RUM does not represent the only type of $\Pi$-monotonic stochastic choice model. For example, the RPMs suggested by \citet{AB} are $\Omega$-monotonic which implies that they are $\Pi$-monotonic as well. Second, while the construction of the $\Pi$-based RUM is inspired by our $\Pi$-monotonicity condition, the resulting model is about as easy to estimate as the usual $EU$-based or CE-based RUMs, and with the same techniques. In Section \ref{piRUM} below, we discuss some further properties of the $\Pi$-based RUM. Its estimation is discussed in the context of the empirical illustration in Section \ref{empirics}.

While we are not aware of previous implementation of RUMs based on risk premia, there is one important special case in which our $\Pi$-based RUM coincides with an established model from the literature used, e.g., in \citet{von2011heterogeneity}: For the CARA family of utility functions, the $\Pi$-based RUM coincides with the CE-based RUM, implying in particular that the latter is $\Pi$-monotonic.

\begin{Proposition}\label{CARAmon}
Under CARA utility, the $\Pi$-based RUM leads to the same choice probabilities as the CE-based RUM. Consequently, the CE-based RUM is $\Pi$-monotonic in this case.
\end{Proposition}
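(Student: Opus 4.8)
The plan is to show that under CARA utility the preference indices used by the $\Pi$-based RUM and the CE-based RUM differ only by a quantity that is constant across all alternatives, and then to exploit the fact that the RUM choice probabilities in \eqref{genRUM} are invariant under adding such a common constant. That shift-invariance is the structural fact that drives the whole argument.

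First I would record the shift-invariance of \eqref{genRUM}: if $\tilde{V}_i(\theta)=V_i(\theta)+c(\theta)$ for a quantity $c(\theta)$ that does not depend on $i$, then for every realization of the shocks we have $\max_j\bigl(\tilde{V}_j(\theta)+\xi_j/\lambda\bigr)=\max_j\bigl(V_j(\theta)+\xi_j/\lambda\bigr)+c(\theta)$, so the defining event $\{\tilde{V}_i+\xi_i/\lambda=\max_j \tilde{V}_j+\xi_j/\lambda\}$ coincides pointwise with $\{V_i+\xi_i/\lambda=\max_j V_j+\xi_j/\lambda\}$. Hence the two index systems induce identical choice probabilities.

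Next I would substitute the CARA closed form. By Lemma \ref{CARApi}, $\pi_\alpha(X_i,X_{\max}(\alpha))=\CE_\alpha(X_{\max}(\alpha))-\CE_\alpha(X_i)$, so the $\Pi$-based index becomes
\[
V_i^{\Pi}(\alpha)=-\pi_\alpha(X_i,X_{\max}(\alpha))=\CE_\alpha(X_i)-\CE_\alpha(X_{\max}(\alpha)).
\]
Since $\CE_\alpha(X_{\max}(\alpha))$ is the same for every $i$, this is exactly the CE-based index $V_i^{\textnormal{CE}}(\alpha)=\CE_\alpha(X_i)$ shifted by the common constant $c(\alpha)=-\CE_\alpha(X_{\max}(\alpha))$. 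Lemma \ref{lemtop} guarantees that this constant is well defined even when the maximizer $X_{\max}(\alpha)$ is not unique. Applying the shift-invariance from the previous step then yields $\rho^{\Pi\textnormal{-RUM}}_\alpha\equiv\rho^{\textnormal{CE-RUM}}_\alpha$, which is the first claim.

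For the second claim I would simply combine this identity with Proposition \ref{PiRUMPi}: the $\Pi$-based RUM is $\Pi$-monotonic, and since the CE-based RUM produces literally the same choice probabilities under CARA utility, it inherits $\Pi$-monotonicity. I do not expect a genuine obstacle here; the only points requiring care are making the shift-invariance argument cleanly at the level of the defining event in \eqref{genRUM} rather than informally, and noting that the selection of $X_{\max}(\alpha)$ among ties is immaterial by Lemma \ref{lemtop}.
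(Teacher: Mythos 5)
Your proposal is correct and follows essentially the same route as the paper's proof: apply Lemma \ref{CARApi} to rewrite the $\Pi$-based index as $\CE_\alpha(X_i)-\CE_\alpha(X_{\max}(\alpha))$, invoke the invariance of RUM choice probabilities under a common shift $C=\CE_\alpha(X_{\max}(\alpha))$, and conclude $\Pi$-monotonicity via Proposition \ref{PiRUMPi}. Your explicit pointwise verification of the shift-invariance and the appeal to Lemma \ref{lemtop} for ties are slightly more careful than the paper's one-line ``clearly,'' but they do not change the substance of the argument.
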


There is one other positive result about monotonicity of choice probabilities in the CE-based RUM discussed already, e.g., in \citet{AB} and \citet{bellemare2023estimation}: For pairs $(X,Y)$ where $Y$ always takes the deterministic value $y$, it follows by Theorem 1 of \citet{pratt1964} that 
\begin{equation}\label{DV}
\Delta_V=V_X(\theta)-V_Y(\theta)=\CE_\theta(X)-y 
\end{equation}
is decreasing in $\theta$, implying that the probability of choosing $X$ is decreasing with $\theta$ as well by \eqref{genRUMXY}. Just like in the CARA case, $\Delta_V$ can be interpreted as a risk premium here\footnote{This is however not necessarily the same risk premium as $\pi_\theta({X,Y})$ because it always corresponds to the equivalent risk premium in the sense of \citet{kimball1990}.} since \eqref{DV} can be rewritten into 
\[
u_\theta(y+\Delta_V)=E[u_\theta(X)].
\]

\begin{Remark}\label{rem:blavatskyy}
In line with Definition \ref{def2}, we can call a stochastic choice model $y$-monotonic if it exhibits monotonicity of choice probabilities in $\theta$ for all pairs $(X,Y)$ for which $Y=y$ with probability 1. By Lemma \ref{Xy}, $\Pi$--mo\-no\-to\-ni\-ci\-ty implies $y$-mo\-no\-to\-ni\-ci\-ty, showing that $\Pi$-monotonicity lies in between $\Omega$-monotonicity and $y$-monotonicity. $y$-monotonicity is effectively the notion of stochastic monotonicity used by \citet{blavatskyy} who shows that $EU$-based RUMs are not $y$-monotonic in general. In contrast, the discussion  surrounding \eqref{DV} shows that the CE-based RUM is $y$-monotonic. However, from an applied perspective, $y$-monotonicity may be too weak in many situations since it does not have any implications for choice between pairs of non-degenerate  lotteries.
\end{Remark}
The previous remark together with the positive result in the CARA case may raise the hope that CE-based RUMs have good monotonicity properties more generally, even in cases where the connection to risk premia is less obvious. In the next section, we will see however that this is not the case.

\begin{Remark}\label{RemConEU}
\citet{Wilcox} proposes a variation of the $EU$-based RUM, the ``contextual utility model'', which rescales the utility differences in a way that improves the monotonicity properties of the model for choice within pairs  of lotteries $(X,Y)$ in some cases. The idea is to divide the difference of expected utilities by $u_\theta(z_{\max})-u_\theta(z_{\min})$ where, respectively, $z_{\max}$ and $z_{\min}$ are the largest and smallest outcome that appear in either of the lotteries $X$ and $Y$. \citet[][p.97]{AB} give an example of an MPS for which the model has non-monotonic choice probabilities, showing that the model is not $\Omega$-monotonic under CRRA utility. That counterexample  is not a $\Pi$-ordered pair of lotteries. However, one can verify numerically that the following pair $(X,Y)$ is $\Pi$-ordered but leads to non-monotonic rescaled expected utility differences, showing that the contextual utility model is not $\Pi$-monotonic:  Under $X$, the outcome is 2, 3, 8 or 10 with equal probability, while under $Y$, the outcome is 4 or 7 with equal probability.
\end{Remark}

\section{Discussion}\label{Dis}
\subsection{Limitations of the $\CE$-based RUM}\label{DisCE}
In Proposition \ref{CARAmon}, we saw that the CE-based RUM is $\Pi$-monotonic under CARA utility. Unfortunately, there is a converse to this result: Whenever the CE-based RUM is $\Pi$-monotonic, the underlying family of utility functions must be a family of CARA functions.
\begin{Proposition}\label{onlyCara}
Consider a stochastic choice model $\rho$ for which choice probabilities for pairs $(X,Y)$ of gambles are strictly monotonic functions of the certainty equivalent differences $\CE_\theta(Y)-\CE_\theta(X)$. If $\rho$ is $\Pi$-monotonic then all the utility functions $u_\theta$ for $\theta\in(0,\infty)$ must be CARA functions. 
\end{Proposition}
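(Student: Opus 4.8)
The plan is to reduce the claim to the classical fact that translation invariance of certainty equivalents characterizes CARA utility, and to manufacture that translation invariance by feeding the model pairs of lotteries that are $\Pi$-ordered in both directions.

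As a preliminary reformulation, note that by hypothesis the choice probabilities for a pair depend on $\theta$ only through the certainty-equivalent difference, and do so strictly monotonically, exactly as in \eqref{genRUMXY}. Hence for any pair $(A,B)$, monotonicity of $\rho$ at $(A,B)$ in the sense of Definition \ref{def2}(iii) is equivalent to $\theta\mapsto \CE_\theta(B)-\CE_\theta(A)$ being monotone in $\theta$. The direction of the underlying monotone link is irrelevant to the argument, so I would not assume it.

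The heart of the proof is a construction. Fix an arbitrary lottery $X$ and a constant $c>0$, and consider the two pairs $(X,X+c)$ and $(X+c,X)$. Since $u_\theta$ is strictly increasing, $t\mapsto E[u_\theta(X+t)]$ is strictly increasing, so Definition \ref{Defpi} gives $\pi_\theta(X,X+c)=c$ and $\pi_\theta(X+c,X)=-c$, both constant -- hence (weakly) increasing -- in $\theta$. Thus both pairs are $\Pi$-ordered. Applying $\Pi$-monotonicity to $(X,X+c)$ forces $g(\theta):=\CE_\theta(X+c)-\CE_\theta(X)$ to be increasing in $\theta$, while applying it to $(X+c,X)$ forces $g$ to be decreasing; therefore $g$ is constant on $\Theta$. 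Evaluating at $\theta=0$, where $\CE_0=E[\cdot]$, identifies the constant as $g\equiv E[X+c]-E[X]=c$. Since $X$ and $c>0$ were arbitrary, this yields the translation invariance $\CE_\theta(X+c)=\CE_\theta(X)+c$ for every lottery $X$, every $c>0$, and every $\theta$.

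It remains to deduce CARA from translation invariance, which I expect to be the step requiring the most care, even though it is classical. Writing a lottery as $X=w+\tilde z$ with $\tilde z$ mean-zero, translation invariance says the risk premium $E[X]-\CE_\theta(X)$ is the same at every wealth level $w$. Applying this to a small risk $\tilde z$ and invoking the local risk-premium expansion of \citet{pratt1964}, whose leading term is $\tfrac12\,\mathcal{A}_\theta(w)\,\mathrm{Var}(\tilde z)$, forces $\mathcal{A}_\theta(\cdot)$ to be constant in $x$; integrating $-u_\theta''/u_\theta'\equiv\mathrm{const}$ then gives the exponential form. Finally, \eqref{PR} together with $\mathcal{A}_0\equiv 0$ (risk neutrality at $\theta=0$) yields $\mathcal{A}_\theta>0$ for $\theta>0$, so the constant coefficient is strictly positive and $u_\theta$ is genuinely CARA for every $\theta\in(0,\infty)$, as claimed.
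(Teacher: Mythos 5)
Your proof is correct, and it follows the paper's overall strategy -- reduce $\Pi$-monotonicity to translation invariance of certainty equivalents via the constant-premium pairs $(X,X+c)$, then invoke the classical characterization of CARA -- but it replaces both of the paper's key sub-steps with genuinely different arguments. First, to get constancy of $g(\theta)=\CE_\theta(X+c)-\CE_\theta(X)$, the paper uses only the single pair $(X,X+c)$ together with the boundary values $g(0)=g(\infty)=c$ supplied by Assumption \ref{A1} (in particular the limit \eqref{CElim}): a monotone function with equal values at both ends of $\Theta$ must be constant. Your reversed-pair trick -- noting that $\pi_\theta(X+c,X)=-c$ is also constant, hence $(X+c,X)$ is $\Pi$-ordered too, so $\rho_\theta(X)$ is forced to be both decreasing and increasing -- reaches the same conclusion without ever using the infinitely-risk-averse limit, and as a byproduct shows the choice probability itself is constant in $\theta$; this makes the step more self-contained, at the cost of exploiting the symmetry of Definition \ref{def2}(iii) under relabeling, which the paper's one-orientation argument does not need. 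Second, for the step from translation invariance to CARA, the paper simply cites Theorem 2.2 of \citet{Mueller}, whereas you rederive it from \citet{pratt1964}'s local expansion: applying wealth-invariance of the premium to shrinking risks $h\tilde z$, dividing by $\tfrac12 h^2\mathrm{Var}(\tilde z)$ and letting $h\to 0$ gives $\mathcal{A}_\theta(w)=\mathcal{A}_\theta(w')$, which is legitimate here because the paper assumes $u_\theta$ is twice continuously differentiable; note also that your restriction to $c>0$ is harmless (it covers all pairs of wealth levels, and keeps $X+c$ inside $\S$ automatically since $\S$ is unbounded above). Your closing observation that \eqref{PR} together with $\mathcal{A}_0\equiv 0$ pins the constant coefficient strictly positive is a small extra precision the paper leaves implicit.
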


\begin{figure}
\hspace{1.5cm}
\begin{tikzpicture}[->, >=stealth, line width=1.5pt, node distance=2cm, auto]
    \begin{scope}
    \node (A) {$X$};

    \node[right of=A, xshift=2.5cm, yshift=1.5cm] (B) {};
    \node[right of=A, xshift=2.5cm, yshift=-1.5cm] (D) {};

    \draw (A) -- (B) node[midway, above] {1/2} node[right] {10};
    \draw (A) -- (D) node[midway, below] {1/2} node[right] {8};
		\end{scope}
    \begin{scope}[xshift=6.5cm] 
    \node (A) {$Y$};

    \node[right of=A, xshift=2.5cm, yshift=1.5cm] (B) {};
    \node[right of=A, xshift=2.5cm, yshift=-1.5cm] (D) {};

    \draw (A) -- (B) node[midway, above] {1/2} node[right] {12};
    \draw (A) -- (D) node[midway, below] {1/2} node[right] {10};
		\end{scope}
\end{tikzpicture}
\caption{An example where the premium is constant in $\theta$.}
\label{fig4}
\end{figure}
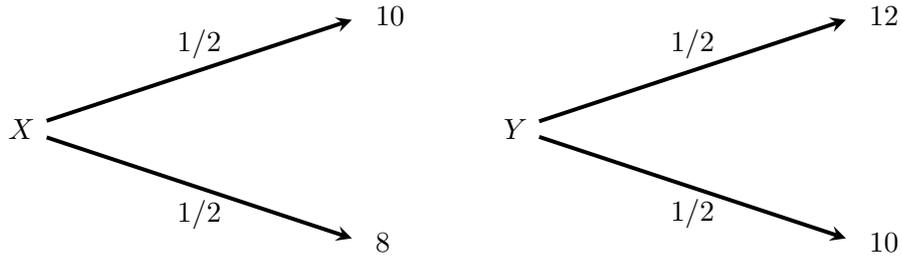

The intuition for this result is based on pairs of gambles like the one in Figure \ref{fig4}. In this example, the difference between the lotteries is actually not stochastic, we have $Y=X+2$ and thus $\pi_\theta({X,Y})=2$. Accordingly, we have a (weakly) $\Pi$-ordered pair of gambles.  For the risk-neutral and the infinitely risk-averse types, $\theta\in \{0,\infty\}$, we also know that the certainty equivalent difference equals, respectively, $11-9=2$ and $10-8=2$. These two types thus have the same choice probabilities. $\Pi$-monotonicity would now require that the certainty equivalent difference is constant in $\theta$ between these two extremes. However, this is generally not the case as illustrated in Figure \ref{fig5} for the case of CRRA utility. 
\begin{figure}[h!]
     \centering
     \includegraphics[width=0.5\textwidth]{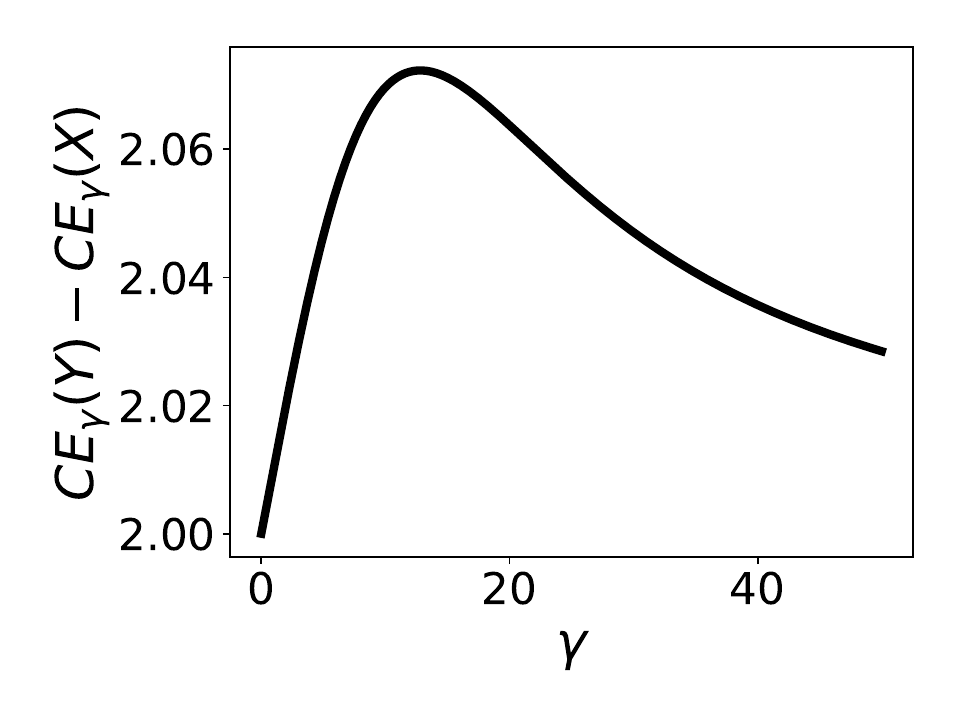}
    \caption{Certainty equivalent difference of gambles $X$ and $Y=X+2$.}
    \label{fig5}
\end{figure}
The proof of the proposition takes this reasoning one step further: $\Pi$-monotonicity for all pairs of the form $(X,X+c)$ can be shown to imply the translation invariance property  $\CE_\theta(X+c)=\CE_\theta(X)+c$. Yet, from the literature, it is known that this property only holds for CARA utility functions.\footnote{In the proposition, we only cover positive values of $\theta$, i.e., risk-averse preferences. The reason is that we have not specified whether the parameter set $\Theta$ is extended into the negative numbers and, if yes, how far. If we add to Assumption \ref{A1} a requirement that preferences converge to those of an infinitely risk-loving agent for $\theta \rightarrow -\infty$, we can easily extend Proposition \ref{onlyCara} to all $\theta \in\mathbb{R}$. This applies, e.g., to CRRA utility.}

Now, clearly, the pair of lotteries in Figure \ref{fig4} is not a typical pair that an applied researcher would use to elicit risk preferences. It is also just a boundary case of $\Pi$-orderedness, corresponding to a constant premium. One might thus hope that a further restriction of the set of admissible test cases -- analogous to the step from $\Omega$-orderedness to $\Pi$-orderedness -- could help to establish monotonicity of the CE-based RUM for general utilities in the examples that actually matter. However, a brief plausibility check shows that such an additional restriction would have to rule out many of the lottery pairs used in empirical research. For example, going through the 40 pairs of lotteries used in the experiments of \citet{andersen2008eliciting} which also appear in our empirical illustration in Section \ref{empirics}, we find that they are all $\Pi$-ordered under CRRA utility by Proposition \ref{binaryProp} but 36 of them exhibit non-monotonicity of choice probabilities for the CE-based RUM,\footnote{The only exceptions are four pairs of choices between degenerate lotteries such as receiving 2000 for certain vs. receiving 4000 for certain which, by themselves, give little insight into risk preferences.} including innocent looking pairs of lotteries such as a choice between 4500 or 50 with equal probability under $X$ vs. 2500 or 1000 with equal probability under $Y$. For the latter pair of gambles, Figure \ref{fig6} displays the certainty equivalent differences, showing that choice probabilities in the CE-based RUM for the safer option $Y$ are \textit{decreasing} in $\gamma$ for $\gamma \geq 1.67$ and thus over an empirically plausible and relevant range of parameter values. In contrast, $\pi_\gamma(X,Y)$ is increasing in $\gamma$, implying that the choice probabilities for the safer option in the $\Pi$-based RUM will increase with risk aversion.
\begin{figure}[h!]
     \centering
     \includegraphics[width=0.5\textwidth]{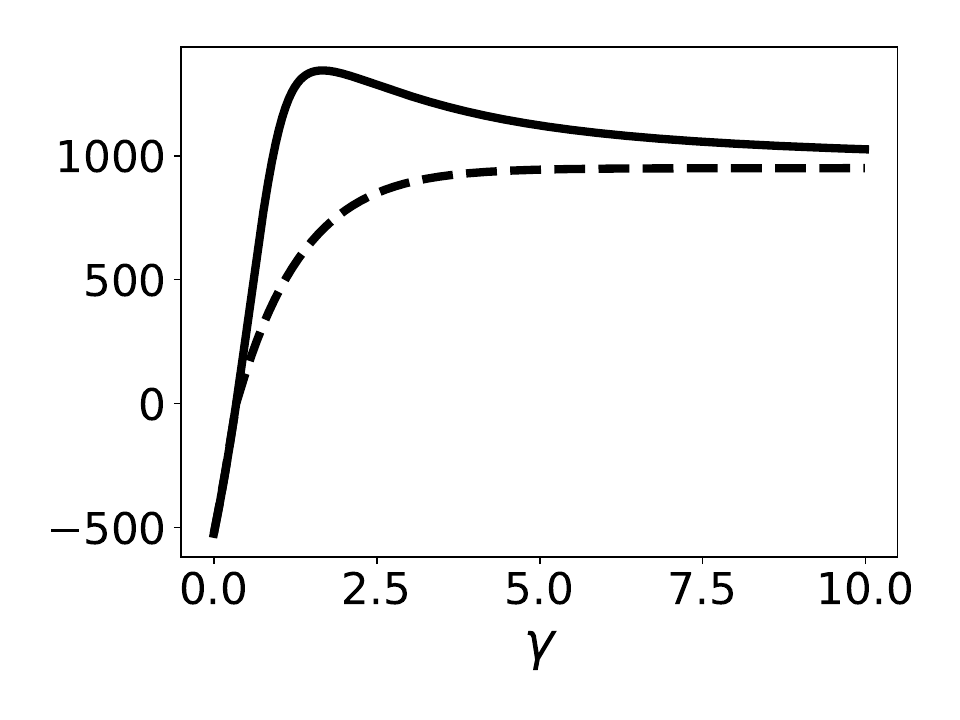}
    \caption{Certainty equivalent difference $\CE_\gamma(Y)-\CE_\gamma(X)$ (solid line) and premium $\pi_\gamma(X,Y)$ (dashed line) for gamble $X$ giving 4500 and 50 with equal probability and gamble $Y$ giving 2500 or 1000 with equal probability for CRRA utility as a function of $\gamma$.}
    \label{fig6}
\end{figure}

\begin{Remark}
One can debate whether the failure of $\Pi$-monotonicity of the CE-based RUM for non-CARA preferences is mainly bad news about the CE-based RUM, or bad news about the use of pairs of binary lotteries for the elicitation of risk preferences. Imagine that you would be asked to consult a group of econometricians from an alternative universe in which the  CE-based RUM is known to be correct: In that universe, it is common knowledge that all agents have CRRA preferences, and that when they evaluate risky lotteries they compare the certainty equivalents and then occasionally make little additive errors in those comparisons.  How should these econometricians estimate the risk preference parameters of individuals? In this artificial and extreme situation, it clearly seems wrong to recommend the use of a miss\-specified model like the $\Pi$-based RUM or the RPM just because of their superior monotonicity properties. Instead,  it seems natural to advise those econometricians against the use of pairs of binary lotteries. Experimental subjects in that universe should rather be asked to choose between a lottery and a safe option.\footnote{Effectively, we would replace our $\Pi$-monotonicity criterion by the $y$-monotonicity from Remark \ref{rem:blavatskyy} where we noted that the CE-based RUM is $y$-monotonic.} However, in our own universe, matters seem less clear.
\end{Remark}

\subsection{Properties of the $\Pi$-based RUM and related models}\label{piRUM}
With the $\Pi$-based RUM $\rho^{\pi\textnormal{-RUM}}$, we have proposed a stochastic choice model that appears to be new beyond the CARA case where it coincides with the CE-based RUM. It may thus be useful to discuss some of its properties and compare them to those of established models. 
 
\begin{Proposition}\label{PiRUMprop}
The choice probabilities $\rho^{\pi\textnormal{-RUM}}_\theta(X_i)$ are continuous in $\theta$. Moreover, in the limit of vanishing noise, the choice probabilities of all dominated options go to zero, $E[u_\theta(X_i)]<E[u_\theta(X_j)]$ for some $j$ implies
\begin{equation}\label{vanishingnoise}
\lim_{\lambda \rightarrow \infty} \rho^{\pi\textnormal{-RUM}}_\theta(X_i) =0,
\end{equation}
and the choice probabilities of optimal gambles are always higher than those of other gambles, $E[u_\theta(X_i)]\geq E[u_\theta(X_j)]$ for all $j$ implies
\begin{equation}\label{topordering}
\rho^{\pi\textnormal{-RUM}}_\theta(X_i) \geq   \rho^{\pi\textnormal{-RUM}}_\theta(X_j)
\end{equation}
for all $j$ and all $\lambda \geq 0$. 
\end{Proposition}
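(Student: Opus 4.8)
The plan is to treat the three assertions separately, in each case reducing the claim to a statement about the preference indices $V_i(\theta)=-\pi_\theta(X_i,X_{\max}(\theta))$ and then feeding it into the RUM formula \eqref{genRUM}. Throughout I would phrase everything in terms of certainty equivalents rather than expected utilities, since Assumption \ref{A1} gives continuity of $\CE_\theta(X)$ in $\theta$ directly, and since ranking lotteries by $\CE_\theta$ is equivalent to ranking them by $E[u_\theta]$ (as $u_\theta^{-1}$ is strictly increasing). I would record at the outset the two facts that drive all three parts: an optimal gamble has $V_{\max}(\theta)=0$, because $\pi_\theta(X_{\max},X_{\max})=0$; and every gamble satisfies $V_i(\theta)\le 0$, with strict inequality precisely when $X_i$ is not optimal. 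By Lemma \ref{lemtop}, the number $V_i(\theta)$ does not depend on which maximizer is chosen as $X_{\max}(\theta)$, so the indices are unambiguously defined.

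For continuity, I would set $c^*(\theta)=\max_j \CE_\theta(X_j)$, which is continuous in $\theta$ as the maximum of finitely many continuous functions. The premium $\pi_i(\theta):=\pi_\theta(X_i,X_{\max}(\theta))$ is the unique nonnegative solution of $\CE_\theta(X_i+\pi)=c^*(\theta)$, where the left-hand side is continuous in $(\theta,\pi)$ and strictly increasing in $\pi$. A continuity-of-roots argument then applies: given $\theta_n\to\theta_0$, the roots $\pi_n$ are bounded (a certainty equivalent is at least the worst outcome of its lottery, so $c^*(\theta_n)\ge \min_k x_{i,k}+\pi_n$), and any convergent subsequence must converge to a root of $\CE_{\theta_0}(X_i+\pi)=c^*(\theta_0)$, which is unique; hence $\pi_i(\theta)$ and $V_i(\theta)$ are continuous. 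Since the shocks have a continuous, strictly increasing CDF, ties have probability zero and $\rho^{\pi\textnormal{-RUM}}_\theta(X_i)$ is a continuous function of the vector $(V_1(\theta),\dots,V_n(\theta))$, so continuity in $\theta$ follows by composition. The point requiring the most care here is that the selected maximizer $X_{\max}(\theta)$ may jump as $\theta$ varies; this is exactly neutralized by the fact that $\pi_i(\theta)$ enters only through the continuous quantity $c^*(\theta)$.

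For the vanishing-noise limit, I would fix $\theta$ and a dominated gamble $X_i$, so that $\pi_i:=\pi_i(\theta)>0$, $X_i\ne X_{\max}(\theta)$, and $V_i(\theta)=-\pi_i<0=V_{\max}(\theta)$. For $X_i$ to be chosen it must in particular beat an optimal gamble, which forces $\xi_i-\xi_{\max}>\lambda\pi_i$. Hence $\rho^{\pi\textnormal{-RUM}}_\theta(X_i)\le \mathbb{P}(\xi_i-\xi_{\max}>\lambda\pi_i)$, and since $\xi_i-\xi_{\max}$ is a fixed (nondegenerate) random variable while $\lambda\pi_i\to\infty$, the right-hand side tends to $0$, giving \eqref{vanishingnoise}.

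For the top-ordering \eqref{topordering}, optimality of $X_i$ gives $V_i(\theta)=0\ge V_j(\theta)$ for all $j$, so it suffices to prove the general monotonicity property of i.i.d.-shock RUMs: $V_i\ge V_j$ implies $\rho(X_i)\ge\rho(X_j)$. I would write the choice events as $A_k=\{\lambda V_k+\xi_k>\lambda V_l+\xi_l \text{ for all } l\ne k\}$, a form that also makes sense at $\lambda=0$, where it yields the uniform probabilities $1/n$ by symmetry. Exploiting exchangeability, the map $T$ that swaps the coordinates $\xi_i$ and $\xi_j$ preserves the joint law of the shocks, and a direct check shows that when $\lambda V_i\ge \lambda V_j$ one has $T(A_j)\subseteq A_i$; therefore $\rho(X_j)=\mathbb{P}(A_j)=\mathbb{P}(T(A_j))\le\mathbb{P}(A_i)=\rho(X_i)$, uniformly for all $\lambda\ge 0$. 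I expect the inclusion $T(A_j)\subseteq A_i$ to be the only genuinely computational step, while the main conceptual obstacle is the continuity argument of the second paragraph, where the identity of the maximizer can jump and must be handled via Lemma \ref{lemtop}.
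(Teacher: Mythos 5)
Your proposal is correct and follows essentially the same route as the paper's proof: reduce all three claims to properties of the indices $V_i(\theta)$, noting $V_i(\theta)=0$ exactly for optimal gambles and $V_i(\theta)<0$ otherwise, handle switches of the maximizer via Lemma \ref{lemtop}, and obtain continuity of the premium by a root-continuity argument that mirrors the implicit-function-theorem proof underlying Lemma \ref{lempi}. The only differences are expository: you re-derive the premium's continuity directly through $c^*(\theta)=\max_j \CE_\theta(X_j)$ rather than citing Lemma \ref{lempi} and gluing across maximizer switches, and you make explicit, via the coordinate-swap argument, the standard exchangeability fact behind \eqref{topordering} that the paper treats as immediate.
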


The $\Pi$-based RUM shares the continuity property  with the CE-based and $EU$-based RUMs\footnote{Strictly speaking, this needs the additional assumption that the class of utility function $u_\theta$ is chosen in such a way that continuity of differences of expected utilities is guaranteed.} as well as with RPMs.\footnote{Under an RPM, choice probabilities are given by
\[
\rho^{\textnormal{RPM}}_\theta(X_i)=\mathbb{P}\left(E[u_{\theta+ \xi/\lambda}(X_i)] = \max_j E[u_{\theta+\xi/\lambda}(X_j)]\right),
\]
so a shock $\xi$ is added to the preference parameter and the choice probability of option $i$ is the probability that an agent with preference type $\theta+\xi/\lambda$ considers option $i$ to be optimal. Here, $\lambda$ is a precision parameter and typical distributions for $\xi$ are the Gaussian or the logistic.} The same holds for the property that the probability of choosing dominated options goes to zero as $\lambda$ goes to infinity. The RPM does not have a property like \eqref{topordering} in general. In contrast, the CE-based and $EU$-based RUMs satisfy an even stronger ordering property. For these models, the ranking of choice probabilities reflects the ranking of the expected utilities so, for instance, the third best option also has the third highest probability of being chosen. The reason is simply that in these models the $V_i$ are monotonic transformations of the expected utilities. In contrast, in the  $\Pi$-based RUM, how easy it is to mistake a gamble for the best is not tied as strongly to how good the gamble actually is. The following example illustrates this:
\begin{Example}
Consider three gambles $X_1$, $X_2$ and $X_3$ such that $X_1$ pays 4 for certain, $X_2$ pays with equal probability $1$ or $10$ while $X_3$ pays with equal probability $2$ or $3$. A CRRA utility agent with $\gamma=4$ computes their certainty equivalents as $\CE_4(X_1)=4$, $\CE_4(X_2)=1.82$ and $\CE_4(X_3)=2.4$. Thus, the preference ordering is $X_1 \succ X_3 \succ X_2$.  However if we look at the premia, i.e., the amounts that need to be added to make $X_2$ and $X_3$ equally good as $X_1$, we find $\pi_4({X_2,X_1})=1.42$ and $\pi_4({X_3,X_1})=2.56$. Thus, even though $X_2$ is the third-ranked option, it is easier to improve (or, in our RUM, to misperceive as the best option) than the second-ranked option $X_3$.
\end{Example}

It is also possible to specify a $\Pi$-monotonic RUM with the property that higher utilities always lead to higher choice probabilities. We call this model the cumulative-$\Pi$-based RUM, $\rho^{\textnormal{cum}-\pi\textnormal{-RUM}}$. To this end, denote by $r_\theta(i)$ the rank of $X_i$ among the $n$ gambles $X_1,\ldots,X_n$, i.e. $r_\theta(i)=k$ means that $X_i$ is ranked $k$th when ranking the alternatives by their expected utilities with utility function $u_\theta$ (where ties are broken e.g. lexicographically). Moreover, denote by $X_{(1)}(\theta),\ldots,X_{(n)}(\theta)$ a reordering of the $n$ alternatives according to their expected utilities so that $X_i=X_{r_\theta(i)}(\theta)$. Then we define the preference index $V_i(\theta)$ as $V_i(\theta)=0$ if $r_\theta(i)=1$ and otherwise\footnote{In the title of the paper, we refer to ``the good and the ugly'' with the intention of highlighting that even though a lot has been written recently about the \textit{bad} RUMs there are also the others RUMs which are not bad. Beauty is clearly in the eye of the beholder, and we do not want to want to be judgmental. That being said, the title is partly inspired by the cumulative-$\Pi$-based RUM.}
\[
V_i(\theta)=-\sum_{j=2}^{r_\theta(i)} \pi_\theta\left({ X_{(j)}(\theta),X_{(j-1)}(\theta)}\right).
\] 
Thus, instead of asking by how much $X_i$ needs to be improved to make it optimal, we add up what it takes to move $X_i$ to the next highest rank, to move the next highest ranked option up one more step and so on. For instance, if $X_i$ is ranked third, we add the premium it takes to make it second-ranked to the premium it takes to make the second-ranked option first-ranked. This construction guarantees that the ranking of the $V_i(\theta)$ is aligned with the ranking of the utilities since we only add up non-negative numbers and then put a minus in front. Moreover, the model coincides with the non-cumulative $\Pi$-based RUM 
when it comes to choice between two lotteries. Thus, it inherits $\Pi$-monotonicity which is only based on choice within pairs of lotteries. As a less desirable feature however the cumulative-$\Pi$-based RUM has choice probabilities that are discontinuous in the parameter $\theta$ with more than two choice options.\footnote{Intuitively, the continuity problem arises as follows: Suppose there has been a change of ranking between the second- and third-ranked option at some level $\theta$. The premium between the second and third ranked option is zero here and thus not a problem. However, with a change of the second-ranked option, there can be a jump in the amount it takes to move the second-ranked option to the utility level of the first-ranked one.} By and large, we thus tend to prefer the basic $\Pi$-based RUM even though, of course, the real question is which of the two is closer to how people actually make decisions. 

\subsection{Monotonicity, ordinal models and cardinal models}\label{RPMs}
One could argue that with the RPM, there already is a stochastic choice model that satisfies $\Omega$-monotonicity which implies $\Pi$-monotonicity -- so why look any further. To this end, it is useful to remember two things. First, since different stochastic choice models capture different ways in which people make mistakes, there is little reason to expect that any one of them consistently outperforms the others. Different mistakes may be typical of different situations and require different modeling tools. Second, $\Omega$-monotonicity or $\Pi$-monotonicity by themselves do not make a good model.  For instance, a purely random choice model which assumes that the agent picks between all options with arbitrary but fixed probabilities regardless of $\theta$ is $\Omega$-monotonic and thus $\Pi$-monotonic. Similarly, a random utility model with 
$$V_i(\theta)=1_{\{E[u_\theta(X_i)] = \max_j E[u_\theta(X_j)]\}}$$
is $\Omega$-monotonic. This model is an ordinal version of a RUM in the sense that  choice probabilities only depend on whether a given option is a utility maximizer or not.\footnote{This model can be rewritten into the constant error/tremble model of \citet{harless}, see also \citet{blavatskyy}.} While this model has its merits, it is not particularly flexible, predicting that decision probabilities only change with $\theta$ at indifference points. On an abstract level, having a stochastic choice model that satisfies a given monotonicity concept is only one side of a tradeoff where the other side is to have a model that is flexible and expressive. The best models score well in both dimensions.

In our empirical illustration of Section \ref{empirics}, we find that the RPM and the $\Pi$-based RUM give fairly similar results,  when compared to non-monotonic models like the $EU$-based RUM. In a way, this similarity is encouraging as it suggests that the  elicited risk aversion levels are not overly affected by the stochastic choice model when choosing between pairs of binary lotteries. In more complex settings with more than two outcomes, more than two lotteries or with a multi-parameter preference model, we expect greater differences between stochastic choice models. In particular, we expect that it matters more whether choice probabilities are based only on preference rankings or also on the strength of preferences.\footnote{See \citet{alos2021choice} for some empirical evidence that choice probabilities are affected by how strong preferences are.}

For illustration, consider Figure \ref{fig7}, a variation of Figure \ref{fig1} with a pair of three-outcome lotteries where $\varepsilon\geq 0$ is a parameter that should be thought of as a small number. 
\begin{figure}
\hspace{1.5cm}
\begin{tikzpicture}[->, >=stealth, line width=1.5pt, node distance=2cm, auto]
    \begin{scope}
    \node (A) {$X$};

    \node[right of=A, xshift=2.5cm, yshift=1.5cm] (B) {};
    \node[right of=A, xshift=2.5cm] (C) {};
    \node[right of=A, xshift=2.5cm, yshift=-1.5cm] (D) {};

    \draw (A) -- (B) node[midway, above] {1/3} node[right] {14};
    \draw (A) -- (C) node[midway, above] {1/3} node[right] {6};
    \draw (A) -- (D) node[midway, below] {1/3} node[right] {4};
		\end{scope}
    \begin{scope}[xshift=6.5cm] 
    \node (A) {$Y$};

    \node[right of=A, xshift=2.5cm, yshift=1.5cm] (B) {};
    \node[right of=A, xshift=2.5cm, yshift=-1.5cm] (D) {};

    \draw (A) -- (B) node[midway, above] {2/3} node[right] {9};
    \draw (A) -- (D) node[midway, below] {1/3} node[right] {$4-\varepsilon$};
		\end{scope}
\end{tikzpicture}
\caption{Discontinuity at infinity.}
\label{fig7}
\end{figure}
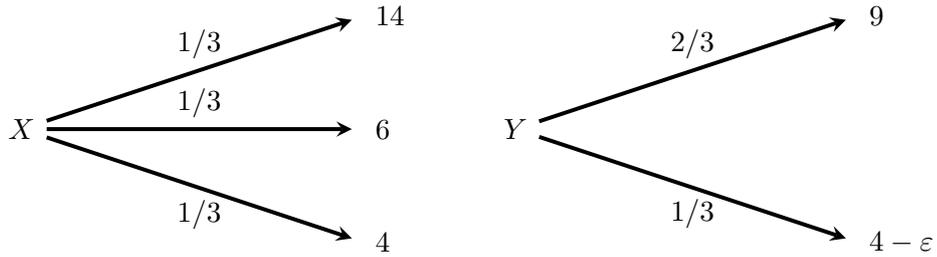
For small $\varepsilon>0$, risk-averse agents with intermediate values of $\theta$ will prefer $Y$ while agents with sufficiently large \textit{or} sufficiently small $\theta$ will have a preference for $X$. Under models like the $\Pi$-based or CE-based RUM, the choice probabilities in the limit $\theta\rightarrow \infty$ will be arbitrarily close to 50/50 for small $\varepsilon$ because $$\lim_{\theta \rightarrow \infty}\pi_\theta(X,Y)=\lim_{\theta \rightarrow \infty}\CE_\theta(Y)-\CE_\theta(X)=-\varepsilon.$$ Ordinal models like the RPM cannot account for what happens in the limiting case $\theta = \infty$ and must thus exhibit a ``discontinuity at infinity''. Under the standard RPM we have 
$\lim_{\theta \rightarrow \infty} \rho^{\textnormal{RPM}}_\theta(X)=1$ for any $\varepsilon >0$ but also $\lim_{\theta \rightarrow \infty} \rho^{\textnormal{RPM}}_\theta(X)=0$ for $\varepsilon =0$, implying that choice probabilities of sufficiently risk-averse should react very sensitively to changes in $\varepsilon$ around 0.\footnote{For example, for $\varepsilon=0.01$ under CRRA utility, agents with $\gamma$ between 1.22 and 10.16 will prefer $Y$ while agents with higher or lower $\gamma$ prefer $X$. Thus, for large $\gamma$ the RPM probability of choosing $X$ goes to 1. In contrast, for $\varepsilon=0$, all types with $\gamma >1.20$ prefer $Y$ so the RPM probability of choosing $X$ goes to 0.} In principle, these observations open the door for empirical research that discriminates between the ordinal and cardinal stochastic choice models. We leave it to future work.

\section{Empirical Illustration}\label{empirics}

In our empirical illustration, we rely on data from a representative sample of the Danish population due to \citet{andersen2008eliciting} that is also used in the empirical part of \citet{AB}. While \citet{AB} focus on a pooled model where one set of parameters is estimated jointly for all of the 253 participants, we consider models that estimate individual risk aversion parameters for all individuals.\footnote{Table \ref{tab:pooled_estimates} in the appendix displays results for a pooled estimation as in \citet{AB}'s Table 1 but extended to the larger set of choice models we consider.} We are particularly interested in investigating the intuition provided by \citet{barseghyan2018estimating} that monotonicity problems have the strongest impact in homoskedastic models that allow for heterogeneity in the risk aversion parameters while imposing homogeneity in the parameters that govern the stochasticity of choices. 

In addition to the $EU$-based RUM and the RPM investigated by  \citet{AB}, we also include in our exercise the $\Pi$-based RUM, the CE-based RUM and the contextual utility model of \citet{Wilcox} which we denote by Con$EU$ in the following.\footnote{Here, the contextual model represents the wider class of models that take the $EU$-based RUM as a starting point and add some sort of normalization or rescaling to mitigate the monotonicity problems. See  \citet{olschewski2022empirical} for some further discussion of different ways of normalizing expected utility; and \citet{AB} for some additional discussion of the potential limitations of such attempts.} 
 For setting up the models and their estimation, we follow  \citet{AB} in most of the choices: We focus on CRRA utility\footnote{This focus is also natural for studying the difference between the CE-based and $\Pi$-based RUMs which would be absent under CARA utility. In principle, it is recommended to account for initial wealth when estimating CRRA utility. We abstract from this complication in this   methodological exercise.} with parameter $\gamma$ and on RUMs with errors following an extreme value type I distribution with precision parameter $\lambda$. For the RPM, we choose a logistic error distribution, leading to a similar logit-type closed-form expression for choice probabilities. In all models, we also include a tremble probability $\kappa\in [0,1]$ which is the probability with which subjects misclick and pick the opposite of what they would have chosen otherwise.\footnote{A tremble probability like this is key to make RPMs operational with real data: Without trembles,  RPMs cannot account for behaviors that are not optimal under some realization of the random parameter $\theta$. Allowing for $\kappa>0$ introduces misclicks as an alternative explanation for behaviors that would otherwise be inconsistent with the RPM. To keep things comparable, we also introduce $\kappa$ in the other models.}  When participants indicate that they are indifferent between $X$ and $Y$, we assign a half choice to each of the alternatives. We estimate all models by maximum likelihood as discussed in detail in Appendix \ref{AEmpirics}. 

Participants in the study of \citet{andersen2008eliciting} faced four blocks of questions choosing between lotteries with fixed payoffs and varying probabilities in the style of \citet{holt2002risk}. In the first block, the choice is between 3850 and 100 with probability $p$ and $1-p$ versus 2000 and 1600 with $p$ and $1-p$ for $p=0.1,0.2,\ldots, 1$. In the other three blocks, the four payoffs are, respectively, 4000 and 500 versus 2250 and 1500; 4000 and 150 versus 2250 and 1500; and 4500 and 50 versus 2500 and 1000. While 116 of the 253 participants were asked all ten questions in all four blocks,  two other groups of 67 and 70 participants only received a selection of, respectively, questions 3, 5, 7, 8, 9, 10 or questions 1, 2, 3, 5, 7, and 10 in each block, leading to a total of 24 questions.

While the last question in each block is a choice between degenerate lotteries such that any agent with an increasing utility function should prefer $X$ over $Y$, the remaining 36 questions have unique indifference levels of $\gamma$. The smallest and largest such indifference levels are given by -1.84 and 2.21, implying that for an agent who deterministically answers all 40 questions in line with CRRA utility, we can tell whether their $\gamma$ is  above 2.21, below -1.84 or in some interval in between as determined by the 36 indifference levels.\footnote{As a small subtlety, those subjects who only received 24 questions have, depending on their group assignment, a smallest indifference level of -0.52 or a largest indifference level of 1.16 while the other boundary remains at, respectively, 2.21 or -1.84.} In particular, it seems fair to conclude that the data set is most suitable for identifying values of $\gamma$ in the range from -1.84 and 2.21. Any estimated values outside of this range are only identified by the functional form of the respective stochastic choice model and should be taken with a grain of salt.

\subsection{Homoskedastic vs. heteroskedastic estimation}

In the five panels of Figure \ref{fig:all_gamma_homo_hetero}, we compare for each of the five stochastic choice models the individual risk aversion estimates $\gamma_i$ in a heteroskedastic and a homoskedastic specification of the model. In the heteroskedastic specification, each of the 253 participants has an individual value of the trembling probability $\kappa_i$ and the precision parameter $\lambda_i$ while the homoskedastic specification assumes that $\kappa$ and $\lambda$ are the same for all individuals.\footnote{As pointed out, e.g., in \citet{von2011heterogeneity}, separate identification of the two error parameters $\kappa_i$ and $\lambda_i$ from small amounts of data may be an issue. However, our focus is on recovering $\gamma_i$. Since we mostly find a good agreement between the heteroskedastic and homoskedastic models, we do not expect much additional insights from a restricted version of the heteroskedastic model ($\kappa$ and $\lambda_i$; or $\lambda$ and $\kappa_i$) which can be thought of as a middle ground between those extremes.} Points that lie close to the diagonal in each of the scatter plots indicate that the estimated $\gamma_i$  from the   heteroskedastic and homoskedastic models lie closely together for individual $i$. For convenience, all values outside the range $[-5,5]$ are projected to $-5$ or $5$. Points with $\kappa_i>0.25$, which are closer to random choice ($\kappa_i=0.5$) than to following the underlying CRRA model are depicted as triangles. 

For the models in panels (b) to (e), we see a good match between the homoskedastic and heteroskedastic estimates with, depending on the model, at least 199 of the 253 estimated $\gamma_i$ within a distance of 0.2 of each other and at least 165 at a distance of less than 0.1. A fair number of the remaining points lie outside the range from -1.84 to 2.21 where we can expect the $\gamma_i$ to be identified accurately, including, e.g. 14 participants (12 risk-averse, 2 risk-loving) whose behavior is consistent with (non-stochastic) CRRA utility for some $\gamma_i$ above the highest or below the lowest indifference-$\gamma$ that appeared in their choice lists. Quite a few of the other outliers -- especially those that do not lie at the $\pm 5$-boundary -- are marked with a triangle, suggesting that here the CRRA model is a poor behavioral fit.\footnote{\label{3people}For illustration, there are three participants who always choose lottery $Y$, thus picking the less risky option for all non-degenerate lottery pairs and the smaller amount of money for all degenerate ones. Expected utility can only rationalize this by combining sufficiently strong risk aversion with a distaste for money, i.e., a decreasing and concave utility function. While our CRRA model only allows for increasing utility, the stochastic choice models can perfectly match this behavior by combining risk-lovingness and taste for money ($\gamma_i \rightarrow -\infty$) with an  agent  who consistently picks the opposite of their true preference ($\kappa_i=1$ and $\lambda_i\rightarrow \infty$). Another three participants claim indifference between all pairs of lotteries such that random choice ($\lambda_i \rightarrow 0$ or $\kappa_i\rightarrow 1$ with arbitrary $\gamma_i$) maximizes their likelihood contribution.}

\begin{figure}[h!]
     \centering
     \begin{subfigure}[b]{0.45\textwidth}
         \centering
         \includegraphics[width=\textwidth]{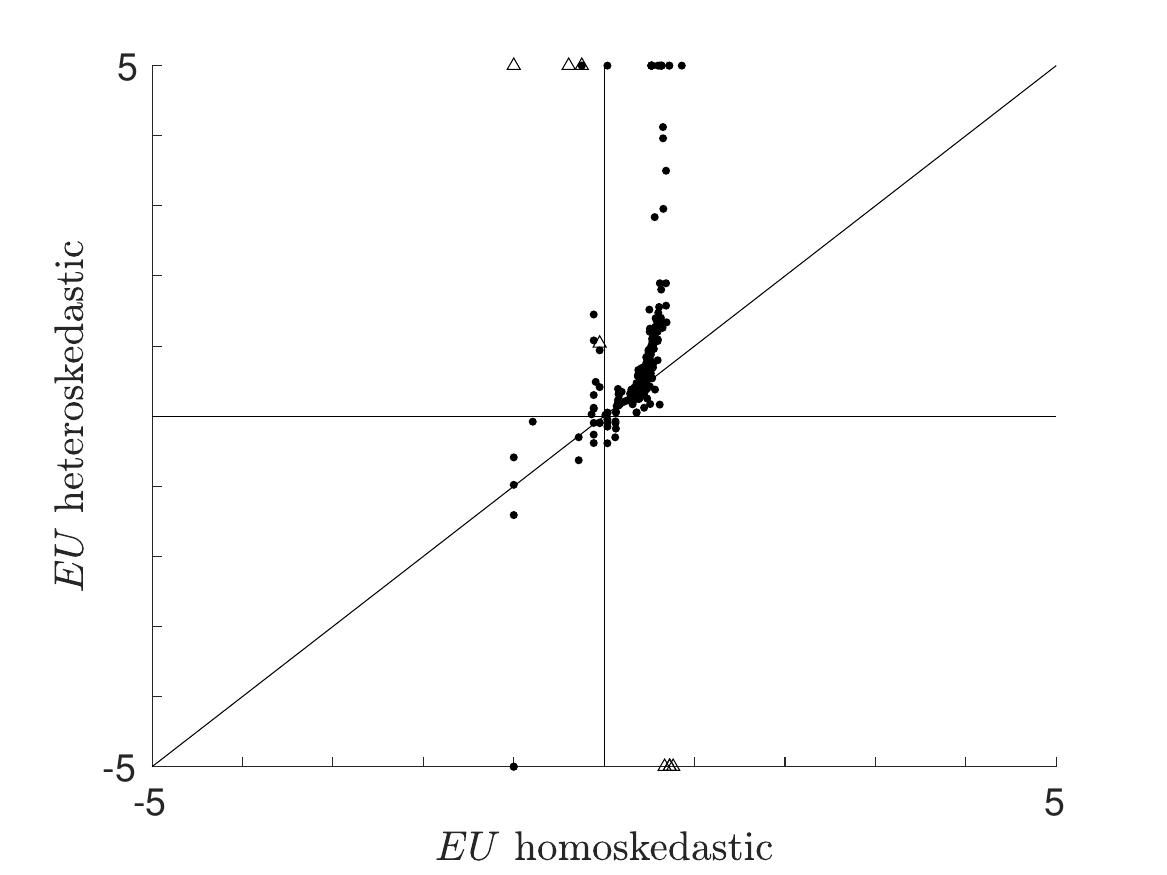}
         \caption{}
         \label{fig:eu_gamma_homo_hetero}
     \end{subfigure}
     \hfill
     \begin{subfigure}[b]{0.45\textwidth}
         \centering
         \includegraphics[width=\textwidth]{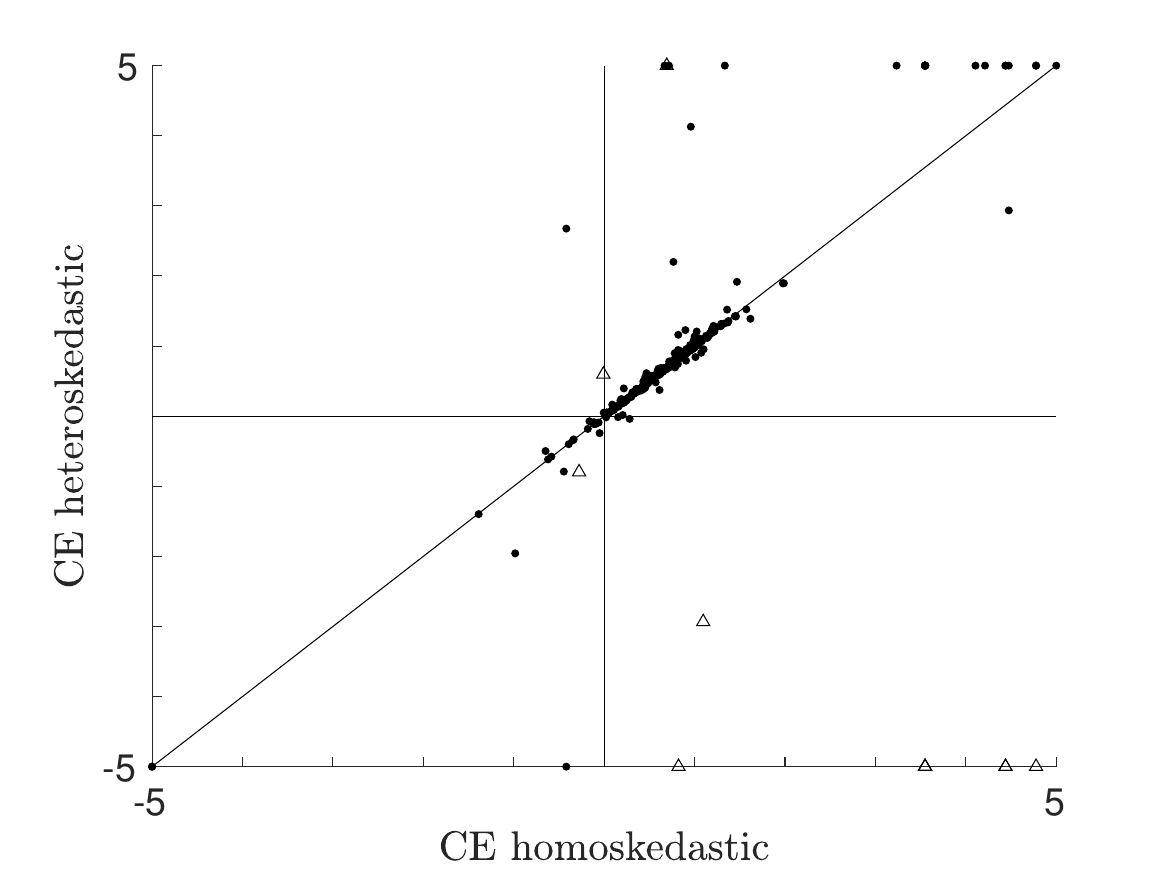}
         \caption{}
         \label{fig:ce_gamma_homo_hetero}
     \end{subfigure}
      \begin{subfigure}[b]{0.45\textwidth}
         \centering
         \includegraphics[width=\textwidth]{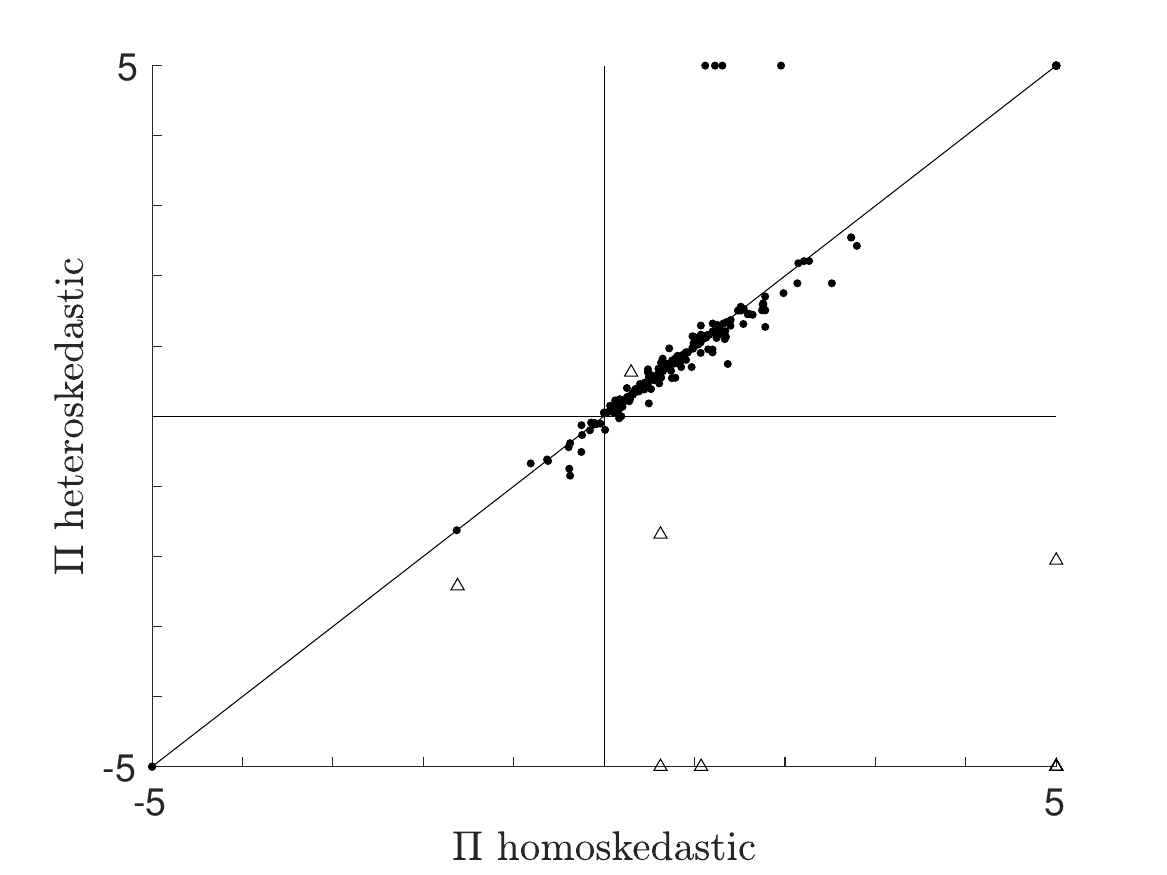}
         \caption{}
         \label{fig:pi_gamma_homo_hetero}
     \end{subfigure}
     \hfill
     \begin{subfigure}[b]{0.45\textwidth}
         \centering
         \includegraphics[width=\textwidth]{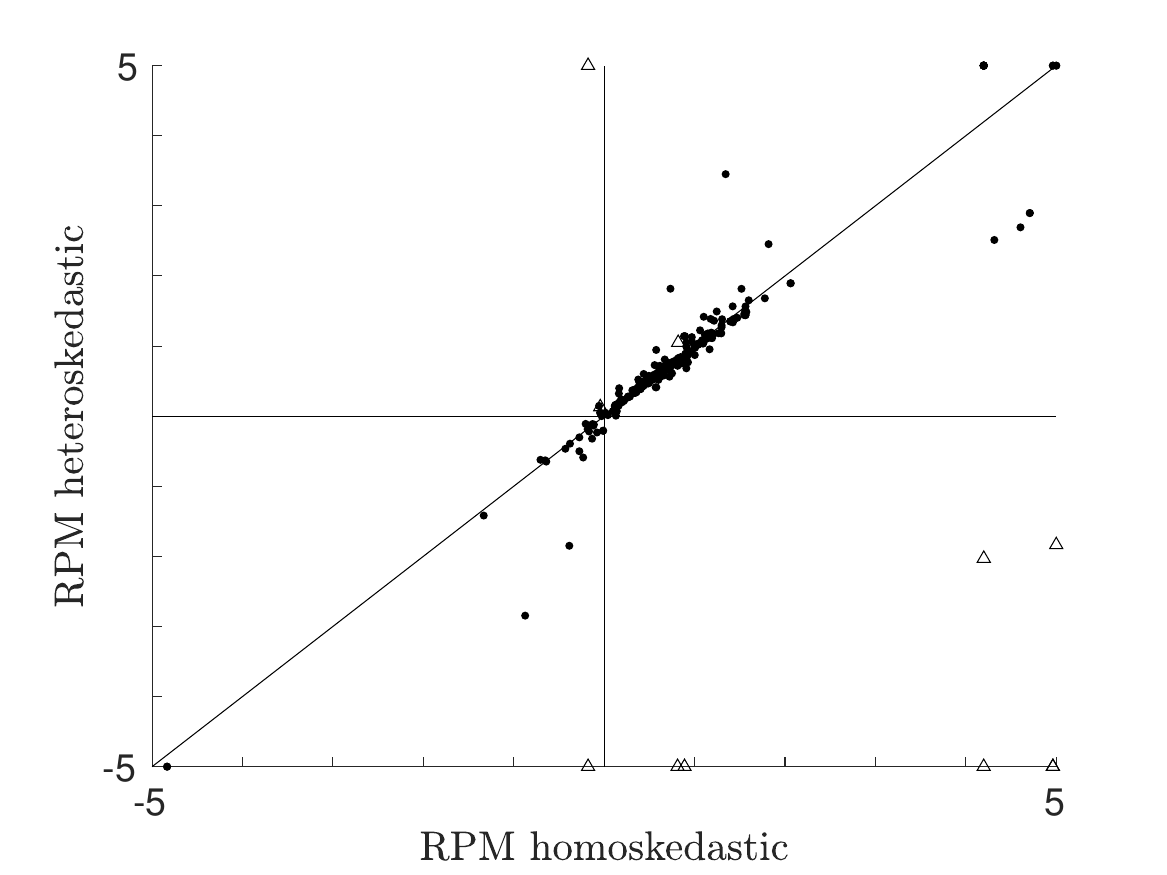}
         \caption{}
         \label{fig:rpm_gamma_homo_hetero}
     \end{subfigure}
      \begin{subfigure}[b]{0.45\textwidth}
         \centering
         \includegraphics[width=\textwidth]{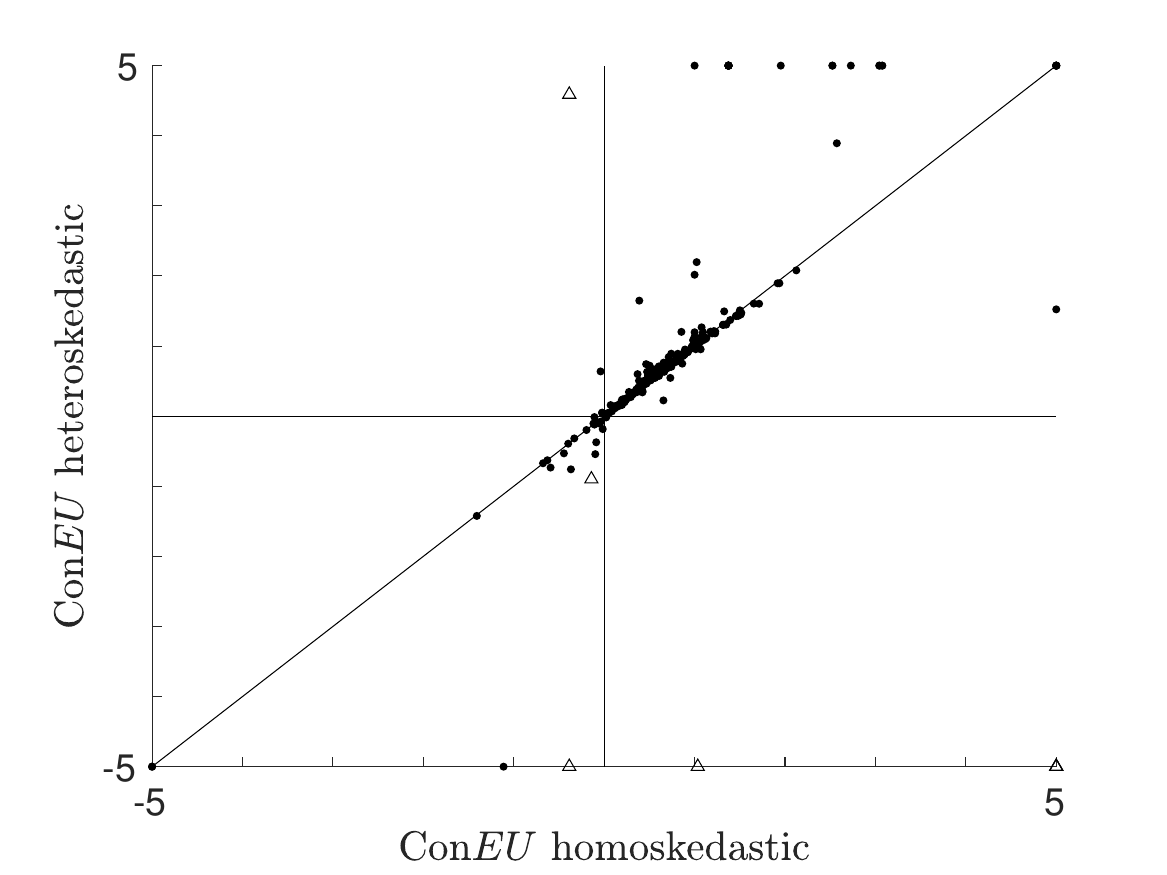}
         \caption{}
         \label{fig:coneu_gamma_homo_hetero}
     \end{subfigure}
        \caption{Homoskedastic vs. heteroskedastic estimates of $\gamma_i$ for each model. $\gamma_i$ estimates outside the range $[-5,5]$ are projected onto the boundary.}
        \label{fig:all_gamma_homo_hetero}
\end{figure}

For all models except for the $EU$-based model, we thus observe -- within the boundaries of what could be expected -- a good match between the homoskedastic and heteroskedastic specifications. This observation is good news about both specifications. On the one hand, we see that the assumption of homoskedasticity does not lead to strong distortions in the distributions of estimated $\gamma_i$. On the other hand, we see that the distribution of the estimated $\gamma_i$ is not impacted strongly by the fact that in the heteroskedastic model we estimate three parameters per individual. 

In contrast, for the $EU$-based RUM depicted in panel (a), we observe a stark difference between the risk aversion levels estimated in the homoskedastic and heteroskedastic specifications. In the homoskedastic model, the estimated levels of risk aversion are all squashed between -1 and 1. This leads to a downwards distortion in the estimated $\gamma_i$ for all participants whose risk aversion is above 1. Accordingly, we find a difference of more than 0.2 in estimated risk aversion levels between the homoskedastic and heteroskedastic specifications for 	all but 89 participants. 

We thus conclude that out of the ten models shown  in Figure \ref{fig:all_gamma_homo_hetero} the one model that really should be avoided is the homoskedastic version of the $EU$-based RUM, in line with the argument provided by \citet{barseghyan2018estimating}. There are some natural follow-up questions:  Are there major differences between the remaining nine models? And is any one of them better than the others?

\begin{table}[h!]
\centering
\begin{tabular}{|c|c|c|c|c|c|}
\hline
& $EU$ & CE & $\Pi$ & RPM & Con$EU$ \\ \hline
Heteroscedastic & 99& 60& 104& 102& 68 \\ \hline
Homoscedastic & 32& 63& 47& 51& 60 \\ \hline
\end{tabular}
\caption{For each model specification, the table displays the number of subjects for whom this specification leads to the maximal likelihood under, respectively, the heteroskedastic and homoskedastic models.}
\label{tab:max_likelihood_count}
\end{table}

\begin{figure}[h!]
     \centering
     \begin{subfigure}[b]{0.45\textwidth}
         \centering
         \includegraphics[width=\textwidth]{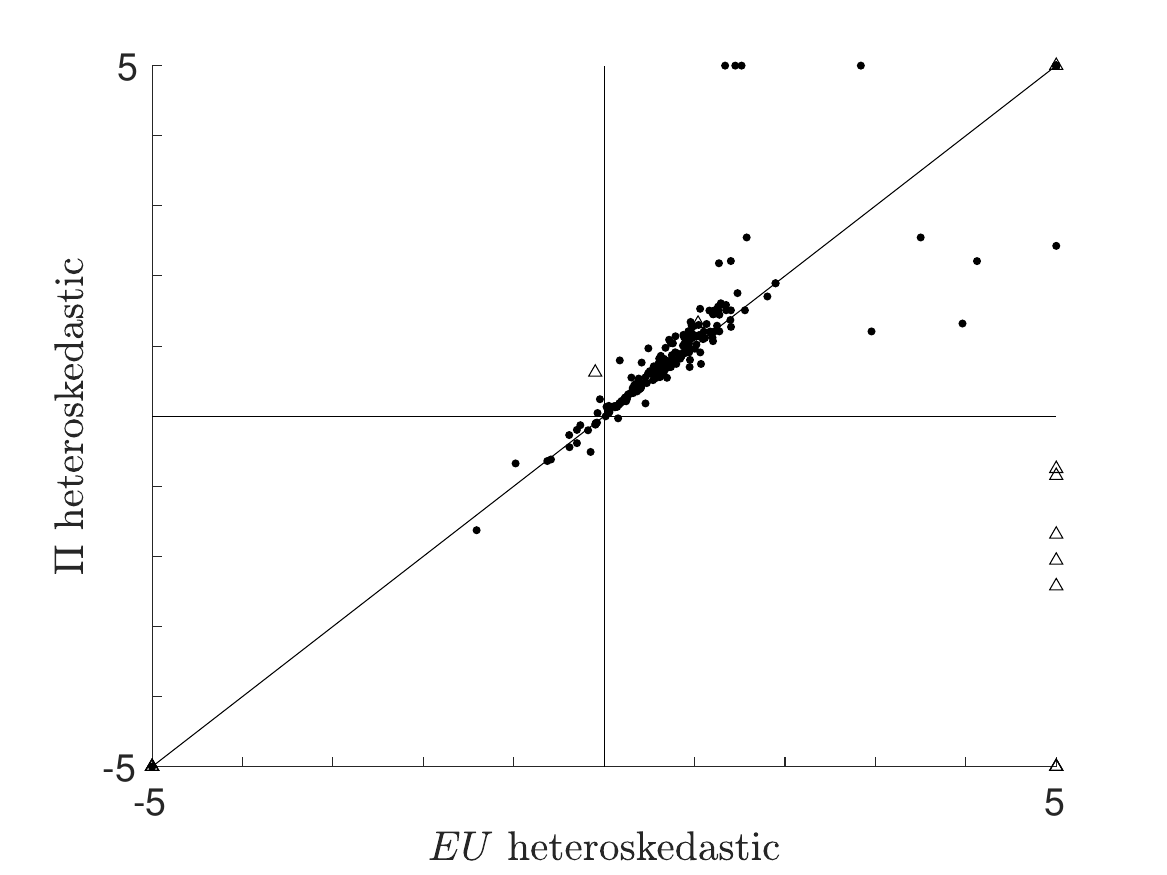}
         \caption{}
         \label{fig:eu_pi_gamma_hetero}
     \end{subfigure}
     \hfill
     \begin{subfigure}[b]{0.45\textwidth}
         \centering
         \includegraphics[width=\textwidth]{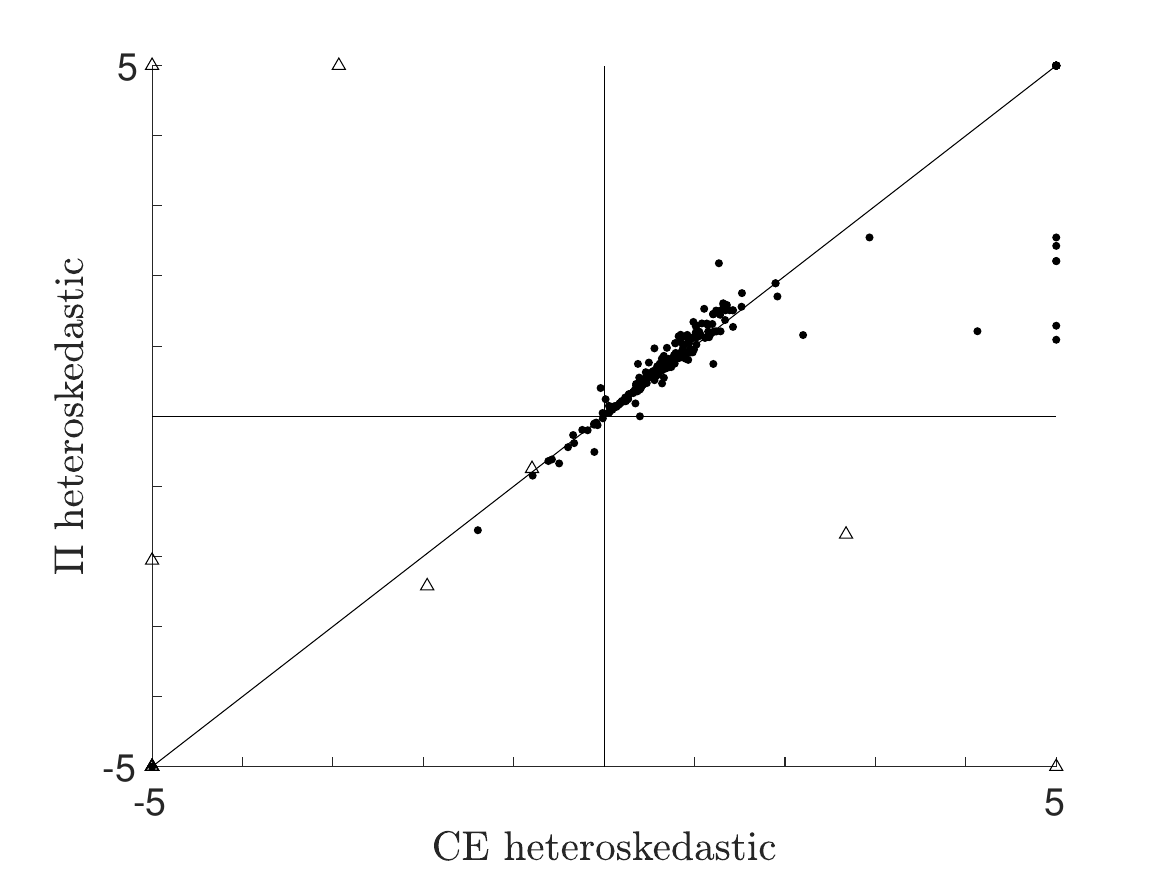}
         \caption{}
         \label{fig:ce_pi_gamma_hetero}
     \end{subfigure}
      \begin{subfigure}[b]{0.45\textwidth}
         \centering
         \includegraphics[width=\textwidth]{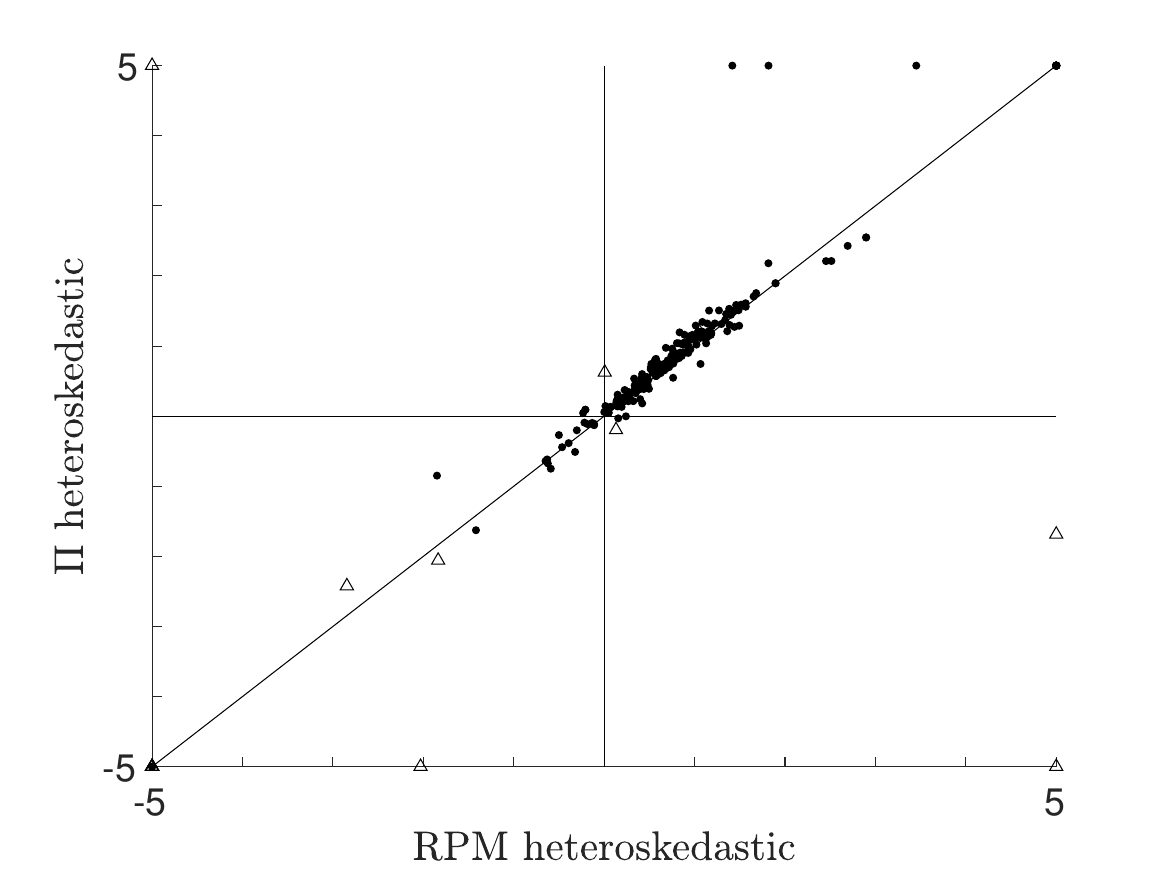}
         \caption{}
         \label{fig:rpm_pi_gamma_hetero}
     \end{subfigure}
     \hfill
     \begin{subfigure}[b]{0.45\textwidth}
         \centering
         \includegraphics[width=\textwidth]{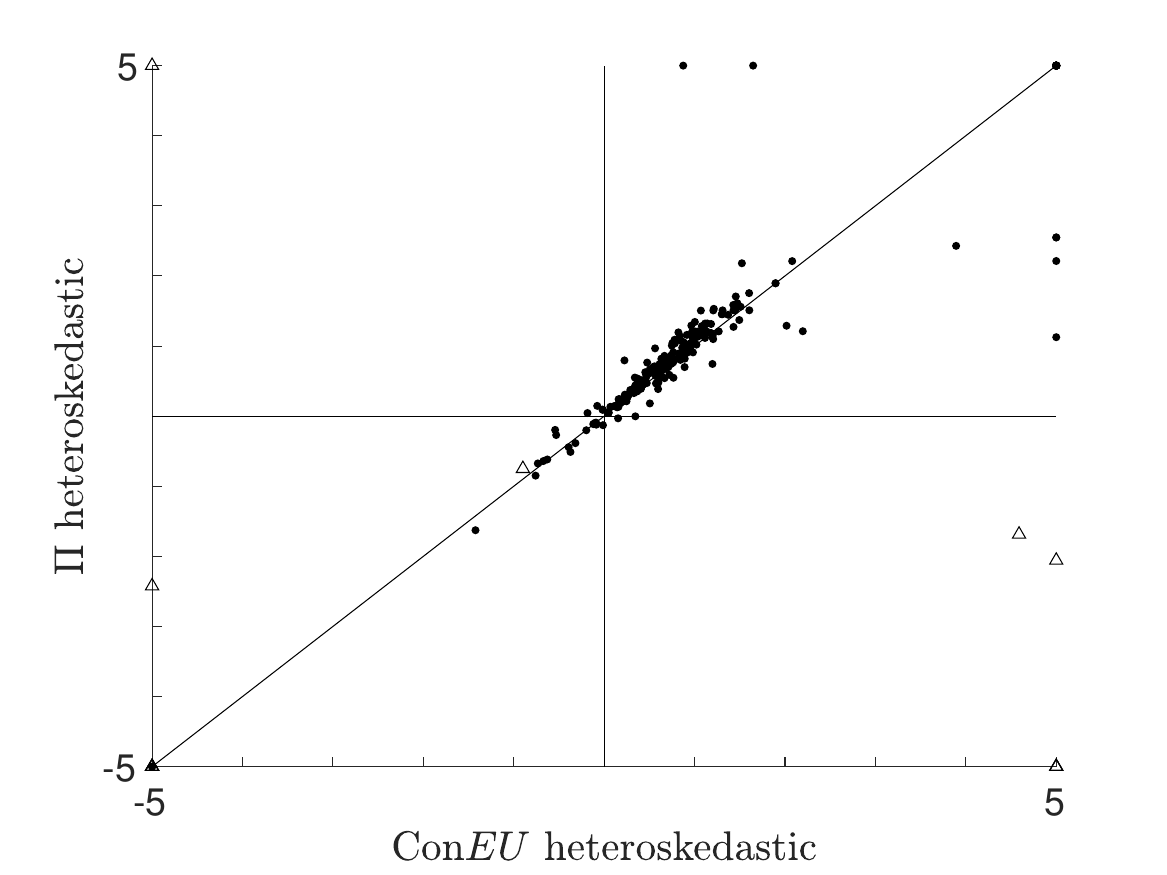}
         \caption{}
         \label{fig:con_eu_pi_gamma_hetero}
     \end{subfigure}
        \caption{Comparison of heteroscedastic estimates of $\gamma_i$ for each model compared to the $\Pi$-based RUM.  $\gamma_i$ estimates outside the range $[-5,5]$ are projected onto the boundary.}
        \label{fig:all_pi_gamma_hetero}
\end{figure}

Figure \ref{fig:all_pi_gamma_hetero} gives a first piece of the answer. The four scatter plots compare the $\gamma_i$ from the heteroskedastic $\Pi$-based RUM to those from the other four heteroskedastic models. We see a good agreement between the five models, especially in the range from -1.84 to 2.21 where the identification of $\gamma_i$ relies on more than the functional form of the error probabilities. This confirms two things: First, all five models seem to be doing more or less the same thing. Second, in the disagreement between the homoskedastic and the heteroskedastic $EU$-based RUM in Figure \ref{fig:eu_gamma_homo_hetero}, it is really the homoskedastic model that differs strongly from the other nine specifications while the heteroskedastic version is in line with models like the $\Pi$-based RUM or the RPM.  

We largely leave the question whether one of the models' empirical performance is a  bit better than the others to future work. Yet, as another small piece of that puzzle, Table \ref{tab:max_likelihood_count} shows for each of the 253 individuals which of the five models leads to the largest likelihood contribution. In the heteroskedastic case, the model that fits the data best for most individuals is our $\Pi$-based RUM, closely followed by the RPM and the $EU$-based RUM. In contrast, in the homoskedastic case, the contextual utility model performs best for the largest group of subjects, followed by the CE-based RUM. None of the five models seems to clearly outperform the others.\footnote{The sum of entries for the heteroskedastic models is larger than for the homoskedastic models due to ties where more than one model maximizes the likelihood of an individual. For instance, there are 38 individuals whose behavior is consistent with maximizing expected CRRA utility and not making mistakes -- a behavior that all models can generate. Other cases where multiple or even all heteroskedastic models reach the same maximal likelihood include, e.g., the six individuals discussed in Footnote \ref{3people}.}

\subsection{The price of non-monotonicity}
Out of the five models under investigation, the $\Pi$-based RUM and the RPM are $\Pi$-monotonic, implying that choice probabilities are monotonic in $\gamma$ for all lottery pairs in the data since those are all $\Pi$-ordered as discussed in Section \ref{DisCE}. While the contextual utility model is not $\Pi$-monotonic as argued in Remark \ref{RemConEU}, one can easily verify that its choice probabilities are monotonic in this example. The remaining two models, the $EU$-based and the CE-based RUM exhibit violations of monotonicity. In the case of the $EU$-based RUM, some of the consequences could be seen clearly in the comparison of the homoskedastic and the heteroskedastic model specifications. In contrast, the impact of non-monotonicity on the performance of the CE-based model is much less evident. One main reason is that in the $EU$-based RUM two problems come together: The differences $V_Y(\gamma)-V_X(\gamma)$ are not just non-monotonic, they also have completely different orders of magnitude depending on the value of $\gamma$. The other four models map the quantities that determine the choice probabilities onto comparable scales, such as the monetary scale for the CE-based and $\Pi$-based RUM, the interval $[0,1]$ for the contextual utility model or coefficients of relative risk aversion in the RPM. In contrast, in the $EU$-based RUM there is no such attempt at normalization. 
 
While the impact of non-monotonicity on the estimates from the CE-based RUM is more subtle, that does not mean that it is not there. Imagine a data set where we observe a few repeated choices from a single lottery pair, e.g. a pair of insurance contracts, in a cross-section of  individuals. This type of empirical setting is typical of data collected ``in the field'' rather than in a controlled laboratory experiment. 
If we now tried to match individual choice probabilities,  a monotonicity failure like in Figure \ref{fig6} would immediately generate an identification problem with multiple values of $\gamma$ leading to exactly the same choice probabilities. 

In our data set, we observe at least 24 different lottery pairs per individual and could thus hope that the identification problems that exist for any given pair of lotteries are washed out in the overall likelihood which combines information from all the different lotteries. To a certain extent, this explains why the CE-based RUM performs fairly well. Nevertheless, the monotonicity violations do not vanish without a trace. Instead, they are a source of multimodality in the likelihoods which can lead to computational problems in the optimization and to counter-intuitive maximum likelihood estimates. This is illustrated in Figure \ref{fig:likelihoods}. The left panel of the figure shows the log-likelihood of one particular participant in the estimation of the homoskedastic CE-based RUM, plotted as a function of $\gamma_i$ at the optimal values of $\lambda$ and $\kappa$. This likelihood function is clearly multimodal with a local maximum near 1, close to the maximum likelihood estimates of the homoskedastic $\Pi$-based RUM and RPM, while the global maximum is at 3.23.\footnote{One can easily show that if all the choice probabilities are monotonic in $\gamma$ -- as is the case for the $\Pi$-based RUM and RPM -- then the likelihood is unimodal in $\gamma_i$ for any fixed $\lambda$ and $\kappa$. Thus, the unimodality in $\gamma$ we see in the figure for those two models is not just a coincidence.} Going back to this individual's choice data, one can see that three blocks of questions were answered consistently with a risk aversion above 1.16 while the switching decision in the remaining block was somewhat less risk-averse, consistent with a risk aversion between 0.15 and 0.65. In this case, the CE-based maximum likelihood estimate of 3.23 seems to be driven mostly by the type of non-monotonic behavior we see in Figure \ref{fig6}.

The right panel of Figure \ref{fig:likelihoods} shows that we also obtain a multimodal likelihood for the pooled model which estimates a single set of parameters $\gamma$, $\lambda$ and $\kappa$ from all  7928 choices made by the 253 subjects across the 40 lotteries. While the global maximum is in line with the global maxima of the RPM and the $\Pi$-based RUM in this case,  the likelihood of the CE-based RUM is \textit{increasing} in $\gamma$ for larger values of $\gamma$ -- which could easily fool a naive attempt at numerically maximizing this likelihood.

\begin{figure}[h!]
     \centering
     \begin{subfigure}[b]{0.45\textwidth}
         \centering
         \includegraphics[width=\textwidth]{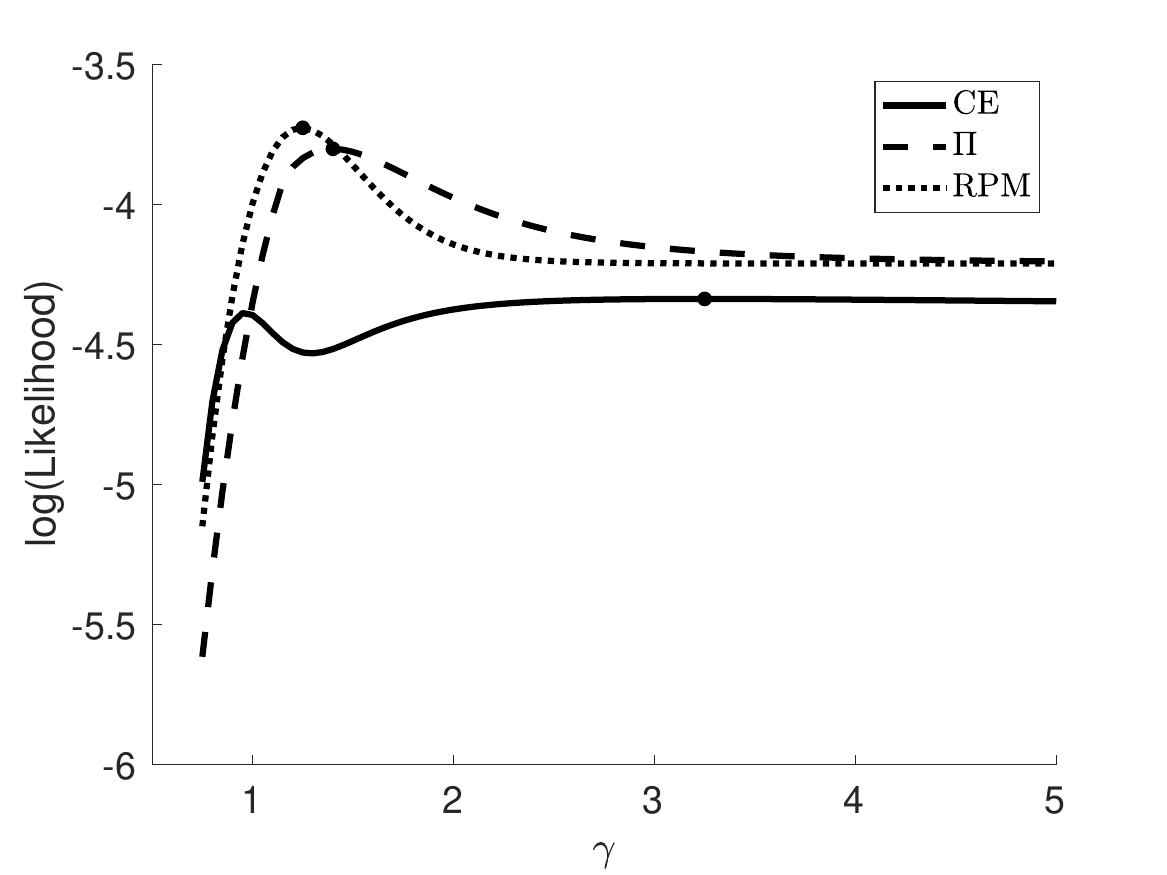}
         \caption{}
         \label{fig:hom_likelihood_ind_133}
     \end{subfigure}
     \hfill
     \begin{subfigure}[b]{0.45\textwidth}
         \centering
         \includegraphics[width=\textwidth]{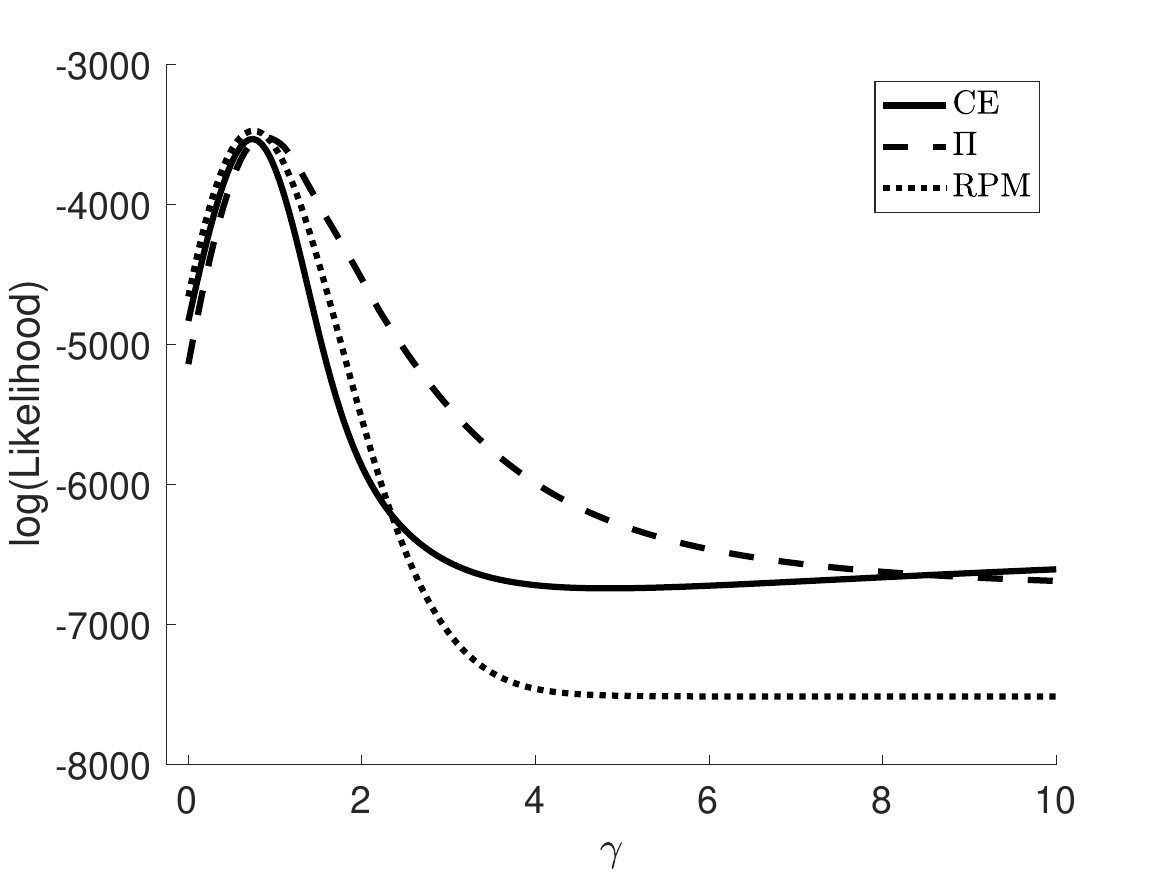}
         \caption{}
         \label{fig:pooled_likelihood}
     \end{subfigure}
        \caption{Log-likelihoods for the CE-based RUM, $\Pi$-based RUM and RPM as a function of $\gamma$. The left panel shows the log-likelihood contribution of the individual with ID 133 from the homoskedastic model while the right panel shows the log-likelihood from the pooled model. In all curves, the parameters $\kappa$ and $\lambda$ are set to the respective optimizers. }
        \label{fig:likelihoods}
\end{figure}

\section{Conclusion}\label{sec7}

The goal of this paper was to emphasize that not all RUMs have bad monotonicity properties when it comes to elicitation of risk preferences. Some RUMs do suffer from monotonicity problems. In particular, the popular $EU$-based RUM should be applied with caution. It should not be combined with a homoskedastic specification of the error term. However, for other RUMs some pessimistic conclusions in earlier research were based on monotonicity criteria that were perhaps overly ambitious, demanding monotonicity for all mean-preserving spreads and more. In this paper, we have argued that the set of lotteries for which one demands monotonicity should be narrowed down -- in line with some classical observations from the literature on choice under background risk. Restricting the set of good lotteries in this way enables us to broaden the set of good stochastic choice models to include again some RUMs. Having more flexibility in the specification of stochastic choice models seems desirable  given that people might make different mistakes in different situations. After all, the RPM, the  CE-based RUM and the $\Pi$-based RUM correspond to different models of how mistakes arise in decision-making: Are mistakes mostly due to uncertainty about one's own preferences? Then an RPM may be the best model for what is going on.  Do agents transform the values of all lotteries into monetary units to make a comparison? Or do they try to quantify the differences between lotteries in monetary terms? The answers may determine whether the  CE-based RUM or the $\Pi$-based RUM is more appropriate. Once it comes to empirical work, we have to live with the mistakes people actually make, not with those they should be making -- and possibly adjust \textit{both} the model and the experiment to circumstances.

\appendix
\section{Proofs}\label{AppA}

\begin{proof}[Proof of Lemma \ref{CElimit}]
The result follows immediately from our assumption that $\CE_\theta(X)$ converges to $\CE_\infty(X)$ and from the fact that the infinitely risk-averse agent is indifferent between two lotteries if and only they have the same worst-possible outcome.
\end{proof}

\begin{proof}[Proof of existence and uniqueness in Definition \ref{Defpi}]
By our assumptions on $u_\theta$, for $X$ and $Y$ with $E[u_{\theta}(X)] \leq E[u_{\theta}(Y)]$, the function $h(\pi)=E[u_{\theta}(X+ \pi)] - E[u_{\theta}(Y)]$ is strictly increasing and continuous in $\pi$ with $h(0)\leq 0$. Moreover, for sufficiently large $\pi$, we have $h(\pi)>0$, i.e., it is always possible to shift $X$ so far upwards that it dominates. Thus, the equation $h(\pi)=0$ has a unique solution. 
\end{proof}

\begin{proof}[Proof of Lemma \ref{lempi}]
Existence and uniqueness of $\pi_\theta(X,Y)$ have been settled already in the context of Definition \ref{Defpi}. 
The claims about the sign of $\pi_{\theta}(X,Y)$ are immediate from its definition. To see the continuity of $\pi_{\theta}(X,Y)$ in $\theta$, note that we can also write the definition of $\pi_{\theta}(X,Y)$ for $(X,Y)$ and $\theta$ with $E[u_{\theta}(Y)] \geq E[u_{\theta}(X)]$ as $F(\theta,\pi_\theta(X,Y))=0$ where the bivariate function $F$ is defined as $F(\theta,\pi)=\CE_\theta(X+\pi)-\CE_\theta(Y)$. By Assumption \ref{A1}, $F$ is continuous in $\theta$. Moreover, by elementary properties of the certainty equivalent, $F$ is continuous and strictly increasing in $\pi$. Continuity of $\pi_\theta(X,Y)$ now follows from the implicit function theorem for continuous and strictly increasing functions found e.g. in \citet{kumagai1980implicit} or \citet[][Theorem 1.H.3]{dontchev2009implicit}.\footnote{Both results are far more general than the result for bivariate functions we need here.} Continuity in the case $E[u_{\theta}(Y)] \leq E[u_{\theta}(X)]$  follows from $\pi_\theta(X,Y)=-\pi_\theta(Y,X)$, noting that we have already shown continuity of  $\pi_\theta(Y,X)$ (under reversed roles of $X$ and $Y$). Finally, in the limit $\theta\rightarrow \infty$, $F(\theta,\pi)$ converges to the strictly increasing and continuous limit $F_\infty(\pi)=\min_i x_i +\pi -\min_j y_j$, implying that there would be a contradiction if $\pi_\theta(X,Y)$ would not converge to $\min_j y_j-\min_i x_i$.
\end{proof}

\begin{proof}[Proof of Lemma \ref{OmegaPi}]
If such a pair of $\theta_1$ and $\theta_2$ did not exist then the pair $(X,Y)$ would actually be $\Pi$-ordered which is a contradiction.
\end{proof}

\begin{proof}[Proof of Lemma \ref{lem4912}]
Denote by $\tilde{X}$ the lottery that takes values $9$ and 12 with equal probability while $\tilde{Y}$ denotes the lottery which always takes the value 10. We know that $\CE_\theta(\tilde{Y})=10$ for all $\theta$ while $\CE_\theta(\tilde{X})$ is strictly decreasing in $\theta$ by \eqref{PR} and Theorem 1 in \citet{pratt1964}. Since $\CE_0(\tilde{X})=10.5$ and 
$\CE_\infty(\tilde{X})=9$, we know that there is some finite $\tilde{\theta}$ with $\CE_{\tilde{\theta}}(\tilde{X})=\CE_{\tilde{\theta}}(\tilde{Y})$ and $\CE_{\theta}(\tilde{X})<\CE_{\theta}(\tilde{Y})$ for any finite $\theta >\tilde{\theta}$. Since $X$ and $Y$ can be interpreted as compound lotteries that pay $4$ with probability 1/3 and, respectively, $\tilde{X}$ and  $\tilde{Y}$ otherwise, it follows that the preferences for $X$ and $Y$ must be in line with those for $\tilde{X}$ and $\tilde{Y}$. Accordingly, $\CE_{\tilde{\theta}}(X)=\CE_{\tilde{\theta}}(Y)$ and $\CE_{\theta}(X)<\CE_{\theta}(Y)$ for any finite $\theta >\tilde{\theta}$ as well as $\CE_{\theta}(X)>\CE_{\theta}(Y)$ for any  $\theta <\tilde{\theta}$. Thus, $X$ and $Y$ are $\Omega$-ordered.  Moreover, we have $\lim_{\theta\rightarrow \infty} \CE_{\theta}(X) =\lim_{\theta\rightarrow \infty} \CE_{\theta}(Y)=4$. By continuity of the compensating premia and the facts that they vanish in case of indifference and are positive in case of a strict preference, we see that  $X$ and $Y$ cannot be $\Pi$-ordered because the compensating premia cannot be monotonic. 
\end{proof}

\begin{proof}[Proof of Lemma \ref{Xy}]
As shown in \citet{pratt1964}, the solution $\pi^e(\theta)$ to 
\[
u_\theta(y-\pi^e(\theta))=E[u_\theta(X)]\;\; \Leftrightarrow\;\; \pi^e(\theta)=y-\CE_\theta(X)
\]
is increasing in $\theta$.\footnote{The notation $\pi^e$ is inspired by \citet{kimball1990} who calls this the equivalent risk premium. Pratt's result assumes that $X$ has mean $y$ but this obviously only shifts the level of $\pi^e_\theta$ by a constant that does not depend on $\theta$.} Moreover, by the Lemma in Section 2 of \citet{kimball1990}, monotonicity of $\pi^e_\theta$ implies monotonicity of the solution $\pi^c(\theta)$ to the equation 
\[
u_\theta(y)=E[u_\theta(X+\pi^c(\theta))],
\]
the so-called compensating premium. Our compensating premium $\pi_\theta(X,Y)$ coincides with $\pi^c(\theta)$ if $u_\theta(y) \geq E[u_\theta(X)]$ and with $\pi^e(\theta)$ if $u_\theta(y) \leq E[u_\theta(X)]$. In the former case, $\pi_\theta(X,Y)$ is non-negative while in the latter case it is non-positive.  Since both expressions are increasing in $\theta$ and since both vanish simultaneously for $u_\theta(y) = E[u_\theta(X)]$, it follows that $\pi_\theta(X,Y)$ is increasing in $\theta$.
\end{proof}

\begin{proof}[Proof of Lemma \ref{kihlstrom}]
We can assume without loss of generality that $E[S]=0$ because otherwise $\pi_\theta({X,Y})$ is just shifted by a constant. The Theorem in \citet{kihlstrom1981} shows that because the functions $u_\theta$ are ranked in terms of their coefficients of absolute risk aversion and because they exhibit decreasing absolute risk aversion, the so-called indirect utility functions $$v_\theta(\cdot)=E[u_\theta(Y+\cdot)]$$ are also ranked in terms of their coefficients of absolute risk aversion, i.e. $-v_\theta''(x)/v_\theta'(x)$ is decreasing in $\theta$. Following the argument in \citet{kihlstrom1981}, we thus conclude that the premium $\pi^e(\theta)$ defined as the solution to
\[
E[u_\theta(Y-\pi^e(\theta))]=v_\theta(-\pi^e(\theta))=E[v_\theta(S)]=E[u_\theta(Y+X)]
\]
is increasing in $\theta$ by Theorem 1 in \citet{pratt1964}. Moreover, by the Lemma about the connection between equivalent and compensating risk premia from \citet{kimball1990}, we conclude that the compensating risk premium defined as the solution to 
\[
v_\theta(0)=E[v_\theta(S+\pi^c(\theta))]
\]
is increasing in $\theta$ as well. The step from monotonicity of $\pi^e(\theta)$ and $\pi^c(\theta)$ to monotonicity of $\pi_{\theta}(X,Y)$ follows by the same arguments as in the proof of Lemma \ref{Xy}.
\end{proof}

\begin{proof}[Proof of Proposition \ref{binaryProp}]
Define the functions 
\[
F^c(\pi,\theta)= p u_\theta(a+\pi)+ (1-p) u_\theta(b+\pi) - p u_\theta(c)- (1-p) u_\theta(d)
\]
and
\[
F^e(\pi,\theta)= p u_\theta(a)+ (1-p) u_\theta(b) - p u_\theta(c-\pi)- (1-p) u_\theta(d-\pi)
\]
and define the premia $\pi^c(\theta)$ and $\pi^e(\theta)$ as the solutions to the equations $F^c(\pi^c(\theta),\theta)=0$ and $F^e(\pi^e(\theta),\theta)=0$. Thus, $\pi_\theta(X,Y)=\pi^c(\theta)$ whenever $Y\succeq_\theta X$ and, thus, $\pi^c(\theta)\geq 0$; and $\pi_\theta(X,Y)=\pi^e(\theta)$ whenever $X\succeq_\theta Y$ and, thus, $\pi^e(\theta)\leq 0$. Similarly to the proof of Lemma \ref{Xy}, it now suffices to show that $\pi^c(\theta)$ and $\pi^e(\theta)$ are increasing in $\theta$ in these cases. By the strict monotonicity of $u_\theta$, it follows that after shifting one of the lotteries upwards by $\pi^c(\theta)\geq 0$ or $-\pi^e(\theta)\geq 0$ the outcomes of the less risky lottery must lie strictly in between the outcomes of the more risky lottery -- otherwise one lottery would dominate the other, contradicting indifference, so, respectively,
\begin{equation}
a+ \pi^c(\theta) < c < d< b+ \pi^c(\theta)
\end{equation}
and 
\begin{equation}\label{abcdtilde}
 a  < c - \pi^e(\theta) < d - \pi^e(\theta)< b
\end{equation}
and, in particular, $\pi^e(\theta)$ and  $\pi^c(\theta)$ always exist because, e.g., equalizing $a+ \pi^c(\theta) = c$ would let the shifted $X$ dominate $Y$ while setting  $d= b+ \pi^c(\theta)$ would let $Y$ dominate the shifted $X$. Let us now study the monotonicity behavior of $\pi^e(\theta)$ under CRRA utility, so $\theta$ becomes $\gamma$, for $\gamma\neq 1$. The result for all $\gamma$ then follows by continuity of the risk premium. We thus apply the usual implicit function theorem for smooth functions to 
\begin{align}\label{Fe}
0&\stackrel{!}{=}F^e(\pi^e(\gamma),\gamma)\nonumber\\
&= p a^{1-\gamma}+ (1-p) b^{1-\gamma} - p (c-\pi^e(\gamma))^{1-\gamma}- (1-p) (d-\pi^e(\gamma))^{1-\gamma} 
\end{align}
where we dropped the multiplicative factor $\frac{1}{1-\gamma}$, thus slightly altering the definition of $F^e$. The derivative we need is thus given by 
\begin{equation}\label{dpiFe}
\frac{d\pi^e(\gamma)}{d \gamma}=- \frac{\frac{\partial}{\partial \gamma} F^e(\pi^e(\gamma),\gamma)}{\frac{\partial}{\partial \pi} F^e(\pi^e(\gamma),\gamma)}.
\end{equation}
The derivative in the denominator is given by 
\[
\frac{\partial}{\partial \pi} F^e(\pi,\gamma) = (1-\gamma)\left(
p(c-\pi)^{-\gamma} + (1-p)(d-\pi)^{-\gamma}
\right)
\]
which is always non-zero for $\pi=\pi^e(\gamma)$ and has the same sign as $(1-\gamma)$. The derivative in the numerator is 
\begin{align*}
\frac{\partial}{\partial \gamma} F^e(\pi,\gamma) =& 
-p a^{1-\gamma}\log(a)- (1-p) b^{1-\gamma}\log(b)\\ &+ p (c-\pi)^{1-\gamma}\log(c-\pi)+ (1-p) (d-\pi)^{1-\gamma} \log(d-\pi). 
\end{align*}
so that 
\begin{align*}
&(1-\gamma)\frac{\partial}{\partial \gamma} F^e(\pi,\gamma) =
-p a^{1-\gamma}\log(a^{1-\gamma})- (1-p) b^{1-\gamma}\log(b^{1-\gamma})\\ &+ p (c-\pi)^{1-\gamma}\log( (c-\pi)^{1-\gamma})+ (1-p) (d-\pi)^{1-\gamma} \log( (d-\pi)^{1-\gamma}). 
\end{align*}
Now, define $\tilde{a}=a^{1-\gamma}$, $\tilde{b}=b^{1-\gamma}$, $\tilde{c}=(c-\pi^e(\gamma))^{1-\gamma}$ and $\tilde{d}=(d-\pi^e(\gamma))^{1-\gamma}$ as well as $h(z)=z \log(z)$ and note that we can write  
\begin{align}\label{abcdineq}
&(1-\gamma)\frac{\partial}{\partial \gamma} F^e(\pi^e(\gamma),\gamma) =
-\left(p h(\tilde{a})+ (1-p) h(\tilde{b})\right)+ \left(p h(\tilde{c})+ (1-p) h(\tilde{d})\right). 
\end{align}
Next, note that $h:\mathbb{R}^+ \rightarrow \mathbb{R}$ is a convex function and that the numbers $\tilde{a}$, $\tilde{b}$, $\tilde{c}$ and  $\tilde{d}$ satisfy $p \tilde{a}+(1-p) \tilde{b} = p\tilde{c}+(1-p)\tilde{d}$ by \eqref{Fe}, as well as,  $\tilde{a}>\tilde{c}>\tilde{d}>\tilde{b}$ or $\tilde{a}<\tilde{c}<\tilde{d}<\tilde{b}$, depending on the sign of $1-\gamma$. In either case, it now follows from elementary properties of convex functions that $(1-\gamma)\frac{\partial}{\partial \gamma} F^e(\pi,\pi^e(\gamma))  <0$. Namely,  we can interpret the right hand side of \eqref{abcdineq} as comparing the line segment connecting the points $(\tilde{a}, h(\tilde{a}))$ and  $(\tilde{b}, h(\tilde{b}))$ to the line segment connecting  $(\tilde{c}, h(\tilde{c}))$ and  $(\tilde{d}, h(\tilde{d}))$ when evaluated at 
$\tilde{m}:=p \tilde{a}+(1-p) \tilde{b} = p\tilde{c}+(1-p)\tilde{d}$. Since the numbers $\tilde{a}$ and  $\tilde{b}$ lie further apart than  $\tilde{c}$ and  $\tilde{d}$, it follows from the convexity of $h$ that the right hand side of \eqref{abcdineq} is negative. Thus, $\frac{\partial}{\partial \gamma} F^e(\pi,\pi^e(\gamma))$ always has the opposite sign of $1-\gamma$. 

In \eqref{dpiFe}, we thus have minus the ratio of a quantity with the same sign as $1-\gamma$ and a quantity with the opposite sign as $1-\gamma$ which gives a positive number. It follows that $\pi^e(\gamma)$ is increasing in $\gamma$, implying that $\pi_\gamma(X,Y)=\pi^e(\gamma)$ is increasing in $\gamma$ as well. 

We close the proof by showing that $\pi^c(\alpha)$ is increasing in $\alpha\neq 0$ in the case of CARA utility. The remaining two cases, i.e., monotonicity of $\pi^e$ for CARA and of $\pi^c$ for CRRA then follow analogously. In this case, 
\begin{align*}\label{Fc}
0&\stackrel{!}{=}F^c(\pi^c(\alpha),\gamma)\nonumber\\
&=  p e^{-\alpha (a+\pi^c(\gamma))}+ (1-p) e^{-\alpha (b+\pi^c(\gamma))} - p e^{-\alpha c}- (1-p)e^{-\alpha d} 
\end{align*}
where we have dropped the multiplicative factor $-\frac{1}{\alpha}$, thus slightly altering the definition of $F^c$. We now need to compute
\begin{equation}\label{dpiFc}
\frac{d\pi^c(\alpha)}{d \alpha}=- \frac{\frac{\partial}{\partial \alpha} F^c(\pi^c(\alpha),\alpha)}{\frac{\partial}{\partial \pi} F^c(\pi^c(\alpha),\alpha)}.
\end{equation}
The derivative in the denominator is given by 
\[
\frac{\partial}{\partial \pi} F^c(\pi,\alpha) =  -\alpha \left(p e^{-\alpha (a+\pi^c(\gamma))}+ (1-p) e^{-\alpha (b+\pi^c(\gamma))}\right)
\]
which always has the opposite sign of $\alpha$. The derivative in the numerator is given by 
\[
\frac{\partial}{\partial \alpha} F^c(\pi,\alpha) =
-p (a+\pi)e^{-\alpha (a+\pi)}- (1-p)(b+\pi) e^{-\alpha (b+\pi)} + p ce^{-\alpha c}+ (1-p)de^{-\alpha d}. 
\]
Now, defining $\tilde{a}=e^{-\alpha (a+\pi^c(\alpha))}$, $\tilde{b}=e^{-\alpha (b+\pi^c(\alpha))}$, $\tilde{c}=e^{-\alpha c}$ and  $\tilde{d}=e^{-\alpha d}$ as well as, again, $h(z)=z \log(z)$ we can write 
\[
\alpha \frac{\partial}{\partial \alpha} F^c(\pi^c(\alpha),\alpha)= \left(p h(\tilde{a})+ (1-p) h(\tilde{b})\right)- \left(p h(\tilde{c})+ (1-p) h(\tilde{d})\right).
\]
As in the CRRA case, we have $p \tilde{a}+(1-p) \tilde{b} = p\tilde{c}+(1-p)\tilde{d}$  and either  $\tilde{a}>\tilde{c}>\tilde{d}>\tilde{b}$ or $\tilde{a}<\tilde{c}<\tilde{d}<\tilde{b}$, depending on the sign of $\alpha$. It follows that $\alpha \frac{\partial}{\partial \alpha} F^c(\pi^c(\alpha),\alpha)>0$ so $\alpha$ and $\frac{\partial}{\partial \alpha} F^c(\pi^c(\alpha),\alpha)$ always have the same sign. Putting the pieces together, we find that the right hand side in \eqref{dpiFc} is always positive, implying that $\pi^c(\alpha)$ is increasing in $\alpha$.
\end{proof}

\begin{proof}[Proof of Lemma \ref{lemtop}]
This is a direct consequence of the fact that
\[
E[u_\theta(X_1+\pi_\theta(X_1,X_2))] = E[u_\theta(X_2)]=E[u_\theta(X_3)].
\]
\end{proof}

\begin{proof}[Proof of Proposition \ref{PiRUMPi}]
Let $(X,Y)$ be a $\Pi$-ordered pair so that $\pi_\theta({X,Y})$ is increasing in $\theta$. We need to show that the probability of choosing $X$ decreases in $\theta$. Denote by $V_X(\theta)=-\pi_\theta({X,X_{\max}})$ and $V_Y(\theta)=-\pi_\theta({Y,X_{\max}})$ the respective preference indices. If $E[u_\theta(X)]\geq E[u_\theta(Y)]$ then $X_{\max}=X$ and $V_Y(\theta)-V_X(\theta)=-\pi_\theta({Y,X})+\pi_\theta({X,X})=\pi_\theta({X,Y})$. If $E[u_\theta(X)]< E[u_\theta(Y)]$ then $X_{\max}=Y$ and $V_Y(\theta)-V_X(\theta)=-\pi_\theta({Y,Y})+\pi_\theta({X,Y})=\pi_\theta({X,Y})$. Thus, $V_Y(\theta)-V_X(\theta)=\pi_\theta(X,Y)$. Accordingly, by \eqref{genRUMXY}, we can write the probability of choosing $Y$ as 
\[
\mathbb{P}(V_Y+\xi_Y/\lambda > V_X+\xi_X/\lambda )=\Phi\left(\lambda \pi_{X,Y}(\theta)\right)
\]
where $\Phi$ denotes the cumulative distribution function of $\xi_X-\xi_Y$ which is strictly increasing by our continuous distribution assumption. Thus the probability of choosing $Y$ is increasing in $\theta$ and the probability of choosing $X$ is decreasing.
\end{proof}

\begin{proof}[Proof of Proposition \ref{CARAmon}]
By Lemma \ref{CARApi}, we know that with CARA utility $\pi_\theta(X_i,X_{\max}(\theta)) = \CE_\theta(X_{\max}(\theta))-\CE_\theta(X_i)
$ and thus $V_i(\theta)=\CE_\theta(X_i)-\CE_\theta(X_{\max}(\theta))$ holds for the $\Pi$-based RUM. Clearly, the choice probabilities are not changed if we replace $V_i(\theta)$ by $\tilde{V}_i(\theta)=V_i(\theta)+C$ for some constant $C$ that does not depend on $i$. Choosing $C=\CE_\theta(X_{\max}(\theta))$ gives $\tilde{V}_i(\theta)= \CE_\theta(X_i)$ which corresponds exactly to the CE-based RUM.
\end{proof}

\begin{proof}[Proof of Proposition \ref{onlyCara}]
For any gamble $X$ and real number $c$ such that $X+c$ only takes values in $\S$, we have $\pi_\theta({X,X+c})=c$ for all $\theta$, implying that $(X,X+c)$ is a $\Pi$-ordered pair. Moreover, we have that $\CE_\theta(X+c)-\CE_\theta(X)=c$ for $\theta \in \{0,\infty\}$. Monotonicity of  $\CE_\theta(X+c)-\CE_\theta(X)$ for all $\Pi$-ordered pairs now implies that $\CE_\theta(X+c)-\CE_\theta(X)$ must be constant for $\theta\in(0,\infty)$. We have thus shown translation invariance, $\CE_\theta(X+c)-\CE_\theta(X)=c$ for all $X$ and $c$. Yet,  translation invariance does not hold for any non-CARA utility functions, see e.g. Theorem 2.2 in \citet{Mueller}. Thus, all the functions $u_\theta$ must be CARA.
\end{proof}

\begin{proof}[Proof of Proposition \ref{PiRUMprop}]
Continuity of the premia $\pi_\theta({X_i,X_j})$ in $\theta$ has been established in Lemma \ref{lempi}. Regarding continuity of $\pi_\theta({X_i,X_{\max}})$, the only remaining worry is then that a discontinuity arises when multiple choices are tied at the top for some $\theta$. Lemma \ref{lemtop} has shown that this is not a problem. Properties \eqref{vanishingnoise} and \eqref{topordering} now both follow from the observation that $V_i(\theta)=0$ if $E[u_\theta(X_i)]\geq E[u_\theta(X_j)]$ for all $j$ and $V_i(\theta)<0$ otherwise, implying that gambles with $V_i(\theta)=0$ have the highest probability of being chosen, and that the choice probabilities of all gambles with $V_i(\theta)<0$ must vanish as $\lambda \rightarrow \infty$. 
\end{proof}

\section{Details on the empirical illustration from Section \ref{empirics}}\label{AEmpirics}

Denote by $L_i$ the set of 24 or 40 lottery pairs $(X_j,Y_j)$ offered to participant $i$. For a lottery $j\in L_i$, we denote by $D_{ij}^X$, $D_{ij}^Y$ and $D_{ij}^I$ three dummy variables that take the value 1 if participant $i$ chooses respectively, option $X_j$, option $Y_j$ or indifference for this lottery pair; and the value 0 otherwise. We denote by $\rho_{\gamma_i,\lambda_i}(X_j)$ the probability of choosing $X_j$ from the pair $(X_j,Y_j)$ in the baseline version of our choice models without the tremble probabilities. For the four RUMs -- the $EU$-based, CE-based and $\Pi$-based RUM as well as contextual utility -- we thus have 
\[
\rho_{\gamma_i,\lambda_i}(X_j) = \frac{\exp(\lambda_i V_{X_j}(\gamma_i))}{\exp(\lambda_i V_{X_j}(\gamma_i))+\exp(\lambda_i V_{Y_j}(\gamma_i))}
\]
where the utility index $V$ corresponds to, respectively, expected utility, the certainty equivalent, the compensating premium or rescaled expected utility. For the RPM, one can show that for the 36 nondegenerate lottery pairs 
\[
\rho_{\gamma_i,\lambda_i}(X_j) =  \frac{\exp(\lambda_i \bar{\gamma}_j)}{\exp(\lambda_i \bar{\gamma}_j)+\exp(\lambda_i \gamma_i)}
\]
where $\bar{\gamma}_j$ denotes the unique value of $\gamma$ that makes a CRRA agent indifferent between $X_j$ and $Y_j$. For the four degenerate lotteries in which $X_j$ dominates $Y_j$, the RPM choice probability is just $\rho_{\gamma_i,\lambda_i}(X_j)=1$.
The overall probability of choosing $X_j$ accounting for the tremble probability $\kappa_i$ is then given by 
\[
\rho_{\gamma_i,\lambda_i,\kappa_i}(X_j) = (1-\kappa_i)\rho_{\gamma_i,\lambda_i}(X_j) + \kappa_i (1-\rho_{\gamma_i,\lambda_i}(X_j))
\]
and the log-likelihood that we are maximizing in the heteroskedastic specification of each model is given by 
\[
\sum_i \sum_{j\in L_i} \left( D_{ij}^X +\frac{D_{ij}^I}2  \right)\log(\rho_{\gamma_i,\lambda_i,\kappa_i}(X_j)) + \left( D_{ij}^Y +\frac{D_{ij}^I}2 \right)\log(1-\rho_{\gamma_i,\lambda_i,\kappa_i}(X_j)).
\]
In the homoskedastic specification, we maximize the same log-likelihood but add the constraints $\lambda_i=\lambda$ and $\kappa_i=\kappa$ for all $i$. Finally, for the pooled model, we add the constraint $\gamma_i=\gamma$. The results of the homoskedastic and heteroskedastic specifications are discussed in detail in the main text. The results of the pooled specification are found in Table \ref{tab:pooled_estimates}. We see that all five models give similar estimates of $\gamma$ with the smallest coming from the $EU$-based RUM and the largest from the $\Pi$-based RUM.

\begin{table}[h!]
\centering
\begin{tabular}{|c|c|c|c|c|c|}
\hline
& $EU$ & CE & $\pi$ & RPM & Con$EU$ \\ \hline
$\gamma$ & \makecell{0.661\\(0.031)} & \makecell{0.738\\(0.034)} & \makecell{0.872\\(0.049)} & \makecell{0.752\\(0.042)} & \makecell{0.676\\(0.033)} \\ \hline
$\lambda$ & \makecell{0.275\\(0.067)} & \makecell{0.002\\(0.000)} & \makecell{0.003\\(0.000)} & \makecell{2.495\\(0.133)} & \makecell{8.598\\(0.426)} \\ \hline
$\kappa$ & \makecell{0.034\\(0.011)} & \makecell{0.026\\(0.010)} & \makecell{0.048\\(0.009)} & \makecell{0.051\\(0.008)} & \makecell{0.027\\(0.010)} \\ \hline
\end{tabular}
\caption{Pooled estimates for $\gamma$, $\lambda$ and $\kappa$ for each model. Block bootstrap standard errors from 10000 repetitions are in parentheses.}
\label{tab:pooled_estimates}
\end{table}

As noted in the main text, multimodality is a very real concern at least for some of our log-likelihoods. We thus rely on a global optimization algorithm which can handle multimodality. To this end, we combine a genetic algorithm with a refinement using a quasi-Newton method that takes the output of the genetic algorithm as a starting point.\footnote{For the genetic algorithm, we rely on the implementation in \textsc{Matlab}'s \texttt{ga} command while the quasi-Newton method is from \textsc{Matlab}'s \texttt{fminunc}.} The constraints $\lambda_i\geq 0$ and $\kappa_i\in [0,1]$ we handle implicitly by optimizing over suitably transformed variables $(\tilde{\lambda}_i,\tilde{\kappa}_i) \in \mathbb{R}^2$ defined via $\lambda_i=\exp(\tilde{\lambda}_i)$ and $\kappa=(1+\exp(-\tilde{\kappa}_i))^{-1}$. For the pooled model, we directly solve the optimization in three variables in this way. For the heteroskedastic model, we have analogous three-variable problems for each individual that can be solved independently and in parallel. Finally, for the homoskedastic model, we apply the combination of genetic algorithm and quasi-Newton method to an outer optimization over $\lambda$ and $\kappa$ while the univariate inner optimizations that pick the $\gamma_i$ for given $\lambda$ and $\kappa$ rely only on the quasi-Newton method for efficiency reasons. 

In order to achieve an efficient estimation of the $\Pi$-based RUM, we precompute the curves $\pi_\gamma(X_j,Y_j)$ for all 40 lotteries on a grid of $\gamma$-values from $-20$ to $20$ in steps of $0.01$. In the optimization, we then interpolate the values on this grid using piecewise cubic Hermite interpolation (pchip) and apply constant extrapolation.\footnote{The justification for this approach is that the function $\pi_\gamma(X_j,Y_j)$ is smooth and increases continuously from its left limit to its right limit, being almost constant for $\gamma$-values in the tails. The limits are given by, respectively, the differences between the best-possible and between the worst-possible outcomes and do not differ much from the values we see at $\pm 20$. Using pchip as the interpolation method has the advantage that we obtain a smooth function which preserves the monotonicity behavior of the input.} 

In the sample, there are 38 participants whose choices are consistent with non-sto\-chas\-tic CRRA utility at some level $\gamma$. In all the heteroskedastic specifications, the maximum likelihood is achieved by setting the error probability to zero, $\lambda_i \rightarrow \infty$ and $\kappa_i=0$, and by choosing \textit{some} $\gamma$-value that lies between the relevant indifference thresholds. In particular, the optimization algorithms for the different models may stop at different values in the relevant range, suggesting disagreement between the  models where there is none. To counteract this impression, we set the $\gamma$-estimates for the 24 individuals whose behavior is consistent with lying between two interior indifference thresholds to the midpoints between those thresholds. For the 14 individuals who lie above the highest or below the lowest threshold, we set their estimates to, respectively, +5 and -5.

\end{document}